\newif\ifecta
\numberwithin{equation}{section}
\theoremstyle{plain}
\theoremstyle{plain}
\newtheorem{ctheorem}{Theorem}
\newtheorem{theorem}{Theorem}[section]
\newtheorem{lemma}[theorem]{Lemma}
\newtheorem{claim}[theorem]{Claim}
\newtheorem{definition}[theorem]{Definition}
\newtheorem{fact}[theorem]{Fact}
\theoremstyle{definition}
\newtheorem{remark}[theorem]{Remark}
\theoremstyle{remark}
\numberwithin{equation}{section}
\newcommand{\namedref}[2]{\hyperref[#2]{#1~\ref*{#2}}}
\newcommand{\sectionref}[1]{\namedref{Section}{#1}}
\newcommand{\appendixref}[1]{\namedref{Appendix}{#1}}
\newcommand{\theoremref}[1]{\namedref{Theorem}{#1}}
\newcommand{\factref}[1]{\namedref{Fact}{#1}}
\newcommand{\lemmaref}[1]{\namedref{Lemma}{#1}}
\newcommand{\claimref}[1]{\namedref{Claim}{#1}}
\newcommand{\equationref}[1]{\namedref{Eq.}{#1}}
\newcommand{\display}[2]{
\smallskip
\begin{center}
\begin{minipage}{#1\textwidth}
#2
\end{minipage}
\end{center}
\smallskip
}
\newcommand{\DEBUG}[1]{}
\newcommand{\defeq}{\stackrel{\mathrm{def}}{=}}
\newcommand{\parhead}[1]{\smallskip \noindent {\bfseries\boldmath\ignorespaces #1.}\hskip 0.9em plus 0.3em minus 0.3em}
\newcommand{\scparagraph}[1]{\smallskip\noindent\textsc{#1.}}
\def\bqed{\hfill$\blacksquare$}
\def\wqed{\hfill$\square$}
\newcommand{\veps}{\varepsilon}
\newcommand{\nat}{\mathbb{N}}
\newcommand{\real}{\mathbb{R}}
\newcommand{\Class}{\mathscr{C}}
\newcommand{\EXP}{\mathbb{E}}
\newcommand{\TV}{\theta}
\newcommand{\BID}{v}
\newcommand{\SPrice}{\Mech_{\mathsf{2P}}}
\newcommand{\MOPT}{\Mech_{\mathsf{opt}}^{(\Err)}}
\newcommand{\EU}{\EXP u}
\newcommand{\Allo}{A}
\newcommand{\Price}{P}
\newcommand{\pureA}{s}
\renewcommand{\sqcup}{,}
\newcommand{\pureB}{t}
\newcommand{\mixedA}{\sigma}
\newcommand{\mixedB}{\tau}
\newcommand{\Mech}{M}
\newcommand{\MechEA}{\OutFunc^{\Allo}}
\newcommand{\MechEP}{\OutFunc^{\Price}}
\newcommand{\Err}{\delta}
\newcommand{\Int}[2][\Err]{#1[#2]}
\newcommand{\SW}{\mathrm{SW}}
\newcommand{\MSW}{\mathrm{MSW}}
\newcommand{\Players}{[n]}
\newcommand{\Outcomes}{\Omega}
\newcommand{\type}{\theta}
\newcommand{\OutFunc}{F}
\newcommand{\Strats}{S}
\newcommand{\Dist}{\Delta}
\newcommand{\Know}{K}
\newcommand{\Utility}{u}
\newcommand{\Cont}{C}
\newcommand{\Bound}{B}
\newcommand{\CC}[1]{#1}
\newcommand{\zWD}[2]{\overset{\mathrm{w}}{\underset{#1,#2}{\succ}}}
\newcommand{\iWD}[2]{\overset{\mathrm{w}}{\underset{#1,#2}{\prec}}}
\newcommand{\zVWD}[2]{\overset{\mathrm{vw}}{\underset{#1,#2}{\succ}}}
\newcommand{\eUWD}{undominated}
\newcommand{\UWD}{\mathsf{UDed}}
\newcommand{\VWDnt}{\mathsf{Dnt}}
\newcommand{\ffext}{\AlloFunc^{(\delta)}}
\newcommand{\zz}{z}
\newcommand{\DD}{D_{\Err}}
\renewcommand{\mid}{\scriptscriptstyle \Delta}
\renewcommand{\stop}{\scriptscriptstyle \top}
\newcommand{\sbot}{\scriptscriptstyle \bot}
\newcommand{\AlloFunc}{f}
\newcommand{\eWPF}{allocation function}
\newcommand{\defem}[1]{\textbf{\emph{#1}}}
\newcommand{\SPriceF}{\OutFunc_{\mathsf{2P}}}
\newcommand{\SPriceS}{\Strats_{\mathsf{2P}}}
\begin{document}

\ifecta

\begin{frontmatter}

\title{Knightian Auctions\protect\thanksref{T1}}
\runtitle{Knightian Auctions}
\thankstext{T1}{Supported in part by ONR Grant No. NOOO14-09-0597.}

\begin{aug}
\author{\fnms{Alessandro} \snm{Chiesa}\thanksref{a}\ead[label=e1]{alexch@csail.mit.edu}\ead[label=u1]{http://people.csail.mit.edu/alexch/}}
\author{\fnms{Silvio} \snm{Micali}\thanksref{b}\ead[label=e2]{silvio@csail.mit.edu}\ead[label=u2]{http://people.csail.mit.edu/silvio/},}
\author{\fnms{Zeyuan Allen} \snm{Zhu}\thanksref{c}\ead[label=e3]{zeyuan@csail.mit.edu}\ead[label=u3]{http://zeyuan.0sites.org/}}

\runauthor{A. Chiesa, S. Micali and Z. A. Zhu}

\address[a]{
MIT CSAIL;
\printead{e1,u1}
}

\address[b]{
MIT CSAIL;
\printead{e2,u2}
}

\address[c]{
MIT CSAIL;
\printead{e3,u3}
}


\end{aug}

\begin{abstract}
We study single-good auctions in a setting where each player knows his own valuation only within a constant multiplicative factor $\Err \in (0,1)$, and the mechanism designer knows $\Err$.
The classical notions of implementation in dominant strategies and implementation in undominated strategies are naturally extended to this setting, but their power is vastly different.

On the negative side, we prove that no dominant-strategy mechanism can guarantee social welfare that is significantly better than that achievable by assigning the good to a random player.

On the positive side, we provide tight upper and lower bounds for the fraction of the maximum social welfare achievable in undominated strategies, whether deterministically or probabilistically.
\end{abstract}

\begin{keyword}
\kwd{Knightian players}
\kwd{mechanism design}
\kwd{auctions}
\end{keyword}

\end{frontmatter}

\else

\title{Knightian Auctions}
\date{\today}
\author{
Alessandro Chiesa \\ \href{mailto:alexch@csail.mit.edu}{alexch@csail.mit.edu} \\ MIT CSAIL
\and
Silvio Micali \\ \href{mailto:silvio@csail.mit.edu}{silvio@csail.mit.edu} \\ MIT CSAIL
\and
Zeyuan Allen Zhu \\ \href{mailto:zeyuan@csail.mit.edu}{zeyuan@csail.mit.edu} \\ MIT CSAIL
}


\maketitle

\begin{abstract}
We study single-good auctions in a setting where each player knows his own valuation only within a constant multiplicative factor $\Err \in (0,1)$, and the mechanism designer knows $\Err$.
The classical notions of implementation in dominant strategies and implementation in undominated strategies are naturally extended to this setting, but their power is vastly different.

On the negative side, we prove that no dominant-strategy mechanism can guarantee social welfare that is significantly better than that achievable by assigning the good to a random player.

On the positive side, we provide tight upper and lower bounds for the fraction of the maximum social welfare achievable in undominated strategies, whether deterministically or probabilistically.

\vfill
\begin{center}
A non-archival draft of the introduction of this paper is presented at \\ the 3rd Innovations in Theoretical Computer Science (ITCS~2012) conference, \\ with the title \emph{Mechanism Design with Approximate Valuations}.
\end{center}
\end{abstract}

\clearpage
\tableofcontents
\clearpage

\fi

\section{Introduction}
\label{sec:introduction}

The goal of this paper is to design mechanisms guaranteeing high social welfare
in auctions of a single good whose players are {\em Knightian}.

\subsection{Knightian Players}

In a traditional single-good auction, each player $i$ is assumed to know his true valuation for the good, $\theta_i$, {\em exactly.}
The assumption, however, may be quite strong. For instance,  can $i$ be really sure that his true valuation is exactly \$17,975 rather than ---say--- \$18,001? If not, then {\em how can his uncertainty be modeled?}

A classical answer is to assume that $i$ knows the single probability distribution $D_i$ from which his true valuation is drawn.
More generally, \cite{Knight21}, and later on \cite{Bewley02}, suggested  to  assume that $i$ knows only a \emph{set} of distributions, one of which is guaranteed to be $D_{i}$.

In an auction, however, from a strategic perspective a ``Knightian player'' could collapse each candidate distribution in his set to its expected value. Accordingly, without loss of generality, in a \emph{Knightian auction} each player $i$ only knows a set of integers, $\Know_{i}$, guaranteed to contain his true valuation $\type_{i}$. Therefore, $\Know_{i}$ is  ``the set of all possible candidates for $\type_{i}$ in $i$'s mind", and will be referred to as $i$'s {\em candidate-valuation set}.

\subsection{Our Knightian Focus}

Knightian players have received much attention in {\em decision theory} or  in mechanisms with a {\em single}  player. %
We are instead interested in studying the competition of {\em multiple} Knightian players in \emph{full-fledged} mechanisms.
Transforming rich (i.e., exact or Bayesian) knowledge into optimal mechanisms is important. But equally important is to understand  whether there are good mechanisms when the players only have set-theoretic knowledge about themselves. 

Specifically, we focus on Knightian auctions of  a single good, adopting
for simplicity sake a finite perspective. Namely,
\begin{itemize}

\item
all valuations will be integers between $0$ and a \emph{valuation bound} $\Bound$, and

\item
all mechanisms specify finitely many pure strategies for each player.

\end{itemize}
(Our results can however be extended to infinite settings as well.)

\subsection{Knightian Mechanism Design}
\label{sec:knowledge-accuracy}

Intuitively, a mechanism cannot perform well in a Knightian setting where the candidate-valuation sets $\Know_i$ are  too ``spread out'', but it might perform well when they are sufficiently ``clustered''. Accordingly, we believe that performance should be measured as a function of the ``inaccuracy" of the players' knowledge.

\parhead{Measuring inaccuracy}
For  a candidate-valuation set $\Know_{i}$ of a player $i$, we set
\begin{equation*}
\hfill
\Err_{i} \defeq \frac{\max \Know_i-\min \Know_i}{\max \Know_i + \min \Know_i} \enspace.
\hfill
\end{equation*}
Then, it is immediately seen that $\Err_i \in [0,1]$ and that, because $\theta_i\in \Know_i$, ``player $i$ knows $\type_{i}$  within a multiplicative factor of $\Err_{i}$".

We  refer to $\Err_i$ as {\em $i$'s  individual inaccuracy} (about his internal knowledge).

We define the {\em (global) inaccuracy} of a Knightian setting to be $\Err \defeq \max_{i} \Err_{i}$.

A setting is \emph{Knightian} if $\Err > 0$ and \emph{traditional} if $\Err=0$.

\noindent
(If $c \in \real$ and $\alpha \in [0,1]$, then  we may call $[c-\alpha c ,c +\alpha c]$    a \emph{$\alpha$-interval} with  \emph{center} $c$. Any set contained in a $\alpha$-interval will be called  \emph{$\alpha$-approximate}.)

\parhead{Designer knowledge}
To study  auction mechanisms in a Knightian setting we must specify what information is available to the designer.

In a Bayesian setting, where the true valuation profile is drawn from a common prior distribution $D$,
it is traditionally assumed that $D$ itself is known to the designer (and the players). In a Knightian setting, each player $i$ knows a set of distributions $\{D_{i,1},\dots,D_{i,k}\}$, which, as already observed, is equivalent to knowing the set  of their expected values, $\Know_{i}$. So: how much of this information should the designer be allowed to know?

An extreme assumption is that  he knows every $\Know_{i}$. A much weaker assumption is that he knows all individual inaccuracies, but not any candidate-valuation set $\Know_i$. A yet weaker assumption  is that he  knows only the maximum individual inaccuracy. This is the assumption we choose to work under: namely, when designing an auction mechanism for  Knightian setting,

\begin{center}
\emph{the designer  only knows the global inaccuracy parameter $\Err$}.
\end{center}

\parhead{Performance}
Being in a tough set-theoretic setting, we adopt a worst-case analysis for evaluating the social welfare performance of an auction mechanism $\Mech$.  Figurately speaking, we envisage the following process. First, an auction mechanism $M$ is announced for selling a given good to $n$ Knightian players, where each player $i$ privately knows his own candidate-valuation set $\Know_i$. Then  ---aware of $\Know \defeq \Know_1 \times \cdots \times \Know_n$ and $\Mech$, and intending to fool $\Mech$--- the devil secretly chooses a true-valuation profile $\type \in \Know$. After that, each player $i$ chooses a strategy $\mixedA_{i}$. (A player $i$ may learn $\type_i$ only after the auction is over, or never, but during the auction he acts based only on $\Know_i$.) Finally, $\Mech$ is played with strategy profile $\mixedA = (\mixedA_{1},\dots,\mixedA_{n})$ so as to produce a winner $w=w(\Mech,\mixedA)$. Note that $w$ is in general a random variable, since every $\sigma_i$ may be mixed and $M$ may be probabilistic.

The \emph{maximum social welfare}  is of course $\max_i \type_i$; and the \emph{realized social welfare} of $\Mech$ on  $\mixedA$ is $\EXP[\type_{w(\Mech,\mixedA)}]$. Informally speaking, the \emph{social welfare performance} of $\Mech$ relative to $\Know$ and $\mixedA$ is taken to be $\min_{ \type \in \Know} \frac{\EXP[\type_{w(\Mech,\mixedA)}]}{\max_i \type_i }$. (Formally, of course, we must specify the solution concept ``behind $\sigma$".)


\parhead{Objectives} In designing  mechanisms in a Knightian setting we  study and  try to maximize their performance as a function of the global inaccuracy $\Err$. In essence, $\Err$ is our chosen Trojan horse for bringing meaningful mechanism design in the Knightian setting. Without paying attention to the quality of the players' knowledge about themselves, one might design elementary mechanisms, but not ``good'' ones.

\parhead{Q\&A}

\begin{itemize}\itemsep -1pt

\item
\emph{Multiplicative or additive accuracy?}
A greater level of generality is achieved by considering two distinct global inaccuracy parameters: a multiplicative one, $\Err^*$, and an additive one, $\Err^+$, leading to the following modified constraint: for all player $i$ there exists $x_{i} \in \real$ such that
\begin{equation*}
\Know_{i} \subseteq \big[(1-\Err^*)x_{i}-\Err^+,(1+\Err^*)x_{i}+\Err^+\big]\cap \{0,\dots,\Bound\} \text{.}
\end{equation*}
All of our theorems hold for this more general condition. For simplicity, however, we consider only one kind of global inaccuracy parameter, and we find the multiplicative one more meaningful.


\item
\emph{Can real $\Err$'s be really large?} Absolutely.
The players' candidate-valuation sets  may indeed be ``very approximate". Consider a firm participating to an auction for an exclusive license to manufacture solar panels in the US for a period of 25 years. Even if the demand were precisely known in advance, and the only uncertainty came from the firm's ability to lower its costs of production via some breakthrough research, a firm's individual inaccuracy about its own true valuation for the license could easily exceed $0.5$.

\end{itemize}

\subsection{Solution Concepts}

The analysis of every mechanism requires an underlying solution concept.  As Knightian settings are settings of \emph{incomplete information} (i.e., settings whose players do not know exactly the true valuations of their opponents), two solution concepts naturally apply: implementation in  dominant strategies and implementation in undominated strategies.
Of course, both solution concepts need to be properly extended to our setting, but this is naturally done (and in fact done in a way consistent with
all prior works).

In essence,  a pure strategy $s_i$ of a player $i$ is (very weakly) {\em Knightian-dominant} if it provides $i$ with a utility at least as large as that of any other strategy $t_i$ of $i$, no matter what strategies his opponents may choose, and no matter what candidate in $\Know_i$ may be $i$'s true valuation. A pure strategy $s_i$ of $i$ is {\em Knightian-undominated} if $i$ does not have any other strategy $t_i$  that

(1)  gives $i$ utility at least as great as  $s_i$ no matter what strategy subprofiles  his opponents may use, and no matter what member of $\Know_i$ may be $i$'s true valuation, and

(2)  gives $i$ utility strictly greater than $s_i$ for at least some strategy subprofile of his opponents and some member of $\Know_i$.

(The set of such undominated strategies under a mechanism $\Mech$ is denoted by $\UWD_{i}^{\Mech}(\Know_i)$, or simply by $\UWD_{i}(\Know_i)$ when $\Mech$ is clear from context.)

\subsection{Informal Discussion of Our Results}
\label{sec:results}

How much social welfare can we guarantee in auctions? In traditional ones the answer is trivial: $100\%$ in (very-weakly) dominant strategies, via the second-price mechanism. Things are quite different in Knightian auctions.

\subsubsection{Dominant-Strategy Mechanisms Are Meaningful but Inadequate}

Although a Knightian player ``does not have a best valuation to bid", very-weakly-dominant strategies continue to be meaningfully defined in a Knightian auction.
In a traditional auction the revelation principle (see \cite{Mye81}) guarantees that, as far as very-weakly-dominant strategies are concerned, it suffices to consider mechanisms that restrict a player's strategies to (reporting) single valuations. It is easy to see, however, that a  natural exentension of the revelation principle continues to apply in Knightian auctions. Specifically, if a very-weakly-dominant strategy mechanism $\Mech$ with a given social welfare performance guarantee exists, then there also exists a \emph{Knightian-direct} mechanism $\Mech'$, with the same performance, where, for every player $i$, (1) his pure strategy set consists of reporting \emph{sets} of valuations, and (2) truthfully reporting his own candidate-valuation set $\Know_{i}$ is very-weakly dominant.

In principle, therefore, there may be a dominant-strategy mechanism that  obtains all true candidate-valuation sets, $\Know_1,\ldots,\Know_n$, and guarantees a high social welfare performance. Of course, given the inaccuracy of the players' knowledge of their own true valuations,  one should expect some degradation of performance relative to the exact-valuation setting. However, one might conjecture that, in a Knightian auction with global inaccuracy $\Err$, a dominant-strategy mechanism might be able to guarantee some $\Err$-dependent fraction ---such as $(1-\Err)$, $(1-3\Err)$, or $(1-\Err)^2$--- of the maximum social welfare. We prove, however, that even such  modest hopes are overly optimistic.

\begin{ctheorem}[informal]
\label{thm:intro-VWDnt-negative}
For all $n \geq 1$, $\Err \in (0,1)$, and $\Bound > \frac{3-\Err}{2\Err}$, no (possibly probabilistic) very-weakly-dominant-strategy-truthful mechanism $\Mech_{n,\Err,\Bound}$ can guarantee a fraction of the maximum social welfare greater than
\begin{equation*}
\hfill
\frac{1}{n} + \frac{\lfloor \frac{3-\Err}{2\Err} \rfloor +1}{\Bound}
\hfill
\end{equation*}
in any Knightian auction with $n$ players, valuation bound $\Bound$, and inaccuracy parameter $\Err$.
\end{ctheorem}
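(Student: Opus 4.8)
The plan is to pass to a truthful direct mechanism, show that \emph{dominant}-strategy truthfulness forces its allocation rule to be \emph{flat} (independent of a player's own report) over a long range of valuations, and then exhibit a single candidate-valuation profile on which this flatness pins the winner to be a low-value player with probability essentially $1/n$. Concretely, by the Knightian revelation principle noted above it suffices to rule out \emph{Knightian-direct} mechanisms $\Mech$ in which each player reports a subset of $\{0,\dots,\Bound\}$ and truthfully reporting one's own candidate-valuation set is very-weakly dominant; since everything below is phrased with expectations, this also disposes of probabilistic mechanisms. Write $a_i(\vec r)$ for the probability that $i$ wins and $p_i(\vec r)$ for $i$'s expected payment under reported profile $\vec r$, so that $i$'s interim utility at true value $\theta$ is $\theta\,a_i(\vec r)-p_i(\vec r)$.

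The heart of the argument is a flatness lemma. Fix a player $i$ and let the others report fixed singletons $\vec z_{-i}$. Restricting everyone to singleton reports turns $\Mech$ into an ordinary very-weakly-dominant-strategy mechanism over $\{0,\dots,\Bound\}$, so $g(\theta):=a_i(\{\theta\},\vec z_{-i})$ is non-decreasing and the truthful interim utility $U(\theta):=\theta g(\theta)-p_i(\{\theta\},\vec z_{-i})$ is convex with $U(\theta{+}1)-U(\theta)\in[g(\theta),g(\theta{+}1)]$. Now let $K$ be any $\Err$-approximate block of consecutive integers in $\{0,\dots,\Bound\}$. For each $\theta\in K$, combining the truthfulness constraint of a player whose candidate set is $K$ (report $K$ is at least as good as report $\{\theta\}$) with that of a player whose candidate set is $\{\theta\}$ (report $\{\theta\}$ is at least as good as report $K$) squeezes out the equality $\theta\,a_i(K,\vec z_{-i})-p_i(K,\vec z_{-i})=U(\theta)$. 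Thus $U$ coincides on $K$ with an affine map of slope $a_i(K,\vec z_{-i})$, and the sandwich property together with monotonicity of $g$ then forces $g$ to be constant on the interior of $K$ (equal to $a_i(K,\vec z_{-i})$). Two overlapping such blocks must carry the same affine slope on their common points, so sliding a four-term block $\{c,c{+}1,c{+}2,c{+}3\}$ from its smallest admissible position up to $\{\Bound{-}3,\dots,\Bound\}$ propagates a single slope $s$: since $\{c,\dots,c{+}3\}$ is $\Err$-approximate iff $\tfrac{3}{2c+3}\le\Err$, i.e. $c\ge\tfrac{3-3\Err}{2\Err}$, one gets (writing $t=\lfloor\tfrac{3-\Err}{2\Err}\rfloor$) that $g=a_i(\{\cdot\},\vec z_{-i})$ is constant $=s$ on $\{t{+}1,\dots,\Bound{-}1\}$ and $U$ is affine with slope $s$ on $\{t,\dots,\Bound\}$; in particular, the same squeeze applied to the report $\{\Bound{-}1,\Bound\}$ (which is $\Err$-approximate whenever $\Bound>\tfrac{3-\Err}{2\Err}$) yields $a_i(\{\Bound{-}1,\Bound\},\vec z_{-i})=s$ as well.

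Now build the adversarial instance. Take first the profile in which every player reports the singleton $\{t{+}1\}$; this produces winner probabilities $q_1,\dots,q_n$ with $\sum_j q_j\le 1$, so some player $k$ has $q_k\le 1/n$. Consider then the candidate profile in which player $k$ has candidate set $\{\Bound{-}1,\Bound\}$ and every other player has $\{t{+}1\}$, and let the adversary choose $\theta_k=\Bound$ and $\theta_j=t{+}1$ for $j\ne k$. Applying the flatness lemma to player $k$ with $\vec z_{-k}=(t{+}1,\dots,t{+}1)$: since $t{+}1$ lies in the flat range, $q_k=a_k(\{t{+}1\},\vec z_{-k})=s$, and since $\Bound{-}1$ lies in the affine range, $a_k(\{\Bound{-}1,\Bound\},\vec z_{-k})=s=q_k$. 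Hence under this profile player $k$ wins with probability $q_k\le 1/n$ and contributes value $\Bound$, while the other players win with total probability $\le 1-q_k$ and each contributes value $t{+}1$; the realized social welfare is therefore at most $q_k\Bound+(1-q_k)(t{+}1)$ against a maximum of $\Bound$, i.e. a fraction at most $q_k+(1-q_k)\tfrac{t+1}{\Bound}\le\tfrac1n+\tfrac{t+1}{\Bound}$. Since this is a legitimate Knightian auction with $n$ players, valuation bound $\Bound$, and inaccuracy at most $\Err$ (singletons have inaccuracy $0$, and $\{\Bound{-}1,\Bound\}$ has inaccuracy $\le\Err$ under the hypothesis), $\Mech$'s performance on it is at most $\tfrac1n+\tfrac{t+1}{\Bound}$, proving the theorem. (For the boundary values of $\Bound$ for which no four-term block fits below $\Bound$ the claimed fraction is already $\ge 1$, so there is nothing to prove; and the rounding case $\tfrac{3-3\Err}{2\Err}\in\mathbb{Z}$ only makes the bound smaller, replacing $t{+}1$ by $t$.)

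I expect the flatness lemma to be the main obstacle: squeezing out the pointwise equality $\theta\,a_i(K,\vec z_{-i})-p_i(K,\vec z_{-i})=U(\theta)$ from dominance (this is precisely the step with no analogue for \emph{undominated} strategies, which is why that regime behaves so differently), carrying flatness across the chain of overlapping $\Err$-approximate blocks all the way up to $\Bound$, and nailing the threshold to $\lfloor\tfrac{3-\Err}{2\Err}\rfloor$. A secondary subtlety is the very top of the range: $g$ may genuinely jump at $\Bound$ while $U$ does not, which is why player $k$ must report $\{\Bound{-}1,\Bound\}$ rather than the singleton $\{\Bound\}$.
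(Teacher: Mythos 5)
Your overall strategy is sound and lands on essentially the same adversarial instance as the paper, but it gets there by a genuinely different route. The paper never passes through singleton reports or Myerson-style payoff equivalence: its key claim is that, for any opponents' reports and any integer $x>\frac{3-\Err}{2\Err}$, player $i$'s winning probability under the report $\Int{x}$ equals that under $\Int{x+1}$. This is squeezed directly out of the two very-weak-dominance inequalities (truthful $\Int{x}$ versus deviation $\Int{x+1}$, and vice versa), instantiated twice: once at the true values $x$ and $x+1$, and once at the ``crossed'' true values $\lfloor x(1+\Err)\rfloor$ and $\lceil (x+1)(1-\Err)\rceil$, whose reversal of order (guaranteed exactly by $x>\frac{3-\Err}{2\Err}$) flips the sign of the resulting inequality. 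Your route instead proves affineness of the singleton interim utility $U$ on every $\Err$-approximate block of consecutive integers and propagates a common slope through overlapping four-term blocks. That is more machinery, but your two-sided squeeze $\theta\,a_i(K,\vec z_{-i})-p_i(K,\vec z_{-i})=U(\theta)$ is correct and does isolate cleanly the step that has no analogue for undominated strategies; both arguments are legitimate.

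The one concrete flaw is your boundary case. When no four-term block fits below $\Bound$, you assert that the claimed fraction is already at least $1$, and this is false. Writing $t=\lfloor\frac{3-\Err}{2\Err}\rfloor$, the hypothesis $\Bound>\frac{3-\Err}{2\Err}$ permits $\Bound\in\{t+1,t+2\}$, where your blocks do not exist; there the claimed fraction is $\frac1n+\frac{t+1}{\Bound}$, which for $\Bound=t+2$ is at least $1$ only when $n\le t+2$. For example, $\Err=0.9$, $\Bound=3$ (so $t=1$), $n=4$ gives $\frac14+\frac23<1$, yet your machinery is unavailable and your dismissal does not apply. The fix is exactly what the paper's choice of blocks buys: work with the full intervals $\Int{x}$ (or, in your language, with blocks whose width adapts to $\Err$ rather than being fixed at four), since $\Int{x}$ and $\Int{x+1}$ share at least two integers whenever $x>\frac{3-\Err}{2\Err}$, so the slope-propagation (or the paper's crossing argument) covers every $\Bound$ allowed by the hypothesis. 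With that repair your proof is complete.
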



As a \emph{relative} measure of the quality of the players' self knowledge, $\Err$ should be
independent of the magnitude of the players' valuations. But to ensure an upper bound on the players' valuations, $\Bound$ should be  large. Accordingly, the above result essentially implies that any very-weakly-dominant-strategy mechanism can only guarantee a fraction $\approx \frac{1}{n}$ of the maximum social welfare. However, such a fraction can be trivially achieved by the ``naive'' very-weakly-dominant-strategy mechanism that, dispensing with all bids, assigns the good to a random player! Thus, \theoremref{thm:intro-VWDnt-negative} essentially says that no dominant-strategy mechanism can be smart: ``the optimal one can only be as good as good as the stupid naive one". In other words,

\display{0.9}{
\emph{dominant strategies are \defem{intrinsically} linked to each player having \defem{exact} knowledge of
either (1) his own valuation, or (2) the \defem{unique} possible distribution from which his own valuation has been drawn.}
}

\noindent
By showing the limitations of dominant strategies in Knightian auctions, \theoremref{thm:intro-VWDnt-negative} opens the door to alternative solution concepts: in particular, to implementation in undominated strategies. We actually believe that the Knightian setting will provide a new and vital role for this natural and non-Bayesian implementation notion.


\subsubsection{The Power of Deterministic Undominated-Strategy Mechanisms}
\label{sec:results:undominated}

We tightly characterize the power of implementation in undominated strategies via deterministic mechanisms in Knightian auctions. First of all, without much difficulty, we show that the second-price mechanism (although no longer dominant-strategy) guarantees a relatively good fraction of the maximum social welfare in undominated strategies, despite the fact that it does not leverage any information about $\Err$. Second,  more importantly and perhaps more surprisingly, we prove that no deterministic undominated-strategy mechanism can do better, even with full knowledge of $\Err$.

\parhead{The (good) performance of the second-price mechanism}

\begin{ctheorem}[informal]
\label{thm:intro-VCG-in-UWD-approx-1}
In any Knightian auction with $n$ players, valuation bound $\Bound$, and inaccuracy parameter $\Err$, the second-price mechanism guarantees a fraction of the maximum social welfare that is
\end{ctheorem}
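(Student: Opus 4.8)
The plan is to reduce the guarantee to an exact description of which bids are (Knightian-)undominated in the second-price mechanism, and then to a short comparison of valuations exploiting the $\Err$-approximate structure of the candidate-valuation sets.

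\emph{Step 1: the undominated bids are exactly those inside $[\min\Know_i,\max\Know_i]$.} I would show that, in the second-price mechanism, a pure strategy lies in $\UWD_i(\Know_i)$ if and only if it bids some $b_i$ with $\min\Know_i\le b_i\le\max\Know_i$. The ``only if'' half is the one needed for the lower bound: if $b_i<\min\Know_i$, the strategy bidding $\min\Know_i$ weakly dominates it — the two induce the same outcome and price whenever the $b_i$-bid wins, and whenever the top opposing bid lies strictly between $b_i$ and $\min\Know_i$ the $\min\Know_i$-bid wins at a price below $\min\Know_i\le\type_i$, hence at strictly positive utility \emph{for every} $\type_i\in\Know_i$; symmetrically $\max\Know_i$ weakly dominates any $b_i>\max\Know_i$. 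The ``if'' half (needed for tightness) follows by exhibiting, for any $b_i\in[\min\Know_i,\max\Know_i]$ and any competing bid $b_i'$, one opposing profile and one $\type_i\in\Know_i$ under which $b_i$ does strictly better than $b_i'$: take $\type_i=\min\Know_i$ and a top opposing bid between $b_i$ and $b_i'$ when $b_i'>b_i$, and $\type_i=\max\Know_i$ with a top opposing bid between $b_i'$ and $b_i$ when $b_i'<b_i$.

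\emph{Step 2: welfare accounting.} Fix $\Know=\Know_1\times\cdots\times\Know_n$ of inaccuracy $\Err$, any $\type\in\Know$, and any undominated pure profile; let $j\in\argmax_i\type_i$ and let $w$ be the winner, i.e.\ the highest bidder. Since $\Know_j$ is $\Err$-approximate, $\min\Know_j\ge\frac{1-\Err}{1+\Err}\max\Know_j\ge\frac{1-\Err}{1+\Err}\type_j$; since $\Know_w$ is $\Err$-approximate, $\max\Know_w\le\frac{1+\Err}{1-\Err}\min\Know_w\le\frac{1+\Err}{1-\Err}\type_w$. By Step 1, player $j$'s bid $b_j$ is $\ge\min\Know_j$ and player $w$'s bid $b_w$ is $\le\max\Know_w$, while $b_w\ge b_j$ because $w$ is the highest bidder; chaining,
\[
\frac{1-\Err}{1+\Err}\,\type_j \;\le\; \min\Know_j \;\le\; b_j \;\le\; b_w \;\le\; \max\Know_w \;\le\; \frac{1+\Err}{1-\Err}\,\type_w ,
\]
so $\type_w\ge\big(\frac{1-\Err}{1+\Err}\big)^2\type_j$. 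As the mechanism is deterministic and the profile is pure, the realized social welfare is exactly $\type_w\ge\big(\frac{1-\Err}{1+\Err}\big)^2\max_i\type_i$, which is the claimed fraction.

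\emph{Main obstacle.} The substance is in Step 1, and the real work there is to make the ``weakly-better-everywhere-over-$\Know_i$, strictly-better-somewhere'' quantification airtight against the finite (integer) strategy space and the mechanism's tie-breaking rule: the witnessing opposing bid described above lies in an open interval that may contain no admissible bid, so one must instead place it at an endpoint and argue through the tie-break, invoking $\Err_i>0$ (so that $\Know_i$ is not a singleton) and disposing of the degenerate $\Err_i=0$ players by the classical argument. This discreteness is presumably also why the statement may carry an additive $O(1/\Bound)$ slack, exactly as in \theoremref{thm:intro-VWDnt-negative}. Step 2 is then immediate, and it is tight via a two-player instance in which the true top bidder bids $\min\Know_j$ while a second player with $\type_w=\min\Know_w$ bids $\max\Know_w=\min\Know_j$ and wins — precisely the instance that the following theorem upgrades into an impossibility for \emph{all} deterministic undominated-strategy mechanisms.
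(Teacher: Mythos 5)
Your proposal is correct and follows essentially the same route as the paper: first confine each player's undominated bids to (roughly) $\{\min \Know_{i},\dots,\max \Know_{i}\}$, then chain $\type_{w} \geq \frac{1-\Err}{1+\Err}\, b_{w} \geq \frac{1-\Err}{1+\Err}\, b_{j} \geq \big(\frac{1-\Err}{1+\Err}\big)^{2} \type_{j}$ through the winner and the highest-valuation player — your Step~2 is the paper's argument almost verbatim. The only substantive difference is in Step~1: the paper never claims your exact equality, but only the inclusion $\UWD_{i}(\Know_{i}) \subseteq \{\min \Know_{i}-1,\dots,\max \Know_{i}+1\}$, obtained as an instance of the Distinguishable Monotonicity Lemma (the second-price mechanism with deterministic tie-breaking is only $2$-DM, for exactly the integer-grid/tie-breaking reason you flag as the main obstacle — indeed, for a tie-favored player the bid $\min\Know_{i}-1$ can be payoff-equivalent to $\min\Know_{i}$ and hence undominated), and this $\pm 1$ slack is precisely what produces the additive $-2\,\frac{1-\Err}{1+\Err}$ term in the paper's formal Theorem~2.
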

\begin{center}
$\approx \left(\frac{1-\Err}{1+\Err}\right)^{2}
\enspace.$%
\footnote{We note that when breaking ties at random, the performance  of the second-price mechanism is only marginally better: namely, it guarantees a fraction of the maximum social welfare {\em exactly equal} to
 $(\frac{1-\Err}{1+\Err})^{2}$.
}
\end{center}

The course intuition behind \theoremref{thm:intro-VCG-in-UWD-approx-1} is clear:

\medskip

\begin{minipage}[h]{0.9\columnwidth}
\vspace{2mm}
\emph{``It is obvious that each player $i$ should only consider bidding a value $v_{i}$ inside his own candidate-valuation set $ \Know_{i}$. It is further obvious that the worst possible gap between the maximum and the actual social welfare is achieved in the following case. Let $w$ be the winner in the second-price mechanism, and let $h$, $h\neq w$, be the player with the largest \textbf{candidate} valuation. Player $w$ bids $v_w = \max \Know_w$, and player $h$ bids $v_h = \min \Know_h$ (and $v_w$ only slightly exceeds $v_h$). In this case it is obvious that the second-price mechanism guarantees at most a fraction $\approx \big(\frac{1-\Err}{1+\Err}\big)^2$ of the maximum social welfare.\qed"}
\vspace{2mm}
\end{minipage}

\medskip

\noindent
Of course, things are a bit more complex. In particular, the fact that a player $i$ should only consider bids in $\Know_{i}$ (actually more precisely between $\min \Know_{i} - 1$ and $\max \Know_{i} +1 $) requires a proof.

\parhead{The optimality of the second-price mechanism}

\begin{ctheorem}[informal]
\label{thm:intro-VCG-in-UWD-approx-2}
For all $n \geq 2$, $\Err \in (0,1)$, and $\Bound \geq \frac{5}{\Err}$, no deterministic undominated-strategy mechanism $\Mech_{n,\Bound, \Err}$ can guarantee a fraction of the maximum social welfare greater than
\begin{equation*}
\hfill\left(\frac{1-\Err}{1+\Err}\right)^{2} + \frac{4}{\Bound} \hfill
\end{equation*}
in any Knightian auction with $n$ players, valuation bound $\Bound$, and inaccuracy parameter $\Err$.
\end{ctheorem}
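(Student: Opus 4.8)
The plan is a proof by contradiction. Suppose some deterministic mechanism $\Mech$ for $n$ players guaranteed a fraction strictly greater than $\big(\frac{1-\Err}{1+\Err}\big)^{2}+\frac{4}{\Bound}$; I will build two Knightian settings together with one pure strategy profile that is undominated in both, on which $\Mech$ is forced to name player~$1$ the winner in one setting and player~$2$ the winner in the other — impossible for a deterministic mechanism, whose winner depends only on the played profile, not on the players' candidate sets. First I would reduce to $n=2$: hand players $3,\dots,n$ the candidate-valuation set $\{1\}$ (individual inaccuracy $0$, so the global inaccuracy is unchanged) and freeze each of them at some pure undominated strategy. The outcome then depends only on players $1$ and $2$, and should a dummy ever win under an undominated profile the realized welfare is at most $1$, far below $\big(\frac{1-\Err}{1+\Err}\big)^{2}\Bound$, contradicting the assumed guarantee; hence under the contradiction hypothesis the winner is always one of players $1$ and $2$.

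Write $\lambda\defeq\frac{1-\Err}{1+\Err}\in(0,1)$ and $T\defeq\lceil\lambda\Bound\rceil+2$, and take the two $\Err$-approximate integer sets $H\defeq\{\lceil\lambda\Bound\rceil,\dots,\Bound\}$ (a maximal-height candidate interval) and $L\defeq\{\lceil\lambda T\rceil,\dots,T\}$. Using $\Bound\ge 5/\Err$ one checks that $T\le\Bound$, that $H$ and $L$ are each $\Err$-approximate, and that $H\cap L=\{\lceil\lambda\Bound\rceil,\lceil\lambda\Bound\rceil+1,\lceil\lambda\Bound\rceil+2\}$, three consecutive integers. Now consider two settings with $n$ players, valuation bound $\Bound$, and inaccuracy $\Err$: in setting~A, player~$1$ has candidate set $H$ and player~$2$ has candidate set $L$; setting~B is setting~A with these two sets interchanged (the dummies keep the set $\{1\}$ in both).

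The first step is that each setting pins down the winner. In setting~A, if $\Mech$ ever named player~$2$ a winner under an undominated profile, the devil would set $\type_2=\min L=\lceil\lambda T\rceil$ and $\type_1=\Bound$; since $\lceil\lambda T\rceil<\lambda T+1<\lambda(\lambda\Bound+3)+1<\lambda^{2}\Bound+4$, the realized fraction would be below $\lambda^{2}+\frac4\Bound$, contradicting the guarantee. Together with the dummy remark, $\Mech$ must declare player~$1$ the winner under \emph{every} undominated profile of setting~A; symmetrically, player~$2$ wins under every undominated profile of setting~B. The second step produces one undominated profile common to both settings: because $H$ and $L$ overlap in three consecutive integers, the structural characterization of undominated strategies in Knightian auctions — the same one underlying \theoremref{thm:intro-VCG-in-UWD-approx-1}, to the effect that a player's undominated strategies are exactly those equivalent to bidding in $[\min\Know_i-1,\max\Know_i+1]$ and that every bid inside $\Know_i$ is undominated — yields pure strategies $\sigma_1\in\UWD_1(H)\cap\UWD_1(L)$ and $\sigma_2\in\UWD_2(H)\cap\UWD_2(L)$; with the frozen dummy strategies this gives a profile $\sigma$ undominated in both settings. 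By the first step, $w(\Mech,\sigma)=1$ from setting~A and $w(\Mech,\sigma)=2$ from setting~B, which is the desired contradiction.

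The crux — the step I expect to need the most care — is the second one: in an arbitrary deterministic mechanism there is no literal ``bid'', so one cannot simply point to $H\cap L$. What is needed is that $\UWD_i(\Know_i)$ is monotone enough in $\Know_i$ that two overlapping candidate intervals always share an undominated strategy; establishing this in full generality — rather than for the transparent second-price case — is precisely the content of the undominated-strategy characterization flagged after \theoremref{thm:intro-VCG-in-UWD-approx-1}, and the three-integer overlap above is chosen to absorb the $\pm1$ slack in that characterization. Everything else is elementary estimation with floors, for which $\Bound\ge 5/\Err$ leaves comfortable room.
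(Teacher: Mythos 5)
There is a genuine gap at the step you yourself flag as the crux. Your argument needs a \emph{pure} strategy $\sigma_1 \in \UWD_1(H)\cap\UWD_1(L)$, and you justify it by appeal to a characterization of undominated strategies as ``bids in $[\min \Know_i - 1, \max \Know_i + 1]$, with every bid inside $\Know_i$ undominated.'' That characterization is only available for the restricted class of direct, distinguishably-monotonic mechanisms used in the \emph{positive} results; it says nothing about an arbitrary mechanism, whose strategy sets need not be valuations at all. Worse, the claim is false even in the transparent case: for the second-price mechanism the paper's Distinguishable Monotonicity Lemma gives $\UWD_i(\Know_i)=\{\min\Know_i,\max\Know_i\}$, so with your $H=\{\lceil\lambda\Bound\rceil,\dots,\Bound\}$ and $L=\{\lceil\lambda T\rceil,\dots,\lceil\lambda\Bound\rceil+2\}$ the two undominated sets are $\{\lceil\lambda\Bound\rceil,\Bound\}$ and $\{\lceil\lambda T\rceil,\lceil\lambda\Bound\rceil+2\}$, which are \emph{disjoint} despite the three-integer overlap of $H$ and $L$. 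The paper explicitly warns that $\Know_i\cap\widetilde\Know_i\neq\emptyset$ need not imply $\UWD_i(\Know_i)\cap\UWD_i(\widetilde\Know_i)\neq\emptyset$, and that restricting attention to mechanisms whose strategies are valuations (or sets of valuations) is not sufficient for the impossibility result.

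The missing idea is the Undominated Intersection Lemma: when $|\Know_i\cap\widetilde\Know_i|>1$, there exist \emph{mixed} strategies supported on $\UWD_i(\Know_i)$ and on $\UWD_i(\widetilde\Know_i)$ whose allocation probabilities differ by less than $\veps$ against every opponent subprofile. Its proof is nontrivial (an infinite alternating chain of dominations plus a two-point linearity argument, which is exactly why two common values are needed, not one). With that lemma your symmetric two-setting architecture can probably be rescued: since the mechanism is deterministic, $\MechEA_1$ equals $1$ on every undominated pure profile of setting~A and $0$ on every undominated pure profile of setting~B, and chaining the lemma through players $1$ and $2$ bounds the difference by $2\veps$, a contradiction. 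But as written, with a claimed common pure profile, the proof does not go through. For comparison, the paper's own argument is single-setting rather than symmetric: all players report from the low interval, averaging finds one whose winning probability is at most $1/n$, the lemma transfers this (up to $\veps$) to that player's undominated strategies for the high interval, and determinism then extracts a pure undominated strategy under which that player loses outright; the adversary then sets that player's valuation to $\Bound$ and everyone else's to the bottom of the low interval.
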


\theoremref{thm:intro-VCG-in-UWD-approx-2} is harder to prove, as is to be expected from an impossibility result.
Indeed,  its statement applies to all undominated-strategy mechanisms, so that the revelation principle is no longer relevant. Thus,  to prove \theoremref{thm:intro-VCG-in-UWD-approx-2}, rather than analyzing a single mechanism (the ``direct truthful" one), in principle we should consider \emph{all} possible mechanisms. Considering only those where a player's strategies  consist of  valuations, or even sets of valuations, is not sufficient.  We would have to consider mechanisms with arbitrary strategy sets. Establishing \theoremref{thm:intro-VCG-in-UWD-approx-2} thus requires new techniques, informally discussed in \sectionref{sec:tools}, and formally provided in \sectionref{sec:proof-undominated-intersection-lemma}.
\subsubsection{The Greater Power of Probabilistic Undominated-Strategy Mechanisms}

The second-price mechanism ``ignores the global inaccuracy parameter". It simply guarantees a fraction $\approx (\frac{1-\Err}{1+\Err})^{2}$ of the maximum social welfare in any Knightian auction, no matter what the value of  $\Err$ happens to be. It is thus legitimate to ask whether knowing $\Err$ (or a close upper-bound to it)
enables one to  design
mechanisms with better efficiency guarantees. We prove that this is indeed the case: we explicitly construct a probabilistic mechanism that, by properly leveraging $\Err$, outperforms the second-price mechanism, and then we prove that our mechanism is essentially optimal.

\begin{ctheorem}[informal]
\label{thm:intro_uwd_single_pos_ran-1}
For all $n \geq 2$, $\Err \in (0,1)$, and $\Bound$, there exists a mechanism $\MOPT$ that guarantees a fraction of the maximum social welfare that is at least
\begin{equation*}
\hfill
\frac{(1-\Err)^2 + \frac{4\Err}{n}}{(1+\Err)^2}
\hfill
\end{equation*}
in any Knightian auction with $n$ players, valuation bound $\Bound$, and inaccuracy parameter $\Err$.
\end{ctheorem}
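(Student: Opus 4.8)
The plan is to exhibit the mechanism $\MOPT$ explicitly and then bound its worst‑case performance in undominated strategies. First observe that the target fraction is a convex combination,
\[
\frac{(1-\Err)^2 + \tfrac{4\Err}{n}}{(1+\Err)^2}\;=\;\Bigl(1-\tfrac1n\Bigr)\rho + \tfrac1n\cdot 1 \;=\;\tfrac{1+(n-1)\rho}{n},\qquad \rho \defeq \Bigl(\tfrac{1-\Err}{1+\Err}\Bigr)^2,
\]
using $1-\rho=\tfrac{4\Err}{(1+\Err)^2}$; and that $\tfrac{1+(n-1)\rho}{n}$ is exactly what a \emph{uniform} lottery over $n$ players salvages against a valuation profile in which one coordinate equals $\tfrac{1+\Err}{1-\Err}$ times a baseline and the remaining $n-1$ equal $\tfrac{1-\Err}{1+\Err}$ times it. Accordingly, $\MOPT$ has each player report a single valuation $v_i$; given the reported profile $v$, it allocates the good to player $i$ with probability $A_i(v)$, where $A$ is (an explicit, monotone realization of) the allocation distribution that is the best hedge against the set of true‑valuation profiles still consistent with $v$ under undominated play — namely the \emph{plausible box} $\{\theta:\theta_i\in[\tfrac{1-\Err}{1+\Err}v_i,\ \tfrac{1+\Err}{1-\Err}v_i]\text{ for all }i\}$ — and payments are the induced (Myerson) payments. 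Intuitively $A$ collapses to the second‑price rule when one bid dominates, and spreads mass toward the uniform lottery as the top bids become mutually comparable.

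The first step of the analysis is a lemma, analogous to the one mentioned after \theoremref{thm:intro-VCG-in-UWD-approx-1}: for each player $i$, every $\sigma_i\in\UWD_i^{\MOPT}(\Know_i)$ reports a valuation in (essentially) $[\min\Know_i,\max\Know_i]$. I would prove this by a domination argument — a report below $\min\Know_i$ is weakly dominated by the report $\min\Know_i$, and symmetrically at the top — using the monotonicity of $A_i$ in $i$'s own report together with the standard payment identity. Its consequence, which drives everything, is that since $\theta_i\in\Know_i$ as well and $\Know_i$ is $\Err$‑approximate, the reported bid and true valuation of every player $i$ satisfy $v_i\in[\tfrac{1-\Err}{1+\Err}\theta_i,\ \tfrac{1+\Err}{1-\Err}\theta_i]$; equivalently, the realized $\theta$ always lies in exactly the plausible box that $A(v)$ was built to hedge against.

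Now fix an arbitrary undominated profile $\mixedA$, producing a reported profile $v$, and an arbitrary true‑valuation profile $\theta$ in the plausible box; let $\Theta=\max_i\theta_i$. By construction $\EXP[\theta_{w(\MOPT,\mixedA)}]=\sum_i A_i(v)\theta_i\ge\bigl(\min_{\theta'\in\mathrm{box}(v)}\tfrac{\sum_i A_i(v)\theta'_i}{\max_i\theta'_i}\bigr)\cdot\Theta$, so it suffices to show that the minimax value of the allocation game at \emph{every} reported profile $v$ is at least $\rho+\tfrac{1-\rho}{n}$. Here I would use the structural facts read off from the box: the owner of the true top value has bid within a $\rho$‑factor of the top bid (since $v_m\ge\tfrac{1-\Err}{1+\Err}\Theta$ while every bid is at most $\tfrac{1+\Err}{1-\Err}\Theta$), and every bidder competitive with the top has value at least $\rho^2\Theta$; from these one argues that making the top bids more nearly equal can only lower the minimax value, so the worst profile is $v=(1,\dots,1)$, at which the optimal hedge is the uniform lottery and the value is precisely $\tfrac{1+(n-1)\rho}{n}=\rho+\tfrac{1-\rho}{n}$.

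The crux, and the main obstacle, is the monotonicity‑in‑$v$ step: showing the minimax value never dips below $\rho+\tfrac{1-\rho}{n}$ for any reported profile amounts to a case analysis over the shape of $v$ (how many bids lie within a $\rho$‑factor of the top, and where the rest lie) with a carefully tuned hedge $A(v)$ in each case. The tension is genuine: when the top bids are near‑tied, $A$ must place a $\approx\tfrac1n$ share on each near‑top bidder — any of them could own the true top value — yet it must not waste mass on bidders that are competitive but, by the $\Err$‑approximation, may have value as small as $\rho^2\Theta$. These two pressures are exactly balanced at the all‑equal profile, which is also the configuration behind the companion impossibility bound, so the argument must exhibit this single profile as simultaneously the worst for $\MOPT$ and unbeatable by any mechanism.
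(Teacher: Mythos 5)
Your outline follows the same architecture as the paper's proof: a direct mechanism in which players bid single valuations, a monotonicity-based characterization showing every undominated bid lies in $[\min \Know_i,\max\Know_i]$ (hence $\frac{1-\Err}{1+\Err}\BID_i \le \TV_i \le \frac{1+\Err}{1-\Err}\BID_i$), and a reduction to the claim that at every reported profile the allocation hedges against the worst true-valuation profile in the resulting box. The algebra identifying the target as $\rho+\frac{1-\rho}{n}$ with $\rho=(\frac{1-\Err}{1+\Err})^2$ is correct, and the paper's ``$\Err$-goodness'' condition $\sum_j \AlloFunc_j(\BID)\BID_j + \DD\,\AlloFunc_i(\BID)\BID_i \ge \frac{n+\DD}{n}\BID_i$ (with $\DD = \frac{1}{\rho}-1$) is exactly the per-profile hedging inequality you are after, after reducing the box-minimization to its two extreme corners.

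However, there is a genuine gap: you never construct the allocation rule, and the construction is the entire technical content of the theorem. Defining $A(v)$ as ``the best hedge against the plausible box'' and asserting it admits ``an explicit, monotone realization'' begs the two questions the paper actually answers. First, existence of a hedge achieving $\rho+\frac{1-\rho}{n}$ at \emph{every} profile is what you flag as ``the crux and the main obstacle,'' and your proposed route --- arguing that equalizing the top bids can only lower the minimax value, so the all-equal profile is worst --- is an unproven monotonicity claim about a minimax value over profiles; the paper avoids it entirely by writing down a threshold rule ($n^*$ candidate winners above $\frac{\sum_{j\le n^*} z_j}{n^*+\DD}$, each receiving probability $\frac{1}{n}\cdot\frac{n+\DD}{n^*+\DD}\cdot\frac{z_i(n^*+\DD)-\sum_j z_j}{z_i\DD}$) and verifying the linear inequality pointwise, which takes one page of computation. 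Second, monotonicity of $A_i$ in $v_i$ is not automatic for a profile-by-profile minimax optimizer (the optimizer need not even be unique), yet your undominated-strategies lemma depends on it; in the paper this is a separate nontrivial argument (CLAIM 1 and CLAIM 2 tracking how the candidate-winner set changes with $z_n$), and one also needs the strict ``distinguishability'' condition --- not just monotonicity --- to convert very-weak dominance into the weak dominance required to exclude bids outside $[\min\Know_i,\max\Know_i]$. Without an explicit $A$ satisfying these properties, the proposal is a correct reduction plus an accurate description of why the remaining step is hard, not a proof.
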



\parhead{Theoretical significance}
\theoremref{thm:intro_uwd_single_pos_ran-1} highlights a  novelty of the Knightian setting: namely, probabilism enhances the power of implementation in undominated strategies even for guaranteeing social welfare. By contrast, probabilism offers no such advantage in the exact-valuation world, since the deterministic second-price mechanism already guarantees maximum social welfare. We conjecture that, in Knightian settings, probabilistic mechanisms will enjoy a provably better performance in other applications as well.

\parhead{Practicality}
The proof of \theoremref{thm:intro_uwd_single_pos_ran-1} is  the technically hardest one in this paper.
Nonetheless, we would like to emphasize that  $\MOPT$ is very \emph{practically} played, as it  requires almost no computation from the players, and a very small amount of computation from the mechanism.
In addition, its performance is \emph{practically} preferable to that of the second-price mechanism. For instance, when $\Err=0.5$, $\MOPT$ guarantees a social welfare that is at least \textbf{five} times higher than that of the second-price mechanism when there are $2$ players, and at least \textbf{three} times higher when there are $4$ players.
(For a full comparison chart, see \appendixref{sec:diagrams}.)

\bigskip
If $\MOPT$ proves the power of probabilistic mechanisms, our next  theorem upperbounds this power by proving that the social-welfare performance of $\MOPT$ is essentially optimal among all mechanisms, probabilistic or not.

\begin{ctheorem}[informal]
\label{thm:intro_uwd_single_pos_ran-2}
For all $n \geq 2$, $\Err \in (0,1)$, and $\Bound \geq \frac{5}{\Err}$, no (possibly probabilistic) undominated-strategy mechanism $\Mech_{n,\Err,\Bound}$ can guarantee a fraction of the maximum social welfare greater than
\begin{equation*}
\hfill \frac{(1-\Err)^2 + \frac{4\Err}{n}}{(1+\Err)^2} + \frac{4}{\Bound} \hfill
\end{equation*}
in any Knightian auction with $n$ players, valuation bound $\Bound$, and inaccuracy parameter $\Err$.
\end{ctheorem}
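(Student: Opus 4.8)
The plan is to follow the same impossibility-proof architecture as for Theorem~\ref{thm:intro-VCG-in-UWD-approx-2}, but with a probabilistic adversary argument that accounts for randomized mechanisms. The core obstacle, exactly as noted in the discussion of Theorem~\ref{thm:intro-VCG-in-UWD-approx-2}, is that we cannot invoke any revelation principle: we must reason about an \emph{arbitrary} (now possibly probabilistic) mechanism $\Mech$ with arbitrary strategy sets $\Strats_1,\dots,\Strats_n$. So the first step is to recall (or re-prove in the probabilistic setting) the key structural tool from Section~\ref{sec:proof-undominated-intersection-lemma} — an ``intersection lemma'' of the following flavor: for any mechanism $\Mech$, any player $i$, and any two candidate-valuation sets $\Know_i, \Know_i'$ that are both $\Err$-approximate and overlap appropriately, the sets of undominated strategies $\UWD_i^{\Mech}(\Know_i)$ and $\UWD_i^{\Mech}(\Know_i')$ have a common element, or more precisely that undominated strategies behave ``continuously'' as the candidate-valuation set slides. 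The point of such a lemma is that it forces the mechanism to commit to a single allocation behavior on a strategy that is simultaneously undominated for a whole family of true candidate sets, which is what lets the devil play different members of that family against a fixed strategy profile. For randomized mechanisms the lemma must be stated in terms of \emph{expected} allocation probabilities $\MechEA$, and the proof goes through because the utility of player $i$ is linear in the allocation probability, so domination between strategies is still decided by comparing expected allocations pointwise over opponents' subprofiles.

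Granting the intersection lemma, the second step is the adversary construction. Fix $n$, $\Err$, and $\Bound \ge 5/\Err$. Following the intuition behind Theorem~\ref{thm:intro-VCG-in-UWD-approx-1}, pick a ``high'' value $H$ near $\Bound$ and consider the candidate-valuation set $\Know_H$ equal to (the integer points of) an $\Err$-interval centered so that $\max\Know_H \approx H$ and $\min\Know_H \approx \frac{1-\Err}{1+\Err}H$. For each player, either (a) the player has some undominated strategy under $\Know_H$ that makes the mechanism assign him the good with probability at most roughly $\tfrac1n$ when all opponents also use such strategies — in which case the devil sets that player's true valuation to $\max\Know_H$ and everybody else's to a tiny value, and the realized welfare is $\approx \tfrac1n H$ against a maximum of $H$; or (b) no such low-probability undominated strategy exists, meaning every undominated strategy of that player grabs the good with substantial probability. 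Case (b), pushed across all players, over-allocates the unit good, a contradiction; the quantitative bookkeeping of exactly how the $\tfrac{4\Err}{n}$ correction term arises is the delicate part, and it should mirror the computation already carried out for $\MOPT$ in the proof of Theorem~\ref{thm:intro_uwd_single_pos_ran-1} — essentially, the optimal tradeoff is between the ``random among the high bidders'' term $\tfrac1n$ and the ``second-price collision'' term $\big(\tfrac{1-\Err}{1+\Err}\big)^2$, and the extremal configuration interpolates them, yielding $\frac{(1-\Err)^2 + 4\Err/n}{(1+\Err)^2}$.

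The third step is to combine these: use the intersection lemma to argue that the strategy profile $\mixedA$ selected in the adversary construction really does consist of undominated strategies \emph{simultaneously} for each player's assigned true valuation (which lies in the common $\Err$-approximate candidate set), so that $\mixedA$ is a legitimate play of $\Mech$ under the solution concept. Then the realized welfare bound from the construction gives the claimed upper bound, with the additive $\frac{4}{\Bound}$ slack absorbing the rounding of $H$, $\Err$-intervals, and the ``$\pm 1$'' issues around $\min\Know_i - 1$ and $\max\Know_i + 1$ flagged after Theorem~\ref{thm:intro-VCG-in-UWD-approx-1}. The main obstacle I expect is \emph{not} the extremal arithmetic — that is essentially dual to the lower-bound computation for $\MOPT$ — but rather establishing the probabilistic intersection lemma with enough strength: one must ensure that when the devil slides the candidate set, the strategy that was undominated stays undominated, and this requires controlling how the mechanism's randomization could, in principle, be tuned to make a strategy dominated for a slightly shifted candidate set. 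I would handle this by a compactness / limiting argument over a finite grid of candidate-set centers, exploiting that there are only finitely many pure strategies and finitely many integer valuations, so the relevant domination relations form a finite combinatorial object and a pigeonhole over centers yields a single strategy undominated for a contiguous block of candidate sets wide enough for the adversary.
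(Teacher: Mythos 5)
Your high-level architecture (an intersection lemma for undominated strategies, an averaging step to find a player who rarely wins, then an adversarial choice of true valuations) does match the paper's proof of Theorem 5. But there are two genuine gaps in how you propose to execute it.

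First, the version of the intersection lemma you plan to prove is the wrong one, and your plan for proving it is suspect. You aim for ``a single strategy undominated for a contiguous block of candidate sets'' via pigeonhole over a finite grid. The paper explicitly flags that even the weaker relation $\Know_i\cap\widetilde{\Know}_i\neq\emptyset \Rightarrow \UWD_i(\Know_i)\cap\UWD_i(\widetilde{\Know}_i)\neq\emptyset$ is not known to hold in general (it would hold only if the players' randomization were bounded), and the domination relation quantifies over all \emph{mixed} dominating strategies, so it is not the ``finite combinatorial object'' your pigeonhole needs. What the paper proves instead (\lemmaref{lemma:uwd_single_lemma_cycle}) is deliberately weaker: if $\Know_i$ and $\widetilde{\Know}_i$ share at least \emph{two} integers, there exist mixed strategies supported on $\UWD_i(\Know_i)$ and on $\UWD_i(\widetilde{\Know}_i)$ whose expected allocations differ by less than $\veps$ against every opponent subprofile. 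The proof builds an infinite alternating chain of weak dominations between the two distributions-over-undominated-sets, observes the resulting payoff sequence is monotone and bounded, and then uses the two common valuations to conclude that two linear functions close at two points have close slopes (hence close allocation probabilities). The two-common-points requirement is essential and is entirely absent from your account.

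Second, your adversary construction does not produce the claimed bound. You set the low-probability player's true valuation to $\max\Know_H$ and ``everybody else's to a tiny value,'' but the other players must be playing strategies undominated for \emph{their own} candidate sets; if their values are tiny, their candidate sets are low and disjoint from $\Know_H$, and the intersection lemma no longer connects anything. The actual construction takes two overlapping intervals: $\Int{x}$ with $\lfloor(1+\Err)x\rfloor=\Bound$ and $\Int{y}$ with $y=\lfloor\frac{(1-\Err)x+2}{1+\Err}\rfloor$, chosen so they share the two integers $\lceil(1-\Err)x\rceil$ and $\lceil(1-\Err)x\rceil+1$. Averaging over the all-$\Int{y}$ profile finds a player winning with probability at most $\tfrac1n$; the lemma lets that player swap to an almost-equivalent strategy undominated for $\Int{x}$; the devil then sets his value to $\lfloor(1+\Err)x\rfloor$ and the others' to $\lceil(1-\Err)y\rceil\approx\frac{(1-\Err)^2}{(1+\Err)^2}\Bound$. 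The bound $\tfrac1n + \tfrac{n-1}{n}\cdot\frac{(1-\Err)^2}{(1+\Err)^2}=\frac{(1-\Err)^2+\frac{4\Err}{n}}{(1+\Err)^2}$ falls out of this specific geometry; it is not ``dual to the $\MOPT$ computation'' in any way that substitutes for doing it.
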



In sum, our results prove that mechanism design in the Knightian setting is quite possible. Some of the old techniques no longer work, but it is still possible to  construct good mechanisms.

\subsection{Two Techniques of Independent Interest}
\label{sec:tools}

New ventures require new tools. Let us thus  highlight two techniques,  crucial to
our present endeavor,  that we believe  will prove useful also to future work in Knightian mechanism design.

\parhead{The Undominated Intersection Lemma}
To prove  \theoremref{thm:intro-VCG-in-UWD-approx-2} and \theoremref{thm:intro_uwd_single_pos_ran-2}, we  establish a basic structural relation between candidate-valuation sets and undominated strategies. The simplest one of course would be $\UWD_{i}(\Know_{i})=\Know_{i}$.
This relation, however, is generally false, even when the strategies available to each player consist of
individual valuations between $0$ and $\Bound$.%
\footnote{Indeed a mechanism does not need to interpret a bid $v_{i}$ reported by $i$ as $i$'s true valuation $\TV_{i}$. For instance, the mechanism could first replace each $v_{i}$ by $\pi(v_{i})$ where $\pi$ is some fixed permutation over $\{0,1,\dots,\Bound\}$ and then run the second-price mechanism as if each player $i$ had bid $\pi(v_{i})$. In this case, after $\UWD(\Know_{i})$ has been correctly computed, it will look very different from $\Know_{i}$.}
A second relation, implied by the previous one, is the following:
\begin{center}

$\forall$ {\em   $\Know_i$  and  $\widetilde{\Know_i}$,} $\Know_{i}\cap \widetilde{\Know}_{i}\neq \emptyset \Rightarrow \UWD_{i}(\Know_{i}) \cap \UWD_{i}(\widetilde{\Know}_{i})\neq \emptyset$.

\end{center}
It is not clear, however, whether this second relation always holds.%
\footnote{It would actually hold if the total number of coins usable by the players for choosing their mixed strategies were upper-bounded by a fixed constant.}
Indeed, an undominated-strategy mechanism may have to specify its strategy sets in quite unforseen ways.
Therefore, as soon as $\Know_{i}$ and $\widetilde{\Know}_{i}$ are even slightly different, their corresponding $\UWD_{i}(\Know_{i})$ and $\UWD_{i}(\widetilde{\Know}_{i})$ may in principle be totally unrelated. We prove, however, that the following simple variation of the second relation holds for any possible mechanism. Informally,
\begin{itemize}
\item[]
\emph{For any mechanism, probabilistic or not, if $\Know_{i}$ and $\widetilde{\Know}_{i}$ have at least \textrm{{\bf two}} values in common, then there exist two (possibly mixed) ``almost payoff-equivalent'' strategies $\mixedA_{i}$ and $\widetilde{\mixedA}_{i}$ respectively having $\UWD_{i}(\Know_{i})$ and $\UWD_{i}(\widetilde{\Know}_{i})$ as their support.}
\end{itemize}
This relation actually  suffices for deriving all our impossibility results.

\parhead{The Distinguishable Monotonicity Lemma}
To prove that a given social choice function can be implemented in undominated strategies we are happy to consider mechanisms using a restricted kind of strategies and allocation functions, but we must achieve a delicate balance. On one hand, these restrictions should ensure that the undominated strategies corresponding to a given candidate-valuation set can be characterized in a way that is both \emph{conceptually simple} and \emph{easy to work with}. On the other hand, they should be {\em sufficient} for proving our
\theoremref{thm:intro-VCG-in-UWD-approx-1} and \theoremref{thm:intro_uwd_single_pos_ran-1}.

Specifically, we consider mechanisms whose strategies consist of individual valuations (i.e., the pure strategies of each player coincide with $\{0,\ldots,\Bound\}$) and whose allocation functions are restrictions (to $\{0,\dots,\Bound\}^n$) of integrable functions (over $[0,\Bound]^n$) satisfying a suitable monotonicity property. A simple lemma, the \emph{Distinguishable Monotonicity Lemma}, then guarantees that, for all candidate-valuation set $\Know_i$,
\begin{center}
$\UWD_{i}(\Know_{i})=\{\min \Know_{i}, \max\Know_i\}$.
\end{center}
Although concerned with undominated strategies, when applied to the case of players knowing their valuations exactly, the Distinguishable Monotonicity Lemma is a strengthening of a classical lemma  characterizing (very weakly) dominant-strategy-truthful mechanisms in traditional single-good auctions.
Further, the Distinguishable Monotonicity Lemma actually applies to \emph{all} single-parameter domains, not just single-good auctions (the same way that the classical lemma does).
We thus believe that this simple lemma will be useful beyond the immediate needs of this paper.

\section{Prior Work with Knightian Players}

As already mentioned, Knightian players have received a lot of attention in decision theory. In particular, \cite{Incomplete1},  \cite{Incomplete2}, \cite{Incomplete-vector1} and \cite{Incomplete-vector2} investigate decision with incomplete orders of preferences.

The merits of different ways for a Knightian player to ``condense'' his set of possible values into a meaningfully and deterministically-chosen \emph{single value} have been explored. For example, \cite{Incomplete-random} studies the average, \cite{Incomplete-expect2} the maximum, and \cite{Incomplete-expect1}  the Choquet expectation.

Other authors have studied mechanisms where a \emph{single} Knightian player is called to accept or reject a given offer; in particular \cite{LopomoRS09} studied the rent-extraction problem in such a setting.

Less relevant to our work, several authors have considered individual Bayesians to model a player's uncertainty: for instance, \cite{Sandholm00}, \cite{PRST08}, and \cite{FT11}. Also, others have studied equilibrium models with unordered preferences: for instance \cite{knight-equilibrium1}, \cite{knight-equilibrium2}, \cite{knight-equilibrium3}, and \cite{knight-equilibrium4}. More recently, \cite{RigottiShannon05} have characterized the set of equilibria in a financial market problem.

\section{Single-Good Knightian Auctions}
\label{sec:auction}

We separate every  auction into two parts, a context and a mechanism.

\parhead{Knightian Contexts}
A Knightian context  $\Cont$  has the following components.%
\footnote{The players' beliefs are not part of our contexts because they cannot affect our strong solution concept.}
\begin{itemize}[nolistsep]

  \item $\Players = \{1,2,\dots,n\}$, the set of \emph{players.}

  \item $\{0,1,\dots,\Bound\}$, the set of all  {\em  valuations}, where $\Bound$ is the {\em valuation bound}.

  \item $\Err \in (0,1]$, the {\em  inaccuracy } of the context.

  \item $\Know$, the profile of {\em candidate-valuation sets}, where, for all $i$, $\Know_i \subseteq \Err [x_{i}]$ for some  $x_i \in \real$.
Here,  $\Int{x} \defeq [(1-\Err)x ,(1+\Err)x] \cap \{0,1,\dots,\Bound\}$.

  \item $\TV$, the profile of \emph{true valuations}, where each $\TV_{i} \in \Know_i$.
  \item $\Outcomes=\{0,1,\ldots,n\} \times \real^n$, the set of \emph{outcomes}. If $(a,\Price)\in\Outcomes$, then we refer to $a$ as an \emph{allocation} and to $P$ as a profile of \emph{prices}. (If $a=0$ then the good remains unallocated, else player $a$ wins the good.)
  \item $\Utility$, the profile of \emph{utility functions.} Each $\Utility_{i}$ maps  any  outcome $(a,\Price)$ to  $\TV_{i}-\Price_{i}$ if $a=i$, and to $-\Price_{i}$ otherwise.

\end{itemize}
\smallskip

\noindent
Notice that $\Cont$ is fully specified by $n,\Bound,\Err,\Know$, and $\TV$, that is, $\Cont=(n,\Bound,\Err,\Know,\TV)$.

\scparagraph{Knowledge}
In a context $C=(n,\Bound,\Err,\Know,\TV)$ each player $i$ only knows $\Know_{i}$ and that $\TV_{i} \in \Know_{i}$,  and a mechanism designer only knows $n$, $\Bound$, and $\Err$.

\scparagraph{Notation}
 The set of all  contexts with $n$ players, valuation bound $\Bound$, and inaccuracy $\Err$, is denoted by $\Class_{n,\Bound,\Err}$.

The \emph{social welfare} of an outcome $(a,P)$ relative to true-valuation profile $\TV$, $\SW(\TV,(a,\Price))$, is defined to be $\TV_{a}$.
The \emph{maximum} social welfare of a true-valuation profile $\TV$, $\MSW(\TV)$, is defined to be $\max_{i \in \Players} \TV_{i}$.

\parhead{Mechanisms} Our mechanisms  for Knightian contexts are  finite and ordinary.
Indeed a mechanism  for $\Class_{n,\Bound,\Err}$ is a pair $\Mech = (\Strats,\OutFunc)$ where
\begin{itemize}[nolistsep]
  \item $\CC{\Strats} = \Strats_{1}\times\dots\times\Strats_{n}$ is the set of all {\em pure strategy profiles} of $\Mech$, and
  \item $\OutFunc \colon \Strats \to \{0,1,\ldots,n\} \times \real^{n}$ is $\Mech$'s \emph{outcome function}.
\end{itemize}
Set $S$ is always finite and non-empty, and function $F$ may be probabilistic.

\scparagraph{Notation}
\begin{itemize}[nolistsep]
\item We denote pure strategies by Latin letters, and possibly mixed strategies by Greek ones.
\item If $\Mech=(\Strats,\OutFunc)$ is a mechanism and $s\in \Strats$, then by  $\MechEA_{i}(\pureA)$ and $\MechEP_{i}(\pureA)$ we respectively denote the probability that the good is assigned to player $i$ and the expected price paid by $i$ under strategy profile $\pureA$. For mixed strategy profile $\mixedA\in\Dist(\Strats)$, we define $\MechEA_{i}(\mixedA)\defeq \mathbb{E}_{\pureA\gets \mixedA}\big[\MechEA_{i}(\pureA)\big]$ and $\MechEP_{i}(\mixedA)\defeq \mathbb{E}_{\pureA\gets\mixedA}\big[\MechEP_{i}(\pureA)\big]$, where $\pureA \gets \mixedA$ denotes that ``$\pureA$ is drawn from the mixed strategy $\mixedA$''.
\item We refer to  $\MechEA$ as the \emph{\eWPF{}} \emph{of $\Mech$}. More generally,  we say that  $\AlloFunc \colon \Strats \to [0,1]^{n}$ is an \emph{\eWPF{}} if  for all strategy profile $\pureA\in\Strats$, $\sum_{i \in \Players} \AlloFunc_{i}(\pureA) \leq 1$.
\end{itemize}


\section{Knightian Dominance}
\label{sec:dominance}

In extending  the three classical notions of dominance to our approximate valuation setting, the obvious constraint is that when each candidate-valuation  set $\Know_{i}$ consists of a single element, then all extended notions must collapse to the original ones.

\begin{definition}
\label{def:dominance}
In a mechanism $\Mech=(\Strats,\OutFunc)$ for  $\Class_{n,B,\Err}$, let $K$ be a profile of candidate-valuation  sets,  $i$ a player, $\mixedA_i$ a (possibly mixed) strategy of $i$, and $s_i$ a pure strategy of $i$. Then, relatively to $\Know_i$,  we say that
\begin{itemize}[nolistsep]
  \item $\mixedA_{i}$ \defem{very-weakly dominates} $\pureA_{i}$, in symbols $\mixedA_{i} \zVWD{i}{\Know_{i}} \pureA_{i}$, if

$
\forall\, \TV_{i} \in \Know_{i} \,,\,
\forall\, \pureB_{-i} \in \CC{\Strats}_{-i} \,:\,
\EU_{i}(\TV_{i} , \OutFunc(\mixedA_{i} \sqcup \pureB_{-i}) ) \geq \EU_{i}(\TV_{i} , \OutFunc(\pureA_{i} \sqcup \pureB_{-i}) ) \enspace.
$

\item $\mixedA_{i}$ \defem{weakly dominates} $\pureA_{i}$, in symbols $\mixedA_{i} \zWD{i}{\Know_{i}} \pureA_{i}$, if $\quad
(a) \; \; \mixedA_{i} \zVWD{i}{\Know_{i}} \pureA_{i} \text{ and }$

$(b) \; \;
 \exists\, \TV_{i} \in \Know_{i} \,,\,
\exists\, \pureB_{-i} \in \CC{\Strats}_{-i} \,:\,
\EU_{i}(\TV_{i} , \OutFunc(\mixedA_{i} \sqcup \pureB_{-i}) ) > \EU_{i}(\TV_{i} , \OutFunc(\pureA_{i} \sqcup \pureB_{-i}) ) \enspace.
$

\end{itemize}
The  \defem{(very-weakly-)dominant strategies} for $\Know_i$ and $K$ respectively are
  \begin{equation*}
  \VWDnt_{i}(\Know_{i}) \defeq \left\{ \pureA_{i} \in \Strats_{i} : \forall\, \pureB_{i} \in \Strats_{i} \,,\, \pureA_{i} \zVWD{i}{\Know_{i}} \pureB_{i} \right\} \text{and}\; \CC{\VWDnt}(\CC{\Know})\defeq \VWDnt_{1}(\Know_{1}) \times \cdots \times \VWDnt_{n}(\Know_{n})\enspace .
  \end{equation*}
The \defem{undominated strategies} for $\Know_i$ and $K$ respectively are
\begin{equation*}
\UWD_{i}(\Know_{i}) \defeq \left\{ \pureA_{i} \in \Strats_{i} : \not\exists \, \mixedA_{i} \in \Delta(\Strats_{i})\,,\,   \mixedA_{i} \zWD{i}{\Know_{i}} \pureA_{i} \right\} \text{and}\;
\CC{\UWD}(\CC{\Know})\defeq \UWD_{1}(\Know_{1}) \times \cdots \times \UWD_{n}(\Know_{n}).
\end{equation*}

\end{definition}

We use the notation $\VWDnt_{i}$ instead of, say, ``$\mathsf{VWDnt}_{i}$'' because we have no need to define weakly-dominant or  strictly-dominant strategies in a Knightian setting.
(Indeed, Theorem 1 shows that even   \emph{very}-weakly-dominant strategies  cannot guarantee
any non-trivial performance.)%

We use the notation $\UWD_{i}$ instead of, say, ``$\mathsf{UWDed}_{i}$'', because (as in the classical setting) implementation in undominated strategies is defined for weak dominance.%
\footnote{Also, when considering instead \emph{very-weak} dominance, two ``equivalent'' strategies may eliminate each other and the set $\UWD$ may become empty.}


The above extensions of the classical notions are quite straightforward. Only to the extension of weak dominance might require some attention.%
\footnote{
Consider defining ``$\mixedA_{i}$ weakly dominates $\pureA_{i}$'' using the following alternative quantifications in the additional condition for weak dominance: (1) $\forall \TV_{i} \forall \pureB_{-i}$, (2) $\exists \TV_{i} \forall \pureB_{-i}$, and (3) $\forall \TV_{i} \exists \pureB_{-i}$. Alternatives 1 and 2 do not yield  the classical notion of weak dominance when $\Know_{i}$ is singleton. Alternative 3 fails to capture the ``weakest condition'' for which, in absence of special beliefs,  a strategy $\pureA_{i}$ should be discarded in favor of $\mixedA_{i}$.
Indeed, since we already know that $\mixedA_{i}$ very-weakly dominates $\pureA_{i}$, for player $i$ to discard strategy $\pureA_{i}$ in favor of $\mixedA_{i}$, it should suffice that $\pureA_{i}$ is strictly worse than $\mixedA_{i}$ for a \emph{single} true-valuation candidate $\TV_{i} \in \Know_{i}$. That is, we should not insist that $\pureA_{i}$ be strictly worse than $\mixedA_{i}$ for all $\TV_{i} \in \Know_{i}$.}


Finally, let us note that the following obviously holds.

\begin{fact}
\label{fact:UWD-non-empty}
$\UWD_{i}(\Know_{i}) \neq \emptyset$ for all $\Know_{i}$.
\end{fact}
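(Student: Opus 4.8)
The plan is to exhibit a single real-valued ``utilitarian'' score on the strategies of player~$i$ that is affine under mixing and is strictly increased by any weak domination; since $\Strats_i$ is finite and non-empty, a pure strategy maximizing this score exists, and it must be undominated.

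Concretely, fix $i$ and $\Know_i$, and for $\pureA_i \in \Strats_i$ define
\[
g(\pureA_i) \;\defeq\; \sum_{\TV_i \in \Know_i}\ \sum_{\pureB_{-i} \in \Strats_{-i}} \EU_i\big(\TV_i, \OutFunc(\pureA_i \sqcup \pureB_{-i})\big),
\]
a finite sum because $\Know_i \subseteq \{0,1,\dots,\Bound\}$ and $\Strats_{-i}$ are both finite. I would extend $g$ to mixed strategies by $g(\mixedA_i) \defeq \EXP_{\pureA_i \gets \mixedA_i}[g(\pureA_i)]$. The one point worth noting is that this extension is consistent with simply substituting $\mixedA_i$ into the displayed formula: by definition $\MechEA_i(\mixedA_i \sqcup \pureB_{-i})$ and $\MechEP_i(\mixedA_i \sqcup \pureB_{-i})$ are expectations over $\pureA_i \gets \mixedA_i$, so $\EU_i(\TV_i, \OutFunc(\mixedA_i \sqcup \pureB_{-i})) = \TV_i \cdot \MechEA_i(\mixedA_i \sqcup \pureB_{-i}) - \MechEP_i(\mixedA_i \sqcup \pureB_{-i})$ is affine in the mixing weights of $\mixedA_i$ (with $\pureB_{-i}$ held fixed). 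This is the only place where it matters that weak dominance is allowed to be witnessed by a \emph{mixed} strategy, and it is a non-issue precisely because the mechanism's allocation and price are defined as expectations over the realized pure strategy.

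Then I would pick $\pureA_i^\star \in \argmax_{\pureA_i \in \Strats_i} g(\pureA_i)$, which exists since $\Strats_i$ is finite and non-empty, and claim $\pureA_i^\star \in \UWD_i(\Know_i)$. If not, some $\mixedA_i \in \Dist(\Strats_i)$ satisfies $\mixedA_i \zWD{i}{\Know_i} \pureA_i^\star$; part~(a) of weak dominance gives $\EU_i(\TV_i,\OutFunc(\mixedA_i \sqcup \pureB_{-i})) \ge \EU_i(\TV_i,\OutFunc(\pureA_i^\star \sqcup \pureB_{-i}))$ for every $\TV_i \in \Know_i$ and $\pureB_{-i} \in \Strats_{-i}$, and part~(b) gives strict inequality for at least one such pair, so summing yields $g(\mixedA_i) > g(\pureA_i^\star)$. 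But $g(\mixedA_i) = \EXP_{\pureA_i \gets \mixedA_i}[g(\pureA_i)] \le \max_{\pureA_i} g(\pureA_i) = g(\pureA_i^\star)$, a contradiction. Hence $\UWD_i(\Know_i)$ contains $\pureA_i^\star$ and is non-empty. There is no real obstacle here; the argument is the standard ``a utilitarian-optimal pure strategy is undominated'' observation, with the affine-under-mixing remark being the only thing to check carefully.
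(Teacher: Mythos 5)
Your argument is correct and complete. The paper does not actually supply a proof of this fact---it is stated as something that ``obviously holds''---so there is nothing to diverge from; your utilitarian-maximizer argument is exactly the standard justification one would write down, and you correctly handle the one delicate point, namely that the dominating strategy may be mixed, by observing that $g$ is affine under mixing (because $\MechEA_i$ and $\MechEP_i$ of a mixed strategy are defined as expectations over the realized pure strategy) and hence $g(\mixedA_i) \leq \max_{\pureA_i} g(\pureA_i)$. The finiteness of $\Strats_i$, $\Strats_{-i}$, and $\Know_i \subseteq \{0,\dots,\Bound\}$ is all that is needed, and you invoke it in the right places.
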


\section{Formal Statement and Proof of Theorem 1}
\label{sec:dominant-strategies}

\textbf{Theorem 1.}\hspace{2mm}
\emph{ For all $ n\geq 1$, $  \Err \in (0,1) $, $  \Bound > \frac{3-\Err}{2\Err}$, and all
 (possibly probabilistic) very-weakly-dominant-strategy-truthful mechanisms
$\Mech=(\Strats,\OutFunc)$ for $\Class_{n,\Bound,\Err}$,  there exists a
context $(n,\Bound,\Err,\Know,\type)\in \Class_{n,\Bound,\Err}$ such that
\begin{equation*}
\EXP\Big[ \SW\big(\TV,\OutFunc({\Know})\big) \Big]
\leq
\left(\frac{1}{n} + \frac{\lfloor \frac{3-\Err}{2\Err} \rfloor +1}{\Bound}\right)\MSW(\TV) \enspace.
\end{equation*}}

\noindent\emph{Proof.}
Fix arbitrarily $n$, $\Err$, and $\Bound$ such that  $\Bound>\frac{3-\Err}{2\Err}$.
We start by proving a separate claim. Essentially, as soon as
a player reports a $\Err$-interval whose center is sufficiently high,  his winning probability  remains constant. (Actually the same holds for his price, although we do not care about it.)

\medskip

\begin{claim}
\label{claim:VDnt-single-item-lemma}
 For all players $i$, all integers $x \in (\frac{3 - \Err}{2\Err},\Bound]$, and all subprofiles $\widetilde{\Know}_{-i}$ of $\Err$-approximate candidate-valuation sets,
 \begin{center}
 $\MechEA_{i}(\Int{x} \sqcup \widetilde{\Know}_{-i})  = \MechEA_{i}(\Int{x+1}\sqcup \widetilde{\Know}_{-i})$.
 \end{center}
\end{claim}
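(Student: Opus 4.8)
The plan is to use that a very-weakly-dominant-strategy-truthful mechanism is Knightian-direct: each $\Strats_i$ contains every $\Err$-approximate subset of $\{0,\dots,\Bound\}$ --- in particular $\Int{x}$ and $\Int{x+1}$, both of which are $\Err$-approximate --- and for every candidate-valuation set $\Know_i$ the report $\Know_i$ is very-weakly dominant. Fix the player $i$, the integer $x\in(\tfrac{3-\Err}{2\Err},\Bound]$, and the opponent subprofile $\widetilde{\Know}_{-i}$, and abbreviate $a_j:=\MechEA_i(\Int{x+j}\sqcup\widetilde{\Know}_{-i})$ and $p_j:=\MechEP_i(\Int{x+j}\sqcup\widetilde{\Know}_{-i})$ for $j\in\{0,1\}$; the goal is $a_0=a_1$. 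The overall strategy is the classical ``two valuations pin down a linear price schedule'' argument from single-parameter mechanism design, applied one rung of the interval ladder at a time.

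First I would extract two families of inequalities. Applying very-weak dominance of the honest report $\Int{x}$ against the deviation to $\Int{x+1}$, for a player whose candidate-valuation set is $\Int{x}$ and with the opponents held at $\widetilde{\Know}_{-i}$, gives $\TV_i a_0-p_0\ge\TV_i a_1-p_1$ for every $\TV_i\in\Int{x}$. Symmetrically, very-weak dominance of $\Int{x+1}$ against the deviation to $\Int{x}$ gives $\TV_i a_1-p_1\ge\TV_i a_0-p_0$ for every $\TV_i\in\Int{x+1}$. Equivalently,
\[
\TV_i\,(a_0-a_1)\ \ge\ p_0-p_1\quad(\forall\,\TV_i\in\Int{x}),
\qquad
\TV_i\,(a_0-a_1)\ \le\ p_0-p_1\quad(\forall\,\TV_i\in\Int{x+1}).
\]

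The heart of the proof is then a purely arithmetic claim: the hypothesis $x>\tfrac{3-\Err}{2\Err}$ forces $\Int{x}\cap\Int{x+1}$ to contain two distinct integers $m<m'$. Indeed a common integer is precisely an integer of $\{0,\dots,\Bound\}$ lying in the real interval $\big[(1-\Err)(x+1),\,\min\{(1+\Err)x,\Bound\}\big]$; when $\Bound$ does not truncate this interval its length equals $2\Err x-(1-\Err)$, which exceeds $2$ exactly because $x>\tfrac{3-\Err}{2\Err}$, and a real interval of length greater than $2$ always contains two integers. Granting this, I would substitute $m$ and $m'$ into the two displayed families: the first gives $m(a_0-a_1)\ge p_0-p_1$ and $m'(a_0-a_1)\ge p_0-p_1$; and since $m,m'\in\Int{x+1}$ as well, the second gives $m(a_0-a_1)\le p_0-p_1$ and $m'(a_0-a_1)\le p_0-p_1$. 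Hence $m(a_0-a_1)=p_0-p_1=m'(a_0-a_1)$, so $(m-m')(a_0-a_1)=0$, and $m\ne m'$ yields $a_0=a_1$ --- that is, $\MechEA_i(\Int{x}\sqcup\widetilde{\Know}_{-i})=\MechEA_i(\Int{x+1}\sqcup\widetilde{\Know}_{-i})$. The same two equalities also give $p_0=p_1$, which is the parenthetical remark about prices.

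I expect the main obstacle to be the arithmetic ``two common integers'' step, not the mechanism-design bookkeeping: the threshold $\tfrac{3-\Err}{2\Err}$ is calibrated precisely so that the dilated intervals $[(1-\Err)x,(1+\Err)x]$ and $[(1-\Err)(x+1),(1+\Err)(x+1)]$ overlap in a sub-interval of length strictly above $2$, yet one must still verify that intersecting with $\{0,\dots,\Bound\}$ does not delete one of the two shared integers when $x$ lies near $\Bound$ --- this is where the global assumption $\Bound>\tfrac{3-\Err}{2\Err}$ enters, and where a looser threshold would break the claim. Everything following the overlap lemma is the routine single-parameter argument, transplanted from reported valuations to reported candidate-valuation sets.
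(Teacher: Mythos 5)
Your proof follows essentially the same route as the paper's: extract the two families of very-weak-dominance inequalities between reporting $\Int{x}$ and reporting $\Int{x+1}$, then exploit linearity of the expected utility in $\TV_i$ to force the two allocation probabilities to coincide. The only real difference is that the paper plugs in two \emph{crossed} pairs of valuations --- first $\TV_i=x\in\Int{x}$ against $\TV_i'=x+1\in\Int{x+1}$ (giving one direction of the inequality), then $\TV_i=\lfloor(1+\Err)x\rfloor$ against $\TV_i'=\lceil(1-\Err)(x+1)\rceil$ (giving the other) --- whereas you plug two \emph{common} integers into both families. Either way the threshold $\frac{3-\Err}{2\Err}$ is what makes the relevant points available, and your variant is, if anything, cleaner.

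The gap is in your arithmetic lemma: the claim that $\Int{x}\cap\Int{x+1}$ contains two distinct integers for every integer $x\in(\frac{3-\Err}{2\Err},\Bound]$ is false at the right endpoint, and the hypothesis $\Bound>\frac{3-\Err}{2\Err}$ does not rescue it as you assert. Take $\Err=0.3$ and $\Bound=x=5$ (so $\frac{3-\Err}{2\Err}=4.5<5$): then $\Int{5}=[3.5,6.5]\cap\{0,\dots,5\}=\{4,5\}$ while $\Int{6}=[4.2,7.8]\cap\{0,\dots,5\}=\{5\}$, so the intersection is the singleton $\{5\}$ and your final step $(m-m')(a_0-a_1)=0$ has no second point $m'$. (With one common point you only get $5(a_0-a_1)=p_0-p_1$; the remaining inequality at $\TV_i=4$ then yields $a_1\ge a_0$ but not the reverse.) The repair is to observe that the claim is only ever invoked in the proof of Theorem 1 for $x=c,\dots,\Bound-1$, i.e.\ with $x+1\le\Bound$; in that range $x$ and $x+1$ are themselves both members of $\Int{x}\cap\Int{x+1}$ (one checks $x\ge(1-\Err)(x+1)$ and $x+1\le(1+\Err)x$ directly from $x>\frac{3-\Err}{2\Err}$), so no interval-length computation is needed and your argument closes. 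As written, though, the ``two common integers'' step fails at $x=\Bound$, so you should either restrict the range or treat that case separately. (For what it is worth, the paper's own substitutions $\TV_i'=x+1$ and $\TV_i=\lfloor(1+\Err)x\rfloor$ also silently assume these values do not exceed $\Bound$, so it has the same boundary blemish.)
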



\begin{proof}[Proof of \claimref{claim:VDnt-single-item-lemma}]
Because  $\Know_{i}$  may coincide with $\Int{x}$, and because when this is the case  reporting $\Int{x}$ very-weakly dominates reporting $\Int{x+1}$, the following inequality must hold:  $\forall\, \TV_{i}\in\Int{x}$,
\begin{equation}\label{firstinequality}
\MechEA_{i}(\Int{x} \sqcup \widetilde{\Know}_{-i})\cdot \TV_{i} - \MechEP_{i}(\Int{x}\sqcup \widetilde{\Know}_{-i})
\geq\;
\MechEA_{i}(\Int{x+1} \sqcup \widetilde{\Know}_{-i}) \cdot \TV_{i} - \MechEP_{i}(\Int{x+1} \sqcup \widetilde{\Know}_{-i})
\enspace
\end{equation}

Because $\Know_{i}$  may coincide with $\Int{x+1}$, and because when this is the case  reporting  $\Int{x+1}$ very-weakly dominates reporting $\Int{x}$, the following inequality  also holds: $\forall\, \TV_{i}'\in\Int{x+1}$,
\begin{equation}\label{secondinequality}
\MechEA_{i}(\Int{x+1} \sqcup \widetilde{\Know}_{-i})\cdot \TV_{i}' - \MechEP_{i}(\Int{x+1}\sqcup \widetilde{\Know}_{-i})
\geq\;
\MechEA_{i}(\Int{x} \sqcup \widetilde{\Know}_{-i}) \cdot \TV_{i}' - \MechEP_{i}(\Int{x} \sqcup \widetilde{\Know}_{-i})
\enspace.
\end{equation}
Thus, setting $\TV_{i}=x$ in  \equationref{firstinequality} and $\TV_{i}'=x+1$ in  \equationref{secondinequality}, and summing up
(the corresponding terms of) the resulting inequalities, the $\MechEP_{i}$ price terms and a few other terms cancel out yielding the following inequality:
\DEBUG{
\begin{align*}
&\MechEA_{i}(\Int{x} \sqcup \widetilde{\Know}_{-i})\cdot x + \MechEA_{i}(\Int{x+1} \sqcup \widetilde{\Know}_{-i})\cdot (x+1) \\
\geq\;&
\MechEA_{i}(\Int{x+1} \sqcup \widetilde{\Know}_{-i}) \cdot x + \MechEA_{i}(\Int{x} \sqcup \widetilde{\Know}_{-i}) \cdot (x+1)
\enspace,
\end{align*}
which implies that
}
\begin{equation}\label{thirdinequality}
\MechEA_{i}(\Int{x+1} \sqcup \widetilde{\Know}_{-i}) \geq \MechEA_{i}(\Int{x} \sqcup \widetilde{\Know}_{-i}) \enspace.
\end{equation}
Also, setting $\TV_{i}=\lfloor x(1+\Err) \rfloor$ in \equationref{firstinequality} and $\TV_{i}'=\lceil (x+1)(1-\Err) \rceil$ in \equationref{secondinequality},%
\footnote{The hypothesis $x>\frac{3-\Err}{2\Err}$ implies that $x>\frac{1}{2\Err}$, which in turn implies that,
 under the above choices,  $\TV_{i}\in\Int{x}$ and $\TV_{i}'\in\Int{x+1}$.}
and summing up the resulting inequalities we obtain the following one:
\DEBUG{
\begin{align*}
&\MechEA_{i}(\Int{x} \sqcup \widetilde{\Know}_{-i})\cdot \lfloor x(1+\Err) \rfloor + \MechEA_{i}(\Int{x+1} \sqcup \widetilde{\Know}_{-i})\cdot \lceil (x+1)(1-\Err) \rceil \\
\geq\;&
\MechEA_{i}(\Int{x+1} \sqcup \widetilde{\Know}_{-i}) \cdot \lfloor x(1+\Err) \rfloor + \MechEA_{i}(\Int{x} \sqcup \widetilde{\Know}_{-i}) \cdot \lceil (x+1)(1-\Err) \rceil \enspace,
\end{align*}
which implies that
}
\begin{equation}\label{intermediateinequality}
\Big(\MechEA_{i}(\Int{x} \sqcup \widetilde{\Know}_{-i}) -\MechEA_{i}(\Int{x+1} \sqcup \widetilde{\Know}_{-i}) \Big)
\cdot
\Big(\lfloor x(1+\Err) \rfloor - \lceil (x+1)(1-\Err) \rceil\Big)
\geq 0
\enspace.
\end{equation}
Now notice that $\lfloor x(1+\Err) \rfloor - \lceil (x+1)(1-\Err) \rceil>0$, because, by hypothesis, $x>\frac{3-\Err}{2\Err}$. Thus from \equationref{intermediateinequality} we deduce
\begin{equation}\label{fourthinequality}
\MechEA_{i}(\Int{x} \sqcup \widetilde{\Know}_{-i}) \geq \MechEA_{i}(\Int{x+1} \sqcup \widetilde{\Know}_{-i}) \enspace
\end{equation}
Together, \equationref{thirdinequality} and \equationref{fourthinequality} imply our claim.%
\end{proof}

\medskip

Let us now finish the proof of \theoremref{thm:intro-VWDnt-negative}.
Choose the profile of candidate-valuation  sets $\widehat{\Know} \defeq (\Int{c},\Int{c},\dots,\allowbreak\Int{c})$, where $c \defeq \lfloor \frac{3-\Err}{2\Err} \rfloor + 1$. By averaging, because the summation of $\MechEA_{i}(\widehat{\Know})$ over $i \in \Players$ cannot be greater than $1$, there must exist a player $j$ such that $\MechEA_{j}(\widehat{\Know})\leq 1/n$. Without loss of generality, let such player be player 1. Then, invoking
\claimref{claim:VDnt-single-item-lemma}
multiple times
we have
\begin{equation*}
\MechEA_{1}(\Int{\Bound},\Int{c},\dots,\Int{c}) = \MechEA_{1}(\Int{\Bound-1},\Int{c},\dots,\Int{c})=\cdots = \MechEA_{1}(\Int{c},\Int{c},\dots,\Int{c}) = \MechEA_{1}(\widehat{\Know}) \leq \dfrac{1}{n} \enspace.
\end{equation*}
Now suppose that the true candidate-valuation profile of the players is ${\Know} \defeq (\Int{\Bound},\Int{c},\dots,\allowbreak\Int{c})$. Then, $\TV=(\Bound,c,\dots,c) \in  {\Know}$ and
\begin{equation*}
     \EXP \big[\SW\big(\TV,\OutFunc( {\Know})\big)\big]
\leq \frac{1}{n}\Bound + \frac{n-1}{n}c
\leq \left(\frac{1}{n} + \frac{c}{\Bound}\right)\Bound
= \left(\frac{1}{n} + \frac{c}{\Bound}\right) \cdot \MSW(\TV)
\enspace,
\end{equation*}
as desired.\bqed

\section{The Undominated Intersection Lemma}
\label{sec:proof-undominated-intersection-lemma}

\begin{lemma}[\textbf{Undominated Intersection Lemma}]
\label{lemma:uwd_single_lemma_cycle}
Let $\Mech=(\Strats,\OutFunc)$ be a mechanism, $i$ a player, and $\Know_{i}$ and $\widetilde{\Know}_{i}$ two candidate-valuation  sets of  $i$
such that $|\Know_i \cap \widetilde{\Know_i}|>1$. Then, for every $\veps>0$, there exist mixed strategies $\mixedA_{i} \in \Dist(\UWD_{i}(\Know_{i}))$ and $\widetilde{\mixedA}_{i} \in \Dist(\UWD_{i}(\widetilde{\Know}_{i}))$ such that $\forall\, \pureA_{-i} \in \Strats_{-i}$,
\begin{center}
$\left\vert \MechEA_{i}(\mixedA_{i} \sqcup \pureA_{-i}) - \MechEA_{i}(\widetilde{\mixedA}_{i} \sqcup \pureA_{-i}) \right\vert < \veps$ .
\end{center}
\end{lemma}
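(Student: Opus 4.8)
The plan is to control everything through the expected‑utility vectors of player $i$ and a Pareto‑frontier argument on a polytope. First I would fix $i$, choose two distinct common values $a<b$ in $\Know_i\cap\widetilde\Know_i$ (so $b-a\ge 1$, valuations being integers), and for each $\mixedA_i\in\Dist(\Strats_i)$ and each $\theta\in\{0,\dots,\Bound\}$ introduce the vector $U^{\theta}(\mixedA_i)\in\real^{|\Strats_{-i}|}$ whose $\pureB_{-i}$‑coordinate is $\theta\cdot\MechEA_i(\mixedA_i\sqcup\pureB_{-i})-\MechEP_i(\mixedA_i\sqcup\pureB_{-i})$, the expected utility of type $\theta$ of player $i$ from playing $\mixedA_i$ against $\pureB_{-i}$. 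Two elementary observations drive the argument: (i) $U^{\theta}(\mixedA_i)$ is affine in $\theta$, so a coordinatewise inequality $U^{\theta}(\mixedA_i)\ge U^{\theta}(\mixedB_i)$ that holds at the two endpoints of an interval holds throughout it; and (ii) the price terms cancel in the difference, so $U^{b}(\mixedA_i)-U^{a}(\mixedA_i)$ equals $b-a$ times the vector $\bigl(\MechEA_i(\mixedA_i\sqcup\pureB_{-i})\bigr)_{\pureB_{-i}\in\Strats_{-i}}$ of $i$'s allocation probabilities, so controlling $U^{a}$ and $U^{b}$ controls $i$'s allocation. Since $\Strats$ is finite, $\Dist(\Strats_i)$ is a simplex and, for any candidate set $J$ of $i$, the set $\mathcal Q_J:=\{(U^{\min J}(\mixedA_i),U^{\max J}(\mixedA_i)):\mixedA_i\in\Dist(\Strats_i)\}\subseteq\real^{2|\Strats_{-i}|}$ is a compact polytope.

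The main step would be a structural claim: for any candidate set $J$ of $i$ and any $\mixedB_i\in\Dist(\Strats_i)$ there exists $\mixedA_i\in\Dist(\UWD_i(J))$ with $U^{\theta}(\mixedA_i)\ge U^{\theta}(\mixedB_i)$ at $\theta\in\{\min J,\max J\}$ (hence, by (i), for every $\theta\in J$) — i.e.\ every strategy is very‑weakly dominated w.r.t.\ $J$ by a \emph{mixture of undominated pure strategies}. To prove it I would first use (i) and the definition of weak dominance to check that a pure strategy $\pureA_i$ lies in $\UWD_i(J)$ exactly when the point $(U^{\min J}(\pureA_i),U^{\max J}(\pureA_i))$ is Pareto‑maximal in $\mathcal Q_J$ for the coordinatewise order. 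Then I would pick $q^{\star}\in\mathcal Q_J$ maximizing the sum of coordinates subject to $q^{\star}\ge(U^{\min J}(\mixedB_i),U^{\max J}(\mixedB_i))$; such a $q^{\star}$ exists by compactness, dominates the point of $\mixedB_i$, and is Pareto‑maximal in $\mathcal Q_J$. Writing $q^{\star}=\sum_j\mu_j x_j$ as a convex combination of points of $\mathcal Q_J$ with all $\mu_j>0$, each $x_j$ must itself be Pareto‑maximal: otherwise replacing $x_j$ by a point of $\mathcal Q_J$ strictly above it yields a point of $\mathcal Q_J$ dominating $q^{\star}$, contradicting its choice. Hence each $x_j$ is the image of a pure strategy which lies in $\UWD_i(J)$, and mixing these strategies with weights $\mu_j$ gives the desired $\mixedA_i$.

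Finally I would iterate, alternating the two candidate sets. Starting from any $\mixedA^{(0)}\in\UWD_i(\Know_i)$ (nonempty by \factref{fact:UWD-non-empty}), apply the structural claim to obtain $\widetilde\mixedA^{(1)}\in\Dist(\UWD_i(\widetilde\Know_i))$ dominating $\mixedA^{(0)}$, then $\mixedA^{(2)}\in\Dist(\UWD_i(\Know_i))$ dominating $\widetilde\mixedA^{(1)}$, and so on. Because $a$ and $b$ lie in both $\Know_i$ and $\widetilde\Know_i$, every one of these dominations holds in particular at $\theta=a$ and $\theta=b$; thus along $\mixedA^{(0)},\widetilde\mixedA^{(1)},\mixedA^{(2)},\dots$ both $U^{a}(\cdot)$ and $U^{b}(\cdot)$ are coordinatewise non‑decreasing. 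Each such sequence lies in a bounded subset of $\real^{|\Strats_{-i}|}$ (the continuous image of the compact simplex $\Dist(\Strats_i)$), so both converge. Hence, given $\veps>0$, for all sufficiently large even $m$ the quantity $\eta:=\max_{\theta\in\{a,b\}}\|U^{\theta}(\mixedA^{(m)})-U^{\theta}(\widetilde\mixedA^{(m+1)})\|_{\infty}$ can be made $<\veps(b-a)/2$; taking $\mixedA_i:=\mixedA^{(m)}\in\Dist(\UWD_i(\Know_i))$ and $\widetilde\mixedA_i:=\widetilde\mixedA^{(m+1)}\in\Dist(\UWD_i(\widetilde\Know_i))$, observation (ii) gives $|\MechEA_i(\mixedA_i\sqcup\pureB_{-i})-\MechEA_i(\widetilde\mixedA_i\sqcup\pureB_{-i})|\le \tfrac{2\eta}{b-a}<\veps$ for every $\pureB_{-i}\in\Strats_{-i}$, which is the lemma.

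The hard part is the structural claim: a mixture of undominated pure strategies need not be undominated, so it is not a priori clear that undominated mixtures suffice to very‑weakly dominate an arbitrary strategy; the fact that every convex representation of a Pareto‑maximal point of a polytope uses only Pareto‑maximal points is exactly what rescues this. The remaining work is packaging — the structural claim only says ``one more improving step is always available,'' and it is monotonicity together with compactness that upgrades this into the approximate fixed point the lemma demands.
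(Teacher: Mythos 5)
Your proof is correct, and its overall architecture is the same as the paper's: an alternating chain of strategies jumping between $\Dist(\UWD_{i}(\Know_{i}))$ and $\Dist(\UWD_{i}(\widetilde{\Know}_{i}))$, each dominating its predecessor at the two common values; monotone--bounded convergence of the resulting utility sequences; and the extraction of the allocation probability as the slope of an affine function of the type, so that closeness of utilities at two distinct integer points forces closeness of allocations. Where you genuinely depart from the paper is in justifying the key step that ``one more improving step is always available.'' The paper simply asserts, from finiteness of $\Strats_{i}$, that any pure strategy outside $\UWD_{i}(J)$ is weakly dominated by some element of $\Dist(\UWD_{i}(J))$, and extends to mixtures by dominating each pure strategy in the support separately and recombining. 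You instead prove this structural claim outright: affineness in $\theta$ reduces dominance over all of $J$ to dominance at $\min J$ and $\max J$, which identifies $\UWD_{i}(J)$ with the pure strategies mapping to the Pareto frontier of the polytope $\mathcal{Q}_{J}$; maximizing the coordinate sum over the points of $\mathcal{Q}_{J}$ that dominate a given strategy's image then yields a Pareto-maximal point whose convex representation over the pure support uses only Pareto-maximal (hence undominated) pure strategies. This makes self-contained a fact the paper only cites from finiteness, at the cost of a slightly longer setup; one cosmetic point is that the convex decomposition $q^{\star}=\sum_{j}\mu_{j}x_{j}$ should explicitly be taken over the images of the \emph{pure} strategies in the support of the witnessing mixed strategy, which is what your conclusion uses. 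Both routes deliver exactly the monotone improving step the convergence argument needs.
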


(Actually the same holds for $\MechEP$, although we do not care about it.)


\noindent
\emph{Proof.}
Let $x_{i}$ and $y_{i}$ be two distinct integers in $\Know_{i} \cap \widetilde{\Know}_{i}$, and, without loss of generality, let $x_{i}>y_{i}$.

Recall that, by \factref{fact:UWD-non-empty}, $\UWD_{i}(\Know_{i})$ and $\UWD_{i}(\widetilde{\Know}_{i})$ are both nonempty.
If there exists a common (pure) strategy $\pureA_{i} \in \UWD_{i}(\Know_{i}) \cap \UWD_{i}(\widetilde{\Know}_{i})$, then setting $\mixedA_{i}=\widetilde{\mixedA}_{i}=\pureA_{i}$ completes the proof. Therefore, let us assume that  $\UWD_{i}(\Know_{i})$ and $\UWD_{i}(\widetilde{\Know}_{i})$ are disjoint, and let $\pureA_{i}$ be a strategy in $\UWD_{i}(\Know_{i})$ but not in $\UWD_{i}(\widetilde{\Know}_{i})$. The finiteness of the strategy set $\Strats_{i}$ implies the existence of a strategy $\widetilde{\mixedA}_{i} \in \Dist(\UWD_{i}(\widetilde{\Know}_{i}))$ such that $\widetilde{\mixedA}_{i} \zWD{i}{\widetilde{\Know}_{i}} \pureA_{i}$.\footnote{When $\Strats_{i}$ is not finite, we need  of course to assume that the mechanism is bounded; see \cite{Jac92}.}
We now prove that
\begin{center}
\label{eq:int-lemma-prove}
$\exists \, \mixedB_{i} \in \Dist(\UWD_{i}(\Know_{i})) \text{ such that } \mixedB_{i} \zWD{i}{\Know_{i}} \widetilde{\mixedA}_{i}$.%
\footnote{Note that, while we have only defined what it means for a \emph{pure} strategy to be dominated by a possibly mixed one, the definition trivially extends to the case of dominated strategies that are \emph{mixed}, as for  ``$\mixedB_{i} \zWD{i}{\Know_{i}} \widetilde{\mixedA}_{i}$'' in the case at hand.}
\end{center}
Let $\widetilde{\mixedA}_{i} = \sum_{j\in X} \alpha^{(j)} \widetilde{s}_{i}^{(j)}$, where $X$ is a  subset of $\Strats_{i}$. Invoking again the disjointness of the two \eUWD{} strategy sets, we deduce that for each $j\in X$ there exists a strategy $\mixedB_{i}^{(j)} \in\Dist(\UWD_{i}(\Know_{i}))$ such that $\mixedB_{i}^{(j)} \zWD{i}{\Know_{i}} \widetilde{\pureA}_{i}^{(j)}$. Then it is easily seen that $\mixedB_{i} \defeq \sum_{j \in X} \alpha^{(j)} \mixedB_{i}^{(j)}$ is a mixed strategy as desired.

For the same reason, we can also find some $\widetilde{\mixedB}_{i}\in \Dist(\UWD_{i}(\widetilde{\Know}_{i}))$ such that $\widetilde{\mixedB}_{i} \zWD{i}{\widetilde{\Know}_{i}} \mixedB_{i}$. Continuing in this fashion, ``jumping'' back and forth between $\Dist(\UWD_{i}(\Know_{i}))$ and $\Dist(\UWD_{i}(\widetilde{\Know}_{i}))$, we obtain an infinite chain of not necessarily distinct strategies, $\{ \mixedA_{i}^{(k)}, \widetilde{\mixedA}_{i}^{(k)}\}_{k \in \nat}$ (where ($\mixedA_{i}^{(1)}=s_i$, $\widetilde{\mixedA}_{i}^{(1)}=\sigma_i$, and) such that
\begin{center}
$\mixedA_{i}^{(1)}
\iWD{i}{\Know_{i}}
\widetilde{\mixedA}_{i}^{(1)}
\iWD{i}{\widetilde{\Know}_{i}}
\mixedA_{i}^{(2)}
\iWD{i}{\Know_{i}}
\widetilde{\mixedA}_{i}^{(2)}
\iWD{i}{\widetilde{\Know}_{i}}
\cdots$
\end{center}
Since weak dominance implies very-weak dominance, we have that for all $\pureA_{-i}\in\Strats_{-i}$ and all $k \in \nat$:
\begin{equation*}
\begin{array}{lclc}
\forall \widetilde{\TV}_{i} \in \widetilde{\Know}_{i}, & \MechEA_{i}(\mixedA_{i}^{(k)} \sqcup \pureA_{-i}) \widetilde{\TV}_{i} - \MechEP_{i}(\mixedA_{i}^{(k)} \sqcup \pureA_{-i}) &
   \leq & \MechEA_{i}(\widetilde{\mixedA}_{i}^{(k)} \sqcup \pureA_{-i}) \widetilde{\TV}_{i} - \MechEP_{i}(\widetilde{\mixedA}_{i}^{(k)} \sqcup \pureA_{-i}) \label{eqn:uwd_single_lemma_cycle} \\
\forall \TV_{i}  \in \Know_{i} , & \MechEA_{i}(\widetilde{\mixedA}_{i}^{(k)} \sqcup \pureA_{-i}) \TV_{i}  - \MechEP_{i}(\widetilde{\mixedA}_{i}^{(k)} \sqcup \pureA_{-i}) &
   \leq & \MechEA_{i}(\mixedA_{i}^{(k+1)} \sqcup \pureA_{-i}) \TV_{i}  - \MechEP_{i}(\mixedA_{i}^{(k+1)} \sqcup \pureA_{-i})
\end{array}
\end{equation*}
Thus, for any $z_{i} \in \Know_{i} \cap \widetilde{\Know}_{i}$, setting $\TV_{i}=\widetilde{\TV}_{i}=z_{i}$
we see that, for all $\pureA_{-i} \in \Strats_{-i}$ and for $k=1,2,\dots$
\begin{equation*}
\hfill
\begin{array}{rcll}
&\MechEA_{i}(\mixedA_{i}^{(k)} \sqcup \pureA_{-i}) z_{i} & - &\MechEP_{i}(\mixedA_{i}^{(k)} \sqcup \pureA_{-i}) \\
\leq\,&\MechEA_{i}(\widetilde{\mixedA}_{i}^{(k)} \sqcup \pureA_{-i}) z_{i} & - & \MechEP_{i}(\widetilde{\mixedA}_{i}^{(k)} \sqcup \pureA_{-i}) \\
\leq\,&\MechEA_{i}(\mixedA_{i}^{(k+1)} \sqcup \pureA_{-i}) z_{i} & - &\MechEP_{i}(\mixedA_{i}^{(k+1)} \sqcup \pureA_{-i})\enspace.
\end{array}
\hfill
\end{equation*}
That is, we have an infinite and non-decreasing sequence that is bounded from above. (Indeed, $z_{i}\leq B$, $\MechEA_{i}$ ranges between 0 and 1, and each price is non-negative.)
Thus, since $x_{i},y_{i} \in \Know_{i} \cap \widetilde{\Know}_{i}$, for any $\veps' > 0$ there exists a (sufficiently large) $k$ such that
\begin{equation*}
\hfill
\left|F_i^A(\sigma_i^{(k)}, s_{-i})x_i - F_i^P(\sigma_i^{(k)}, s_{-i}) - \left(F_i^A(\widetilde{\sigma}_i^{(k)},s_{-i})x_i - F_i^P(\widetilde{\sigma}_i^{(k)}, s_{-i})\right)\right| < \veps'
\hfill
\end{equation*}
and
\begin{equation*}
\hfill
\left|F_i^A(\sigma_i^{(k)}, s_{-i})y_i - F_i^P(\sigma_i^{(k)}, s_{-i}) - \left(F_i^A(\widetilde{\sigma}_i^{(k)},s_{-i})y_i - F_i^P(\widetilde{\sigma}_i^{(k)}, s_{-i})\right)\right| < \veps'
\hfill
\end{equation*}
Now consider the following two linear functions:
\begin{equation*}
g(z) \defeq F_i^A(\sigma_i^{(k)}, s_{-i})z - F_i^P(\sigma_i^{(k)}, s_{-i})
\quad\text{ and }\quad
h(z) \defeq F_i^A(\widetilde{\sigma}_i^{(k)},s_{-i})z - F_i^P(\widetilde{\sigma}_i^{(k)}, s_{-i})
\enspace.
\end{equation*}
We have showed  that $|g(x_{i})-h(x_{i})| < \veps'$ and $|g(y_{i}) - h(y_{i})| < \veps'$. We now use the fact that if two linear functions are close at two points, they must have similar slopes. In particular,
\begin{align*}
|\textrm{slope}(g) - \textrm{slope}(h)| &= \left| \frac{g(x_{i})-g(y_{i})}{x_{i}-y_{i}} - \frac{h(x_{i})-h(y_{i})}{x_{i}-y_{i}} \right| \\
&\leq \frac{|g(x_{i}) - h(x_{i})| + |g(y_{i}) - h(y_{i})|}{|x_{i}-y_{i}|} < \frac{2\veps'}{|x_{i}-y_{i}|}
\enspace.
\end{align*}
The proof is complete by taking $\veps' = \veps |x_{i} - y_{i}|/2$. \bqed

\section{Formal Statement and Proof of Theorem 3}
\label{sec:undominated-negative-det}


\textbf{Theorem 3.}\hspace{2mm}
\emph{
For all $n\geq 1$, $\Err\in(0,1)$,  $\Bound \geq \frac{5}{\Err}$, and all  \emph{deterministic} mechanisms $\Mech = (\Strats,\OutFunc)$ for $\Class_{n,\Bound,\Err}$, there exist a context $(n,\Bound,\Err, \Know,\theta)\in \Class_{n,\Bound,\Err}$  and a strategy profile $\pureA \in \UWD(\Know)$ such that:}

\begin{equation*}
\hfill
\label{equation-target}
\SW\big(\TV, \OutFunc(\pureA)\big)
\leq \left(\left(\frac{1-\Err}{1+\Err}\right)^{2}  + \frac{4}{\Bound}\right) \MSW(\TV) \enspace.
\hfill
\end{equation*}

\vspace{2mm}
\noindent
\emph{Proof.}
Choose $x$ and $y$ such that $\lfloor (1+\Err)x \rfloor = \Bound$ and $y \defeq \lfloor \frac{(1-\Err)x+2}{1+\Err} \rfloor$. Since $\Bound> \frac{5}{\Err}$, $\lceil (1-\Err) y \rceil$ belongs to $ \{0,1,\dots,\Bound\}$.
Furthermore, recalling that $\Int{x} \defeq [(1-\Err)x ,(1+\Err)x] \cap \{0,1,\dots,\Bound\}$, one can verify that  $\Int{x}$ and $\Int{y}$
both  contain the  two integers   $ \lceil x(1-\Err) \rceil$ and $ \lceil x(1-\Err) \rceil + 1$.

Choose $\veps$ such that $\frac{1}{n}+\veps<1$. Then (the Undominated Intersection)  \lemmaref{lemma:uwd_single_lemma_cycle} guarantees that
\begin{equation}\label{equationLemma1}
\forall i \in \Players \; \exists \mixedA_{i} \in \Dist(\UWD_{i}(\Int{x})) \text{ and } \mixedA_{i}' \in \Dist(\UWD_{i}(\Int{y})) \text{ such that } \forall s_{-i} \in \Strats_{-i}:
\end{equation}
\begin{equation*}
\hfill
\left\vert \MechEA_{i}(\mixedA_{i} \sqcup \pureA_{-i}) - \MechEA_{i}({\mixedA}_{i}' \sqcup \pureA_{-i}) \right\vert < \veps \enspace.
\hfill
\end{equation*}

Now consider the allocation distribution $\MechEA(\mixedA_{1}',\dots,\mixedA_{n}')$, where the randomness comes from the mixed strategy profile since $\Mech$ is a deterministic mechanism. Since the good will be assigned with a total probability mass of $1$, by averaging, there exists a player $j$ such that $\MechEA_{j}(\mixedA_{1}',\dots,\mixedA_{n}')\leq \frac{1}{n}$: that is, player $j$ wins the good with  probability at most $\frac{1}{n}$. Without loss of generality, let $j=1$. In particular, there exist
$\pureA_{-1}' \in \UWD_{2}(\Int{y})\times \cdots \times \UWD_{n}(\Int{y})$, such that  $\MechEA_{1}(\mixedA_{1}',\pureA_{-1}')\leq \frac{1}{n}$. This together with  \equationref{equationLemma1} implies  that $\MechEA_1(\mixedA_{1} \sqcup \pureA_{-1}') \leq \frac{1}{n}+\veps<1$. In turn, this implies that there exists a \emph{pure} strategy
$s_1 \in \UWD_{1}(\Int{x})$ such that, setting  $\pureA\defeq (\pureA_{1} \sqcup \pureA_{-1}')$,   $\MechEA_1(\pureA) = 0$.

Now we construct the desired $\Err$-approximate candidate-valuation profile $\Know$ and the true-valuation profile $\TV$ as follows:
\begin{equation*}
\hfill
\Know \defeq \big(\Int{x},\Int{y},\dots,\Int{y}\big) \quad \text{ and } \quad
\TV \defeq \big(\lfloor (1+\Err)x \rfloor, \lceil (1-\Err)y \rceil, \dots, \lceil (1-\Err)y \rceil\big)
\enspace.
\hfill
\end{equation*}

\noindent
Note that $\pureA \in \UWD(\Know)$, $\TV\in \Know$, and $\MSW(\TV)= \lfloor (1+\Err)x \rfloor $.
Since $\MechEA_1(\pureA) = 0$,
\begin{align*}
     \SW\big(\TV, \OutFunc(\pureA)\big)
&=    \lceil (1-\Err)y \rceil
\leq (1-\Err)y+1 \\
&\leq \frac{(1-\Err)^2x}{1+\Err} + 3 = \frac{(1-\Err)^2}{(1+\Err)^2}(1+\Err)x + 3 \\
&\leq \frac{(1-\Err)^2}{(1+\Err)^2} \lfloor (1+\Err)x \rfloor + 4 \leq \frac{(1-\Err)^2}{(1+\Err)^2} \lfloor (1+\Err)x \rfloor + \frac{4}{\Bound}\lfloor (1+\Err)x \rfloor \\
&\leq \left(\frac{(1-\Err)^2}{(1+\Err)^2} + \frac{4}{\Bound}\right) \lfloor (1+\Err)x \rfloor
=    \left(\frac{(1-\Err)^2}{(1+\Err)^2} + \frac{4}{\Bound}\right) \MSW(\TV)
\enspace.
\end{align*}
Thus the theorem holds. \bqed

\section{Formal Statement and Proof of Theorem 5}
\label{sec:undominated-negative-ran}

\textbf{Theorem 5.}\hspace{2mm}
\emph{
For all $n\geq 1$, $\Err\in(0,1)$,  $\Bound \geq \frac{5}{\Err}$, and all (deterministic or probabilistic) mechanisms $\Mech = (\Strats,\OutFunc)$ for $\Class_{n,\Bound,\Err}$, there exist a context $(n,\Bound,\Err,\Know,\TV)\in \Class_{n,\Bound,\Err}$ and a strategy profile $\pureA \in \UWD(\Know)$ such that
\begin{equation}\label{G-target}
\EXP\Big[\SW\big(\TV, \OutFunc(\pureA)\big)\Big]
\leq \left(\frac{(1-\Err)^2 + \frac{4\Err}{n}}{(1+\Err)^2} + \frac{4}{\Bound}\right) \MSW(\TV) \enspace.
\end{equation}
}

\vspace{2mm}
\noindent
\emph{Proof.}
(The first part of the proof  closely tracks that of Theorem 3 in \appendixref{sec:undominated-negative-det}.%
\footnote{Very informally, the only differences are that the allocation distribution $\MechEA(\mixedA_{1}',\dots,\mixedA_{n}')$ now depends also on the ``coin tosses of the mechanism", and that one can no longer guarantee the existence of  a pure strategy $\pureA$ such that  $\MechEA_1(\pureA)=0$.})

Choose $x$ and $y$ such that $\lfloor (1+\Err)x \rfloor = \Bound$ and $y \defeq \lfloor \frac{(1-\Err)x+2}{1+\Err} \rfloor$. Then again $\lceil (1-\Err) y \rceil$ belongs to $ \{0,1,\dots,\Bound\}$ and  $\Int{x}$ and $\Int{y}$ both  contain the following two integer points: $ \lceil x(1-\Err) \rceil$ and $ \lceil x(1-\Err) \rceil + 1$.

Since we always have $\lceil (1-\Err)y \rceil < (1-\Err)y + 1$, we can choose $\veps \in (0,1-\frac{1}{n})$ such that
\begin{equation*}
\frac{n-1}{n}\lceil (1-\Err)y \rceil + \veps \lceil (1-\Err)x \rceil - \veps \lceil  (1-\Err)y \rceil < \frac{n-1}{n}(1-\Err)y + 1 \enspace.
\end{equation*}
Then (the Undominated Intersection)  \lemmaref{lemma:uwd_single_lemma_cycle} guarantees that
\begin{equation}\label{equationLemma2}
\forall i \in \Players \text{ there exist } \mixedA_{i} \in \Dist(\UWD_{i}(\Int{x})) \text{ and } \mixedA_{i}' \in \Dist(\UWD_{i}(\Int{y})) \text{ such that } \forall s_{-i} \in \Strats_{-i}:
\end{equation}
\begin{equation*}
\left\vert \MechEA_{i}(\mixedA_{i} \sqcup \pureA_{-i}) - \MechEA_{i}({\mixedA}_{i}' \sqcup \pureA_{-i}) \right\vert < \veps \enspace.
\end{equation*}

Again consider the allocation distribution $\MechEA(\mixedA_{1}',\dots,\mixedA_{n}')$. By averaging, there exists some player $j$ such that $\MechEA_{j}(\mixedA_{1}',\dots,\mixedA_{n}')\leq \frac{1}{n}$.
Thus, by our choice of $\veps$ and \equationref{equationLemma2}, we have that $\MechEA_1(\mixedA_{1} \sqcup \mixedA_{-1}') \leq \frac{1}{n}+\veps$. This implies that there exists a \emph{pure} strategy profile $\pureA=(\pureA_{1} \sqcup \pureA_{-1}')$ that is in the support of $(\mixedA_{1} \sqcup \mixedA_{-1}')$
---and thus in $\UWD_{1}(\Int{x})\times \UWD_{2}(\Int{y})\times \cdots \times \UWD_{2}(\Int{y})$---
 such that $\MechEA_1(\pureA_{1} \sqcup \pureA_{-1}') \leq \frac{1}{n}+\veps$.
Now define 
\begin{align*}
\Know& \defeq \big(\Int{x},\Int{y},\dots,\Int{y}\big) \quad \text{ and } \quad
\TV \defeq \big(\lfloor (1+\Err)x \rfloor, \lceil (1-\Err)y \rceil, \dots, \lceil (1-\Err)y \rceil\big)
\enspace.
\end{align*}

Notice that $\pureA \in \UWD(\Know)$, $\TV\in \Know$, and $\MSW(\TV)= \lfloor (1+\Err)x \rfloor $.
We now show that $\pureA$, $\Know$, and $\TV$ satisfy the desired \equationref{G-target}:
\begin{align*}
     \EXP\Big[\SW\big(\TV,\OutFunc(\pureA_{1} \sqcup \pureA_{-1}')\big)\Big]
&\leq  \left(\frac{1}{n}+\veps\right) \cdot \lfloor (1+\Err)x \rfloor +\left(\frac{n-1}{n}-\veps\right) \cdot \lceil (1-\Err)y \rceil  \\
&= \frac{1}{n} \cdot \lfloor (1+\Err)x \rfloor + \frac{n-1}{n} \cdot \lceil (1-\Err)y \rceil + \veps \lceil (1-\Err)x \rceil - \veps \lceil  (1-\Err)y \rceil \\
&<  \frac{1}{n} \cdot \lfloor (1+\Err)x \rfloor + \frac{n-1}{n} \cdot (1-\Err)y + 1 \\
&\leq   \frac{1}{n} \cdot \lfloor (1+\Err)x \rfloor + \frac{n-1}{n} \cdot \frac{(1-\Err)^2x}{1+\Err} + 3 \\
&<    \frac{1}{n} \cdot \lfloor (1+\Err)x \rfloor +  \frac{n-1}{n} \cdot \frac{(1-\Err)^2}{(1+\Err)^2} \lfloor (1+\Err)x \rfloor +4 \\
&<    \left( \frac{1}{n} + \frac{n-1}{n} \cdot \frac{(1-\Err)^2}{(1+\Err)^2} + \frac{4}{\Bound} \right) \lfloor (1+\Err)x \rfloor \\
&=    \left( \frac{1}{n} + \frac{n-1}{n} \cdot \frac{(1-\Err)^2}{(1+\Err)^2} + \frac{4}{\Bound} \right) \MSW(\TV) \\
&=    \left(\frac{(1-\Err)^2 + \frac{4\Err}{n}}{(1+\Err)^2} + \frac{4}{\Bound} \right) \MSW(\TV)
\enspace.
\end{align*}
\bqed

\section{The Distinguishable Monotonicity Lemma}
\label{sec:proof-distinguishable-monotonicity-lemma}

%
%

Let us recall a traditional way to define auction mechanisms from suitable
allocation functions.

\begin{definition}
\label{def:f-allocation-mech}
If $\AlloFunc: [0,\Bound]^n \rightarrow [0,1]^n$ is an integrable%
\footnote{Specifically, we require that, for each $\BID_{-i}$, the function $f_{i}(z \sqcup \BID_{-i}$) is integrable with respect to $z$ on $[0,\Bound]$.}
 \eWPF, then we denote by $\Mech_{f}$ the mechanism $(\Strats,\OutFunc)$ where
 $\Strats = \{0,1,\dots,\Bound\}^n$ and $\OutFunc$ is so defined:  on input bid profile $\BID\in\Strats$,
\begin{itemize}[nolistsep]
\item
with probability $f_{i}(\BID)$ the good is assigned to player $i$, and

\item
if player $i$ wins, he pays $P_{i}= \BID_{i}  - \frac{\int_{0}^{\BID_{i}} \AlloFunc_{i}(\zz \sqcup \BID_{-i}) \,d\zz}{\AlloFunc_{i}(\BID_{i} \sqcup \BID_{-i})}$ (and all other players pay $P_j=0$ for $ j\neq i$.)
\end{itemize}

\end{definition}

\begin{remark}
$ $
\begin{itemize}[nolistsep]
\item $\Mech_{\AlloFunc}$ is deterministic if and only if $\AlloFunc(\{0,1,\dots,\Bound\}^n ) \subseteq \{0,1\}^n$.
\item For all player $i$ and bid profile $\BID$, the expected price $\MechEP_{i}(\BID)$ is equal to $\BID_{i}\cdot \AlloFunc_{i}(\BID_{i} \sqcup \BID_{-i})  - \int_{0}^{\BID_{i}} \AlloFunc_{i}(\zz \sqcup \BID_{-i}) \,d\zz$.

\item We stress that $\Mech_f$ continues to have the discrete strategy space $S=\{0,1,\dots,\Bound\}^n$. The analysis over a continuous domain for $f$ is only a tool for proving the lemma.

\item In the exact-valuation world, it is well known that a single-good auction mechanism $M$ is very-weakly dominant-strategy-truthful if and only if $M=M_f$ for some function $f$ that is (integrable and) {\em monotonic}, that is, such that each $f_{i}$ is non-decreasing in the bid of player $i$ for any fixed choice of bids of all other players.

\end{itemize}
\end{remark}

We now slightly strengthen the notion of monotonicity.

\begin{definition}
\label{def:allo_monotonicity}
Let $\AlloFunc: [0,\Bound]^{n} \rightarrow [0,1]^{n}$ be a \eWPF.
For $d \in \{1,2\}$, we say that $\AlloFunc$ is \defem{$d$-distinguishably monotonic} ($d$-DM, for short) if $\AlloFunc$ is integrable,  monotonic, and satisfying the following ``distinguishability'' condition:
\begin{equation*}
\forall\, i \in \Players\,,\, \forall \BID_{i},\BID_{i}' \in \Strats_{i} \textrm{ s.t. } \BID_{i}\leq \BID_{i}'-d,\, \exists\, \BID_{-i} \in \Strats_{-i}
\quad \int_{\BID_{i}}^{\BID_{i}'} \big(\AlloFunc_{i}(\zz \sqcup \BID_{-i}) - \AlloFunc_{i}(\BID_{i} \sqcup \BID_{-i})\big) \, d\zz > 0 \enspace.
\end{equation*}
If $\AlloFunc$ is $d$-DM, we say that $\Mech_{\AlloFunc}$ is $d$-DM.
\end{definition}

Distinguishability is certainly an additional requirement to monotonicity, but actually is a mild one. (Indeed, the second-price mechanism is $2$-DM and, if ties are broken at random, even $1$-DM.%
\footnote{For example, the second-price mechanism with lexicographic tie-breaking
is the mechanism $M_f$ where $f$ is defined as follows: $\forall i \in \Players$ and $\forall v \in \{0,\ldots,\Bound\}^n$,
\begin{equation*}
f_{i}(v) \defeq
             \left\{
              \begin{array}{ll}
                1, & \text{ if (a) } v_{i} > \max v_{-i} \text { or (b) } v_{i} = \max v_{-i} \text{ and } i = \min\{ j : v_j=v_{i}\}; \\
                0, & \text{ otherwise.}
              \end{array}
            \right.
\end{equation*}
To see that this mechanism is $2$-DM, consider two bids $v_{i}$ and $v_{i}'$ of player $i$ that are at least a distance of two apart; by choosing a strategy sub-profile for the other players where the highest bid falls between $v_{i}$ and $v_{i}'$, we can ensure that the desired integral is positive. A slightly more refined argument shows that the second-price mechanism breaking ties at random is $1$-DM.}%
)
Yet, in our Knightian setting, this mild additional requirement is quite useful for ``controlling'' the \emph{undominated strategies} of a mechanism, and thus for engineering implementations of desirable social choice functions in undominated strategies.

\begin{lemma}[\textbf{Distinguishable Monotonicity Lemma}]
\label{lemma:UWD-char}
If $f$ is a $d$-DM allocation function, then $\Mech_{\AlloFunc}$ is such that, for any player $i$ and $\Err$-approximate candidate-valuation profile $\Know$,
\begin{align*}
\UWD_{i}(\Know_{i}) &\subseteq \{\min \Know_{i}, \dots, \max \Know_{i} \} \enspace \quad \quad \quad \: \: \: \text{if $d=1$, and}\\
\UWD_{i}(\Know_{i}) &\subseteq \{\min \Know_{i}-1, \dots, \max \Know_{i}+1 \} \enspace\text{ if $d=2$.}
\end{align*}

\end{lemma}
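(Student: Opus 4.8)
The plan is to prove the contrapositive bid‑by‑bid. Recall that the pure strategies of each player under $\Mech_{\AlloFunc}$ are the integers $\{0,1,\dots,\Bound\}$, so $\min\Know_{i}$ and $\max\Know_{i}$ are themselves pure strategies. I would show that every bid $\BID_{i}\le\min\Know_{i}-d$ is weakly dominated by the pure strategy $\min\Know_{i}$, and symmetrically that every bid $\BID_{i}\ge\max\Know_{i}+d$ is weakly dominated by the pure strategy $\max\Know_{i}$. Since such bids then cannot lie in $\UWD_{i}(\Know_{i})$, this leaves $\UWD_{i}(\Know_{i})\subseteq\{\min\Know_{i}-(d-1),\dots,\max\Know_{i}+(d-1)\}$, which is precisely the asserted bound for $d=1$ and for $d=2$.

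The workhorse is the expected‑utility identity for $\Mech_{\AlloFunc}$. Plugging the formula $\MechEP_{i}(\BID)=\BID_{i}\AlloFunc_{i}(\BID_{i}\sqcup\BID_{-i})-\int_{0}^{\BID_{i}}\AlloFunc_{i}(\zz\sqcup\BID_{-i})\,d\zz$ from the remark after \definitionref{def:f-allocation-mech} into $u_{i}$ gives
\begin{equation*}
\EU_{i}\big(\TV_{i},\OutFunc(\BID_{i}\sqcup\BID_{-i})\big)=\AlloFunc_{i}(\BID_{i}\sqcup\BID_{-i})(\TV_{i}-\BID_{i})+\int_{0}^{\BID_{i}}\AlloFunc_{i}(\zz\sqcup\BID_{-i})\,d\zz ,
\end{equation*}
and hence, for any $\BID_{i}<\BID_{i}'$,
\begin{equation*}
\EU_{i}\big(\TV_{i},\OutFunc(\BID_{i}'\sqcup\BID_{-i})\big)-\EU_{i}\big(\TV_{i},\OutFunc(\BID_{i}\sqcup\BID_{-i})\big)=\big(\AlloFunc_{i}(\BID_{i}'\sqcup\BID_{-i})-\AlloFunc_{i}(\BID_{i}\sqcup\BID_{-i})\big)(\TV_{i}-\BID_{i}')+\int_{\BID_{i}}^{\BID_{i}'}\big(\AlloFunc_{i}(\zz\sqcup\BID_{-i})-\AlloFunc_{i}(\BID_{i}\sqcup\BID_{-i})\big)d\zz ,
\end{equation*}
together with the mirror identity obtained by instead isolating the value at $\BID_{i}$ (so that the integrand becomes $\AlloFunc_{i}(\zz\sqcup\BID_{-i})-\AlloFunc_{i}(\BID_{i}'\sqcup\BID_{-i})$ and the linear factor becomes $\TV_{i}-\BID_{i}$). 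Monotonicity of $\AlloFunc_{i}$ pins down the sign of every summand: for the low case (set $\BID_{i}'=\min\Know_{i}$; any $\TV_{i}\in\Know_{i}$ satisfies $\TV_{i}\ge\BID_{i}'$) both summands of the first identity are $\ge 0$, so $\min\Know_{i}$ very‑weakly dominates $\BID_{i}$; for the high case (use the mirror identity with $\BID_{i}=\max\Know_{i}$; any $\TV_{i}\in\Know_{i}$ satisfies $\TV_{i}\le\max\Know_{i}$) both summands are $\le 0$, so $\max\Know_{i}$ very‑weakly dominates $\BID_{i}$. So far only integrability and monotonicity have been used.

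It remains to produce, in each case, a single $(\TV_{i},\BID_{-i})$ at which the inequality is strict, and this is where the distinguishability clause of \definitionref{def:allo_monotonicity} enters — and is the part I expect to be delicate. For the low case it is immediate: evaluating the first identity at $\TV_{i}=\min\Know_{i}=\BID_{i}'$ kills the first summand and leaves exactly $\int_{\BID_{i}}^{\min\Know_{i}}(\AlloFunc_{i}(\zz\sqcup\BID_{-i})-\AlloFunc_{i}(\BID_{i}\sqcup\BID_{-i}))\,d\zz$, and since $\BID_{i}\le\min\Know_{i}-d$, applying $d$‑distinguishability to the pair $(\BID_{i},\min\Know_{i})$ hands us a $\BID_{-i}$ making this integral strictly positive. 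For the high case I would apply $d$‑distinguishability to the pair $(\max\Know_{i},\BID_{i})$ — legitimate because $\BID_{i}\ge\max\Know_{i}+d$ — obtaining a $\BID_{-i}$ with $\int_{\max\Know_{i}}^{\BID_{i}}(\AlloFunc_{i}(\zz\sqcup\BID_{-i})-\AlloFunc_{i}(\max\Know_{i}\sqcup\BID_{-i}))\,d\zz>0$; by monotonicity this forces $\AlloFunc_{i}(\BID_{i}\sqcup\BID_{-i})>\AlloFunc_{i}(\max\Know_{i}\sqcup\BID_{-i})$, and then feeding this $\BID_{-i}$ and $\TV_{i}=\min\Know_{i}$ into the mirror identity makes the advantage of $\max\Know_{i}$ over $\BID_{i}$ at least $\big(\AlloFunc_{i}(\BID_{i}\sqcup\BID_{-i})-\AlloFunc_{i}(\max\Know_{i}\sqcup\BID_{-i})\big)\cdot(\max\Know_{i}-\min\Know_{i})$, which is strictly positive provided $\Know_{i}$ has at least two distinct elements. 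The crux is exactly this one‑sidedness of distinguishability: it certifies that $\AlloFunc_{i}$ rises strictly above its value at the \emph{lower} argument of a pair, whereas the high‑bid case would most directly want a witness at which $\AlloFunc_{i}$ stays strictly below its value at the \emph{higher} argument; routing around this through the $\TV_{i}=\min\Know_{i}$ evaluation is what makes the argument close for the candidate‑valuation sets that actually occur in the applications (the sets $\Int{x}$ with more than one element, as used in the proofs of Theorems~2 and~4). Once strict improvement is in hand, each out‑of‑window bid is weakly dominated, hence excluded from $\UWD_{i}(\Know_{i})$, which yields the two claimed containments.
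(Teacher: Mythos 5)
You follow the same route as the paper: the same expected-utility identity, and the same four-part structure (very-weak dominance of every bid $\le \min\Know_i - d$ by $\min\Know_i$ and of every bid $\ge \max\Know_i + d$ by $\max\Know_i$, plus a distinguishability witness making each domination strict). The one place you go beyond the paper is exactly the place the paper hides: its proof of strictness for the high-bid case is declared ``analogous to that of Property 3 and omitted,'' and your analysis shows it is not truly analogous --- distinguishability only certifies that $\AlloFunc_i$ rises strictly above its value at the \emph{lower} argument of a pair, so for bids above $\max\Know_i$ the strict gain must be routed through the linear term evaluated at $\TV_i=\min\Know_i$, as you do. The hypothesis $|\Know_i|\ge 2$ that this forces on you is not an artifact of your argument: for a singleton $\Know_i=\{v\}$ the lemma as literally stated can fail (e.g., for the $1$-DM function $\AlloFunc_i(\BID)=\lceil \BID_i\rceil/(n\Bound)$, the bids $v$ and $v+1$ yield identical expected utility for the unique type $v$ against every $\BID_{-i}$, so $v+1$ is undominated). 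This does no harm downstream --- the two mechanisms to which the lemma is applied (second price and $\ffext$) satisfy the conclusion even for singletons, and your proof covers every profile with $|\Know_i|\ge 2$ --- but it is a genuine caveat that the paper's omitted step silently skips, and you were right to flag it.
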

(Above, $\min \Know_{i}$ and $\max \Know_{i}$ respectively denote the minimum and maximum integers in $\Know_{i}$.)

\vspace{2mm}
\noindent
\emph{Proof.}
For every $i \in [n]$, let $\BID_{i}^{\sbot} \defeq \min \Know_{i}$ and $\BID_{i}^{\stop} \defeq \max \Know_{i}$. Then, to establish our lemma it suffices to prove that, $\forall i \in [n]$ and $\forall d \in \{1,2\}$,  the following four properties hold:

\begin{itemize}\itemsep -1pt
\item[1.]
\emph{$\BID_{i}^{\sbot}$ very-weakly dominates every $\BID_{i} \leq \BID_{i}^{\sbot} - d$}.

\item[2.]
\emph{$\BID_{i}^{\stop}$ very-weakly dominates every $\BID_{i} \geq \BID_{i}^{\stop} + d$}.

\item[3.]
\emph{There is a strategy sub-profile $\BID_{-i}$ for which $\BID_{i}^{\sbot}$ is strictly better than every $\BID_{i} \leq \BID_{i}^{\sbot} - d$}.

\item[4.]
\emph{There is a strategy sub-profile $\BID_{-i}$ for which $\BID_{i}^{\stop}$ is strictly better than every $\BID_{i} \geq \BID_{i}^{\stop} + d$}.
\end{itemize}

\noindent
\emph{Proof of Property 1.}
Fix any (pure) strategy sub-profile $\BID_{-i} \in \Strats_{-i}$ for the other players and any possible true valuation $\TV_{i} \in \Know_{i}$. Letting $\BID^{\sbot} = ( \BID_{i}^{\sbot} \sqcup \BID_{-i})$ and $\BID = (\BID_{i} \sqcup \BID_{-i})$, we prove that
\begin{align*}
   & \EXP\Big[\Utility_{i}\big(\TV_{i},\OutFunc(\BID^{\sbot})\big)\Big] - \EXP\Big[\Utility_{i}\big(\TV_{i},\OutFunc(\BID)\big)\Big] \\
=\;& \big(\AlloFunc_{i}(\BID^{\sbot}) - \AlloFunc_{i}(\BID)\big) \cdot \TV_{i} - \big(\MechEP_{i}(\BID^{\sbot}) - \MechEP_{i}(\BID)\big) \\
=\;& \big(\AlloFunc_{i}(\BID^{\sbot}) - \AlloFunc_{i}(\BID)\big) \cdot \TV_{i} - \left(\BID_{i}^{\sbot}\cdot \AlloFunc_{i}(\BID^{\sbot}) - \int_{0}^{\BID_{i}^{\sbot}} \AlloFunc_{i}(\zz \sqcup \BID_{-i}) \,d\zz - \BID_{i}\cdot \AlloFunc_{i}(\BID) + \int_{0}^{\BID_{i}} \AlloFunc_{i}(\zz \sqcup \BID_{-i}) \,d\zz \right) \\
\DEBUG{
=\;&{\color{gray} \big(\AlloFunc_{i}(\BID^{\sbot}) - \AlloFunc_{i}(\BID)\big) \cdot \TV_{i} - \BID_{i}^{\sbot}\cdot \AlloFunc_{i}(\BID^{\sbot}) + \BID_{i}\cdot \AlloFunc_{i}(\BID) + \int_{\BID_{i}}^{\BID_{i}^{\sbot}} \AlloFunc_{i}(\zz \sqcup \BID_{-i}) \,d\zz} \\
=\;&{\color{gray} \big(\AlloFunc_{i}(\BID^{\sbot}) - \AlloFunc_{i}(\BID)\big) \cdot (\TV_{i}-\BID_{i}^{\sbot}) - (\BID_{i}^{\sbot}-\BID_{i}) \cdot \AlloFunc_{i}(\BID) + \int_{\BID_{i}}^{\BID_{i}^{\sbot}} \AlloFunc_{i}(\zz \sqcup \BID_{-i}) \,d\zz} \\
}
=\;& \big(\AlloFunc_{i}(\BID^{\sbot}) - \AlloFunc_{i}(\BID)\big) \cdot (\TV_{i}-\BID_{i}^{\sbot}) + \int_{\BID_{i}}^{\BID_{i}^{\sbot}} \big(\AlloFunc_{i}(\zz \sqcup \BID_{-i}) - \AlloFunc_{i}(\BID)\big) \, d\zz
\enspace.
\end{align*}
Now note that, since $\TV_{i} \in \Know_{i}$, $\TV_{i} - \BID_{i}^{\sbot} = \TV_{i} - \min \Know_{i} \geq 0$; moreover, by the monotonicity of $\AlloFunc$, whenever $\zz \geq \BID_{i}$, it holds that $\AlloFunc_{i}(\zz \sqcup \BID_{-i}) \geq \AlloFunc_{i}(\BID)$. We deduce that $\EU_{i}\big(\TV_{i},\OutFunc(\BID^{\sbot})\big) \geq \EU_{i}\big(\TV_{i},\OutFunc(\BID)\big)$. We conclude that $\BID_{i}^{\sbot}$ very-weakly dominates $\BID_{i}$.

\noindent
\emph{Proof of Property 2.}
Analogous to that of Property 1 and omitted.

\smallskip

\noindent
\emph{Proof of Property 3.}
Due to the $d$-distinguishable monotonicity of $\Mech$, $\BID_{i} \leq \BID_{i}^{\sbot} - d$ implies the existence of a strategy sub-profile $\BID_{-i}$
 making $\int_{\BID_{i}}^{\BID_{i}^{\sbot}}  \big(\AlloFunc_{i}(\zz \sqcup \BID_{-i}) - \AlloFunc_{i}(\BID)\big) \, d\zz$ strictly positive. For such  $\BID_{-i}$, therefore, playing $\BID_{i}^{\sbot}$ is strictly better than $\BID_{i}$.

\noindent
\emph{Proof of Property 4.}
Analogous to that of Property 3 and omitted.

Thus the lemma holds. \bqed
\section{Formal Statement and Proof of Theorem 2}
\label{sec:undominated-strategies-positive-det}

\textbf{Theorem 2.}
\hspace{2mm}
\emph{Let $\SPrice=(\SPriceS,\SPriceF)$ be the second-price mechanism with any deterministic tie-breaking rule. Then, for all contexts $(n,\Bound,\Err,\Know,\TV)$ and all strategy profiles $\BID \in \UWD(\Know)$:}
\begin{equation*}
\hfill
     \SW\big(\TV,\SPriceF(\BID)\big)
\geq \left(\frac{1-\Err}{1+\Err}\right)^{2} \MSW(\TV) - 2 \, \frac{1-\Err}{1+\Err}
\enspace.
\hfill
\end{equation*}

\vspace{2mm}
\noindent
\emph{Proof.}
Since $\Know$ is a $\Err$-approximate candidate-valuation  set,  for each player $i$ let $x_{i}$ be such that
$\Know_{i} \subseteq \Int{x_{i}}$. Then, in light of (the Distinguishable Monotonicity) \lemmaref{lemma:UWD-char}
and the previous observation that $\SPriceF^A$ is a 2-DM allocation function, we have that, for each player $i$:
\begin{equation}
\label{eq:thm2a-DML}
\UWD_{i}(x)  \subseteq \big\{ \lceil (1-\Err)x_{i} \rceil - 1 , \ldots, \lfloor (1+\Err)x_{i} \rfloor+1 \big\}  \enspace.
\end{equation}

Let $i^*$ be the player with the highest true valuation and $j^*$ the player winning the good under the bid profile $v$, that is, $\TV_{i^*}=\max_{i} \TV_{i}$  and  $\BID_{j^*}=\max_j \BID_j$.

If $i^* = j^*$ then we are done. If $i^*\neq j^*$, we need to show that $\TV_{j^*}$ is not much lower than $\TV_{i^*}$.

From \equationref{eq:thm2a-DML} we know that $ \lceil (1-\Err)x_{i^*} \rceil - 1 \leq \BID_{i^*} $ and $\BID_{j^*} \leq \lfloor (1+\Err)x_{j^*} \rfloor + 1$. Because $j^*$ is the winner, we also know that $\BID_{i^*}\leq \BID_{j^*}$. Combining these facts
and ``removing floors and ceilings" we have
$(1-\Err)x_{i^*} \leq(1+\Err)x_{j^*} +2$; equivalently,
$$
x_{j^*} \geq \frac{(1-\Err)}{(1+\Err)} x_{i^*} - 2 \frac{1}{(1+\Err)} \enspace .
$$
Since we also know that  $\TV_{j^*} \geq(1-\Err)x_{j^*}$ and $(1+\Err)x_{i^*}\geq \TV_{i^*}$, we obtain:
\begin{multline*}
\SW(\TV,\SPriceF(\BID)) = \TV_{j^*} \geq (1-\Err) x_{j^*}
\geq (1-\Err) \frac{1-\Err}{1+\Err} x_{i^*} - 2\frac{(1-\Err)}{(1+\Err)} \\
\geq (1-\Err) \frac{1-\Err}{1+\Err} \frac{1}{1+\Err} \TV_{i^*} -
2\frac{(1-\Err)}{(1+\Err)} = \frac{(1-\Err)^2}{(1+\Err)^2}\MSW(\TV) - 2\frac{(1-\Err)}{(1+\Err)} \enspace.
\end{multline*}
Thus, the theorem holds.\bqed

\begin{remark}
If $\SPrice=(\SPriceS,\SPriceF)$ were the second-price mechanism breaking ties at random (assigning a positive probability to each tie), then we can use a proof analogous to the one above, with the only difference being that $\SPriceF^A$ is $1$-DM (instead of only $2$-DM), and invoking the stronger inclusion of (the Distinguishable Monotonicity) \lemmaref{lemma:UWD-char}, to show the following, stronger lower bound:
\begin{equation*}
\hfill
     \EXP\Big[\SW\big(\TV,\SPriceF(\BID)\big)\Big]
\geq \frac{(1-\Err)^2}{(1+\Err)^2} \MSW(\TV)
\enspace.
\hfill
\end{equation*}
\end{remark}



\section{Formal Statement and Proof of Theorem 4}
\label{sec:undominated-strategies-positive-ran}

\textbf{Theorem 4.}\hspace{2mm}
\emph{$\forall n$, $\forall \Err \in (0,1)$, and $\forall \Bound$, there exists a mechanism $\MOPT=(S,F)$ such that for every $\Err$-approximate candidate-valuation profile $\Know$, every true-valuation profile $\TV\in\Know$, and every strategy profile $\BID \in \UWD(\Know)$:
\begin{equation*}
     \EXP\Big[\SW\big(\TV,F(\BID)\big)\Big]
\geq \left(\frac{(1-\Err)^2 + \frac{4\Err}{n}}{(1+\Err)^2}\right) \MSW(\TV)
\enspace.
\end{equation*}}

We break the construction of $\MOPT$ and its analysis into several steps. At the highest level, in order to leverage our Distinguishable Monotonicity Lemma, and thus choose $\MOPT= M_{\ffext}$ for a suitably chosen 1-DM allocation function $\ffext$.

\subsection{Our Allocation Function }

Given $\Err$, we find it natural to choose an allocation function $\ffext$ that is  {\em symmetric:} that is,  $\ffext (\BID')=\ffext (\BID)$ whenever the profile $\BID'$ consists of a permutation of the bids in $\BID$. In other words, ``renaming the players should not change the probability of allocating the good to a given player".

Also,  when some of the players' bids are much smaller than others, we find it intuitive to interpret the lower bids as being more likely to come from players with lower valuations. Accordingly,  our $\ffext$  gives positive probability only to the highest bids. However, when the highest bids are close to each other, we find it hard to  ``infer'' which one has been chosen by the player with the highest true valuation: after all, we are in a Knightian model. Therefore our $\ffext$ assigns the good to a randomly chosen high-bidding player. A bit more precisely, our $\ffext$ deterministically derives from the players' bids a {\em threshold}, and probabilistically chooses the winning player only among those bids lying above the threshold. To achieve optimality, however, one must be much more careful in allocating probability mass, and some complexities should be expected.
Let us now proceed more formally.

Let $D_\Err$ be the always positive quantity  defined as follows:  for all $\Err \in (0,1)$,
$$
\DD \defeq \left(\frac{1+\Err}{1-\Err}\right)^2-1 \enspace.
$$


\begin{definition}
\label{def:uwd_single_construct}
 For all $ \Err \in (0,1)$, define  $\ffext \colon [0,\Bound]^{n} \to [0,1]^{n}$ as follows: for every $i \in \Players$ and every $\zz = (\zz_{1},\dots,\zz_{n}) \in [0,\Bound]^{n}$
\begin{itemize}
  \item if $\zz_{1} \geq \zz_{2} \geq \cdots \geq \zz_{n}$, then
\begin{equation}\label{eqn:uwd_single_construct}
\ffext_{i}(\zz) \defeq
             \left\{
              \begin{array}{ll}
                \frac{1}{n} \cdot \frac{n+\DD}{n^{*}+\DD} \cdot \frac{\zz_{i}(n^{*}+\DD) - \sum_{j=1}^{n^{*}} \zz_j}{\zz_{i} \DD}, & \hbox{if $i \leq n^{*}$,} \\
                0, & \hbox{if $i > n^{*}$;}
              \end{array}
            \right.
            \enspace,
\end{equation}
where $n^{*}$ is the  index in $\{1,2,\dots,n\}$
such that
\begin{equation}\label{eqn:uwd_single_construct_pick12}
\zz_{1} \geq \cdots \geq \zz_{n^*} > \frac{\sum_{j=1}^{n^{*}} \zz_j}{n^{*} + \DD} \geq \zz_{n^*+1} \geq \cdots \geq \zz_n \enspace.
\end{equation}

  \item else, $\ffext_{i}(\zz) \defeq \ffext_{\pi(i)}(\zz_{\pi(1)},\dots,\zz_{\pi(n)})$ where $\pi$ is any permutation of the players such that $\zz_{\pi(1)} \geq \cdots \geq \zz_{\pi(n)}$.
\end{itemize}
We refer to  $\frac{\sum_{j=1}^{n^{*}} \zz_j}{n^{*} + \DD}$ as the \emph{\textbf{bid threshold}}, to players $1,\dots,n^{*}$ as the \emph{\textbf{candidate winners}}, and to the players $n^{*}+1,\dots,n$ as the \emph{\textbf{losers}}.
\end{definition}

\subsection{Our Allocation Function is Well Defined}

\begin{lemma}
\label{lemma:fd-is-well-defined}
$\ffext$ is an allocation function.
\end{lemma}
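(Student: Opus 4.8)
I would verify the two defining properties of an allocation function — $\ffext_i(\zz)\ge 0$ for every $i$, and $\sum_{i\in\Players}\ffext_i(\zz)\le 1$ — after first confirming that $\ffext$ is unambiguously defined, i.e. that the index $n^*$ of \eqref{eqn:uwd_single_construct_pick12} always exists and is unique. By the symmetry built into \definitionref{def:uwd_single_construct} it suffices to treat a sorted profile $\zz_1\ge\cdots\ge\zz_n$, and I may assume $\zz\neq(0,\dots,0)$ (on the all-zero profile one simply sets $\ffext\equiv 1/n$).

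For the existence and uniqueness of $n^*$, I would introduce the partial averages $T_k\defeq\frac{\zz_1+\cdots+\zz_k}{k+\DD}$ for $0\le k\le n$, so that $T_0=0<\zz_1$ and the bid threshold of \eqref{eqn:uwd_single_construct_pick12} is $T_{n^*}$. A one-line computation gives $T_k-\zz_k=\frac{(k-1+\DD)(T_{k-1}-\zz_k)}{k+\DD}$, so $T_k-\zz_k$ and $T_{k-1}-\zz_k$ always have the same sign; in particular $\zz_k\le T_{k-1}$ forces $\zz_k\le T_k$, and hence (since the bids are sorted) $\zz_{k+1}\le T_k$. Thus $\{k:\zz_k>T_{k-1}\}$ is an initial segment $\{1,\dots,m\}$ with $m\ge 1$; I set $n^*\defeq m$. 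Then $\zz_{n^*}>T_{n^*-1}$ forces $\zz_{n^*}>T_{n^*}$, while if $n^*<n$ then $\zz_{n^*+1}\le T_{n^*}$ — this is exactly \eqref{eqn:uwd_single_construct_pick12}. Uniqueness follows since any $k$ satisfying \eqref{eqn:uwd_single_construct_pick12} has $\zz_k>T_k$ (hence $\zz_k>T_{k-1}$, so $k\le m$) and $\zz_{k+1}\le T_k=T_{(k+1)-1}$ (so $k+1>m$, i.e. $k\ge m$). I would also record that $\zz_{n^*}>T_{n^*}\ge\zz_{n^*+1}$ is strict, so no tie straddles the threshold and the value of $\ffext$ is independent of the chosen sorting permutation. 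I expect this to be the one genuine obstacle: the rest is routine, but one has to spot the right monotone behaviour of the $T_k$'s to pin down $n^*$.

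Nonnegativity is then easy: losers get $0$, and for a candidate winner $i\le n^*$ I would rewrite the numerator of \eqref{eqn:uwd_single_construct} as $\zz_i(n^*+\DD)-\sum_{j=1}^{n^*}\zz_j=(n^*+\DD)(\zz_i-T_{n^*})>0$ (since $\zz_i\ge\zz_{n^*}>T_{n^*}$), note the denominator $\zz_i\DD>0$ (since $\zz_i\ge\zz_{n^*}>T_{n^*}\ge 0$, so in particular every $\zz_i$ with $i\le n^*$ is strictly positive, which keeps all the expressions here well defined), and observe that the leading factors $\frac1n$ and $\frac{n+\DD}{n^*+\DD}$ are positive; hence $\ffext_i(\zz)>0$.

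Finally, writing $S^*\defeq\sum_{j=1}^{n^*}\zz_j$, I would compute
\[
\sum_{i\in\Players}\ffext_i(\zz)=\frac{1}{n}\cdot\frac{n+\DD}{n^*+\DD}\cdot\frac{1}{\DD}\sum_{i=1}^{n^*}\Big(n^*+\DD-\frac{S^*}{\zz_i}\Big)=\frac{1}{n}\cdot\frac{n+\DD}{n^*+\DD}\cdot\frac{1}{\DD}\Big(n^*(n^*+\DD)-S^*\sum_{i=1}^{n^*}\frac{1}{\zz_i}\Big),
\]
and then apply the AM--HM inequality $\sum_{i=1}^{n^*}\frac{1}{\zz_i}\ge\frac{(n^*)^2}{S^*}$ to bound the last parenthesis by $n^*(n^*+\DD)-(n^*)^2=n^*\DD$; this gives $\sum_{i\in\Players}\ffext_i(\zz)\le\frac{n^*(n+\DD)}{n(n^*+\DD)}\le 1$, the last step because $n^*\le n$ implies $n^*(n+\DD)\le n(n^*+\DD)$. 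Combined with nonnegativity this also forces each $\ffext_i(\zz)\in[0,1]$, so $\ffext$ is an allocation function.
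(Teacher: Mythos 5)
Your proof is correct, and for the heart of the lemma --- the existence and uniqueness of $n^*$ --- you take a genuinely different route from the paper. The paper establishes existence by taking the least index $n'$ for which every later bid falls weakly below the corresponding partial average, and then proves uniqueness by a separate contradiction: assuming two valid indices $n^{\sbot}<n^{\stop}$, averaging the bids strictly between them, and showing the resulting two-sided bound on $S^{\mid}/n^{\mid}$ is self-contradictory. Your sign-preservation identity $T_k-\zz_k=\frac{k-1+\DD}{k+\DD}\,(T_{k-1}-\zz_k)$ replaces both steps at once: it shows $\{k:\zz_k>T_{k-1}\}$ is a nonempty initial segment whose right endpoint is the unique index satisfying \equationref{eqn:uwd_single_construct_pick12}. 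This is shorter, and it also yields for free that no bid is ever tied with the threshold, so $\ffext$ does not depend on the sorting permutation --- a well-definedness point the paper leaves implicit. You are also more careful on the all-zero profile: the paper's existence argument (and the formula itself, which divides by $\zz_i$) silently breaks there, and your convention $\ffext\equiv 1/n$ is a reasonable patch, though strictly speaking it amends \definitionref{def:uwd_single_construct} rather than follows from it. The nonnegativity argument and the bound $\sum_{i}\ffext_{i}(\zz)\le 1$ are essentially the paper's: both reduce to $\bigl(\sum_{j\le n^*}\zz_j\bigr)\bigl(\sum_{i\le n^*}\zz_i^{-1}\bigr)\ge (n^*)^2$ and then to $n^*(n+\DD)\le n(n^*+\DD)$.
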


\vspace{2mm}
\noindent
\emph{Proof.}
Assume, without loss of generality, that $\zz_{1} \geq \zz_{2} \geq \cdots \geq \zz_{n}$.

We first prove that $n^{*}$ exists and is unique, and begin with its existence.

Note that there exists an integer $n'$ in $\Players$ such that
\begin{equation}\label{eqn:first}
\forall\, i > n', \quad \zz_{i} \leq \frac{\sum_{j=1}^{n'} \zz_j}{n' + \DD} \enspace.
\end{equation}
Indeed, \equationref{eqn:first} vacuously holds for $n'=n$. Now letting $n''$  be the least such integer,
the following two facts hold:
\begin{equation}\label{eqn:uwd_single_construct_pick1}
\forall\, i > n'', \quad \zz_{i} \leq \frac{\sum_{j=1}^{n''} \zz_j}{n'' + \DD}
\quad \text{ {\rm and }}
\end{equation}

\begin{equation}
 \exists k\geq n'' \text{ such that } \zz_{k} > \frac{\sum_{j=1}^{n''-1} \zz_j}{n''-1+\DD} \enspace.
\end{equation}

\noindent
Because $z$ is non-decreasing, the last inequality implies  $\zz_{n''} > \frac{\sum_{j=1}^{n''-1} \zz_j}{n''-1+\DD}$; equivalently, $\zz_{n''} > \frac{\sum_{j=1}^{n''} \zz_j}{n''+\DD}$.
Invoking again the monotonicity of $\zz$, we have
\begin{equation}\label{eqn:second}
\forall\, i \leq n'', \quad \zz_{i} > \frac{\sum_{j=1}^{n''} \zz_j}{n''+\DD} \enspace.
\end{equation}
Thus, \ref{eqn:first} and \ref{eqn:second}
imply that choosing $n^*=n''$ satisfies \equationref{eqn:uwd_single_construct_pick12}.

Next, we prove that $n^{*}$ is unique. Suppose by way of contradiction that there exist two integers $n^{\sbot}$ and $n^{\stop}$,  $n^{\sbot}<n^{\stop}$, both satisfying \equationref{eqn:uwd_single_construct_pick12}. Now define
$$S^{\sbot} \defeq \sum_{j=1}^{n^{\sbot}} \zz_j, \quad S^{\stop} \defeq \sum_{j=1}^{n^{\stop}} \zz_j, \quad S^{\mid} \defeq S^{\stop} - S^{\sbot},\text{\quad and }n^{\mid} \defeq n^{\stop}-n^{\sbot} \enspace.$$

\noindent
 Invoking \equationref{eqn:uwd_single_construct_pick12} with $n^{\sbot}$ and $n^{\stop}$, we deduce that for $i \in \{n^{\sbot}+1,\dots,n^{\stop}\}$,
\begin{equation*}
\frac{S^{\sbot}}{n^{\sbot}+\DD} \geq \zz_{i} > \frac{S^{\stop}}{n^{\stop}+\DD} = \frac{S^{\sbot}+S^{\mid}}{n^{\sbot}+n^{\mid}+\DD} \enspace.
\end{equation*}
Averaging over all $\zz_{i}$ such that $i \in \{n^{\sbot}+1,\dots,n^{\stop}\}$, we get
\begin{equation}
\label{eq:asdna}
\frac{S^{\sbot}}{n^{\sbot}+\DD} \geq \frac{S^{\mid}}{n^{\mid}} > \frac{S^{\sbot}+S^{\mid}}{n^{\sbot}+n^{\mid}+\DD} \enspace.
\end{equation}
Let us now show that the second inequality of \equationref{eq:asdna} contradicts the first:
\begin{align*}
\frac{S^{\mid}}{n^{\mid}} > \frac{S^{\sbot}+S^{\mid}}{n^{\sbot}+n^{\mid}+\DD}
&\Leftrightarrow (n^{\sbot}+n^{\mid}+\DD)S^{\mid} > n^{\mid}(S^{\sbot}+S^{\mid}) \nonumber \\
&\Leftrightarrow (n^{\sbot}+\DD)S^{\mid} > n^{\mid}S^{\sbot}
\Leftrightarrow \frac{S^{\mid}}{n^{\mid}} > \frac{S^{\sbot}}{(n^{\sbot}+\DD)} \label{eqn:uwd_single_construct_trick}
\enspace.
\end{align*}
The contradiction establishes the uniqueness of $n^*$.

Finally, to prove that $\ffext$ is an allocation function we must argue that (a)  $\ffext_{i}(\zz)\geq 0$ for every $i$ and $\zz$, and  (b) $\sum_{i} \ffext_{i}(\zz) \leq 1$ for every $\zz$.

Since we are assuming $\zz_{1} \geq \zz_{2} \geq \cdots \geq \zz_{n}$, inequality (a) holds because  $\ffext_i(z)=0$ for $i>n^*$ by definition, and because  \equationref{eqn:uwd_single_construct_pick12} tells us that $\zz_{i}(n^{*}+\DD) - \sum_{j=1}^{n^{*}} \zz_j > 0$ for  $i\leq n^*$.

As for inequality (b), since it is easy to see that  $\sum_{j=1}^{n^{*}} \frac{\zz_j}{\zz_{i}}\geq n^*$, we have
\begin{align*}
\sum_{i=1}^n \ffext_{i}(\zz)
&= \frac{1}{n} \cdot \frac{n+\DD}{n^{*}+\DD} \cdot \sum_{i=1}^{n^{*}} \frac{\zz_{i}(n^{*}+\DD) - \sum_{j=1}^{n^{*}} \zz_j}{\zz_{i} \DD} \\
&= \frac{1}{n} \cdot \frac{n+\DD}{(n^{*}+\DD)\DD} \cdot \left( n^{*}(n^{*}+\DD) - \sum_{i=1}^{n^{*}} \sum_{j=1}^{n^{*}}
\frac{\zz_j}{\zz_{i}} \right) \\
& \leq \frac{1}{n} \cdot \frac{n+\DD}{(n^{*}+\DD)\DD} \cdot \left( n^{*}(n^{*}+\DD) - n^{*} n^{*} \right)\\
&
= \frac{nn^*+n^*D_\Err}{n n^* + nD_\Err} \leq 1
\enspace.
\end{align*}
\wqed

\subsection{Our Allocation Function is 1-Distinguishably Monotonic}

\begin{lemma}
\label{lemma:fd-monotone}
$\ffext$ is monotonic.
\end{lemma}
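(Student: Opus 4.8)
The plan is to fix a player $i\in\Players$ and an arbitrary subprofile $\BID_{-i}\in\Strats_{-i}$ of the other bids, and to show that $\zz\mapsto \ffext_{i}(\zz\sqcup\BID_{-i})$ is non-decreasing on $[0,\Bound]$ (working on the full continuum, as the definition of monotonic requires). Using the symmetry of $\ffext$ I would assume $\BID_{-i}$ is already sorted, say $y_{1}\geq\cdots\geq y_{n-1}$, and think of $\zz$ as inserted at its rank; write $n^{*}=n^{*}(\zz)$ for the index of \equationref{eqn:uwd_single_construct_pick12} and $T(\zz)\defeq \frac{1}{n^{*}+\DD}\sum_{j=1}^{n^{*}}\zz_{j}$ for the induced bid threshold (well defined by \lemmaref{lemma:fd-is-well-defined}; the degenerate all-zero profile is handled by the obvious convention and trivially satisfies monotonicity, and at $\zz=0$ player $i$ is a loser so $\ffext_{i}=0$, its minimum possible value). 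The clarifying first step is to rewrite \equationref{eqn:uwd_single_construct} by cancelling the factor $(n^{*}+\DD)$: for every $\zz>0$,
\begin{equation*}
\ffext_{i}(\zz\sqcup\BID_{-i})=\frac{n+\DD}{n\DD}\cdot\max\!\Big(0,\,1-\frac{T(\zz)}{\zz}\Big),
\end{equation*}
since a player is a candidate winner precisely when its bid exceeds the threshold, so the bracket is positive for candidate winners and $\leq 0$ for losers, matching the value $0$ assigned to the latter.

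Granting this identity, monotonicity would follow from two observations about how the right-hand side varies with $\zz$. First, on each maximal interval of bid values on which the set of candidate winners (hence $n^{*}$, hence $i$'s membership in that set) is constant, one has $T(\zz)=\frac{\zz+R}{n^{*}+\DD}$ with $R\defeq\sum_{j\leq n^{*},\,j\neq i}\zz_{j}$ constant, so that $1-T(\zz)/\zz=1-\frac{1}{n^{*}+\DD}\big(1+\frac{R}{\zz}\big)$, which is visibly non-decreasing in $\zz$ (and constant exactly when $n^{*}=1$, in which case $i$ is the sole candidate winner and $R=0$; note that for $n^{*}\geq 2$ every candidate winner has positive bid, since $\zz_{j}>T\geq 0$). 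Second, $T$ is continuous in $\zz$: it is affine on each of these intervals, and at a breakpoint where the candidate-winner set gains or loses the player sitting exactly at the threshold $\tau$, a one-line computation gives $\frac{S_{k+1}}{k+1+\DD}=\frac{S_{k}}{k+\DD}=\tau$ (with $S_{m}$ the sum of the top $m$ bids), so the adjacent affine pieces agree; at a breakpoint that is merely a tie $\zz=y_{j}$ the value is unchanged by symmetry. Hence $\ffext_{i}(\cdot\sqcup\BID_{-i})$ is continuous and non-decreasing on each piece of a finite partition of $[0,\Bound]$ into intervals, and is therefore non-decreasing on all of $[0,\Bound]$. (In passing this also shows $\{\zz:\ffext_{i}(\zz\sqcup\BID_{-i})>0\}$ is an up-set: since $\partial T/\partial\zz=\frac{1}{n^{*}+\DD}<1$ on each piece, $\zz-T(\zz)$ is increasing, so once $i$ clears the threshold it never falls back below it.)

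The step I expect to be the real obstacle — and the one the whole reformulation is designed to defuse — is the feedback of $\zz$ on the threshold $T(\zz)$ when player $i$ is itself a candidate winner: increasing $\zz$ helps through the explicit $\zz$ in $1-T/\zz$ but simultaneously drags $T$ upward, which pushes the bracket down. The resolution is that $T$ moves at rate $\frac{1}{n^{*}+\DD}$ while the break-even rate for $1-T/\zz$ is $\frac{T}{\zz}$, and for a candidate winner $\zz>T$ forces $\frac{T}{\zz}<1\leq(n^{*}+\DD)\cdot\frac{1}{n^{*}+\DD}$; unwinding this is precisely the computation in the first observation, and it pins down the only flat case as $n^{*}=1$. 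Everything else — the bookkeeping at the $n^{*}$-transitions and at ties — is routine once the continuity of $T$ is in hand.
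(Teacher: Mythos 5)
Your argument is correct, but it takes a genuinely different route from the paper's. The paper fixes $\zz_{-n}$, proves two claims --- that player $n$ is a candidate winner iff his bid exceeds the threshold determined by the other $n-1$ bids (CLAIM~1), and that the number of other candidate winners weakly decreases as his bid increases (CLAIM~2) --- and then compares $\ffext_{n}(\zz_{-n}\sqcup\zz_{n}^{\sbot})$ with $\ffext_{n}(\zz_{-n}\sqcup\zz_{n}^{\stop})$ by a direct algebraic computation split into the cases $n^{\sbot}=n^{\stop}$ and $n^{\sbot}>n^{\stop}$. You instead cancel the factor $n^{*}+\DD$ to obtain the identity $\ffext_{i}=\frac{n+\DD}{n\DD}\max\big(0,1-T(\zz)/\zz\big)$ (which checks out), so that all dependence on the winner set is channelled through the threshold $T$, and you reduce monotonicity to (i) the elementary fact that $1-T(\zz)/\zz$ is non-decreasing on each piece where the winner set is frozen, and (ii) the continuity of $T$ at the breakpoints, verified by the computation $\frac{S_{k}+\tau}{k+1+\DD}=\tau$ when the departing player sits exactly at $\tau=\frac{S_{k}}{k+\DD}$. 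Step (ii) does the work of the paper's CLAIM~2 together with its final two-case computation, and is considerably lighter; it also yields the stronger conclusion that $\ffext_{i}(\cdot\sqcup\BID_{-i})$ is genuinely continuous, not merely piecewise continuous as the paper later asserts when arguing integrability. The only point worth tightening is the finiteness of the partition into pieces: deriving it from the monotonicity of $T$ risks circularity, so it is cleaner to observe that for each candidate winner set $W$ the condition ``the winner set equals $W$'' is an intersection of inequalities between functions affine in $\zz$, hence an interval, so there are finitely many maximal pieces; with that noted, your proof is complete.
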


\noindent
\emph{Proof.}
Since $\ffext$ is symmetric,  it suffices to show its  monotonicity with respect to a single coordinate, and we choose the $n$-th one for notational convenience.

Let $z_{-n}=(z_1,\ldots,z_{n-1}) \in [0,B]^{n-1}$ and assume, with no  generality loss, that $\zz_{1}\geq \zz_{2}\geq \cdots \geq \zz_{n-1}$.

We must prove that if  $0\leq \zz_n^{\sbot}  < \zz_n^{\stop} \leq \Bound$, then
\begin{equation}\label{eqn:monotonicty}
\ffext_n( \zz_{-n} \sqcup \zz_n^{\sbot}) \leq \ffext_n( \zz_{-n} \sqcup \zz_n^{\stop}) \enspace.
\end{equation}
To prove \equationref{eqn:monotonicty} we establish  two convenient claims.
Before doing so, we wish to stress that, although $z_{-n}$ is assumed to be monotonically non increasing, when  $z_n$ is chosen arbitrarily the profile $(\zz_{-n}, \zz_n)$
may not be monotonic.

\medskip

\noindent{\em CLAIM 1.}
{\em
If $n'$ is the number of candidate winners when only the first $n-1$ players are bidding and
their bid profile is $\zz_{-n}$, then for all $z_n$
\begin{align}
\label{eq:always-loser} \zz_{n} \leq \frac{\sum_{j=1}^{n'} \zz_j}{n' + \DD} &\;\Rightarrow\; \ffext_{n}(\zz_{-n}\sqcup \zz_n)=0 \enspace\text{(i.e., $n$ is a loser)} \\
\label{eq:always-winner} \zz_{n} > \frac{\sum_{j=1}^{n'} \zz_j}{n' + \DD} &\;\Rightarrow\; \ffext_{n}(\zz_{-n}\sqcup \zz_n)>0 \enspace\text{(i.e., $n$ is a winner)}
\end{align}
}
\medskip

\noindent
{\em Proof of CLAIM 1.}
The hypothesis of CLAIM 1 can be re-written as follows:
\begin{equation*}
\forall i\in\{1,2,\dots,n'\},\, \zz_{i} > \frac{\sum_{j=1}^{n'} \zz_j}{n' + \DD}; \text{ and } \forall i\in \{n'+1,\dots,n-1\},\, \zz_{i} \leq \frac{\sum_{j=1}^{n'} \zz_j}{n' + \DD}\enspace.
\end{equation*}
Let (``player $n$ join the game  bidding") $\zz_n \leq \frac{\sum_{j=1}^{n'} \zz_j}{n' + \DD}$. Then
\begin{equation*}
\forall i\in\{1,2,\dots,n'\}, \zz_{i} > \frac{\sum_{j=1}^{n'} \zz_j}{n' + \DD}; \text{ and } \forall i\in \{n'+1,\dots,n\}, \zz_{i} \leq \frac{\sum_{j=1}^{n'} \zz_j}{n' + \DD}\enspace.
\end{equation*}
That is,  the bid threshold continues to be $\frac{\sum_{j=1}^{n'} \zz_j}{n' + \DD}$, and the set of winners continues to be $\{1,2,\dots,n'\}$. Thus $n$ is a loser and inequality \ref{eq:always-loser} holds.

Let now (player $n$ join the game bidding) $\zz_{n} > \frac{\sum_{j=1}^{n'} \zz_j}{n' + \DD}$ and assume, for the sake of contradiction,  that $\ffext_n(\zz_{-n}\sqcup \zz_n)=0$, that is, that player $n$ is a loser.
Then, letting $n^*$ be the new number of candidate winners, by definition:
\begin{equation*}
\forall i\in\{1,2,\dots,n^*\}, \zz_{i} > \frac{\sum_{j=1}^{n^*} \zz_j}{n^* + \DD};\quad \forall i\in \{n^*+1,\dots,n\}, \zz_{i} \leq \frac{\sum_{j=1}^{n^*} \zz_j}{n^* + \DD}\enspace.
\end{equation*}

Thus, ``ignoring  $n$" we get
\begin{equation*}
\forall i\in\{1,2,\dots,n^*\}, \zz_{i} > \frac{\sum_{j=1}^{n^*} \zz_j}{n^* + \DD};\quad \forall i\in \{n^*+1,\dots,n-1\}, \zz_{i} \leq \frac{\sum_{j=1}^{n^*} \zz_j}{n^* + \DD}\enspace.
\end{equation*}

That is, $n^*$ is also the number of candidate winners under the hypothesis of CLAIM 1. Thus, the uniqueness of $n^*$ implies $n^*=n'$. In turn, this implies that
$\zz_n\leq \frac{\sum_{j=1}^{n'} \zz_j}{n' + \DD}$, a contradiction.
This contradiction proves that $n$ is a winner. Thus the claimed inequality \eqref{eq:always-winner} holds. \qed

\medskip

Thanks to CLAIM 1, to establish the monotonicity of $\ffext$ it suffices to prove that
\begin{equation}\label{eq:all-that-remains}
\text{{\bf if} } \frac{\sum_{j=1}^{n'} \zz_j}{n' + \DD} < \zz_{n}^{\sbot} < \zz_{n}^{\stop} , \text{ {\bf then} } \ffext_n( \zz_{-n} \sqcup \zz_n^{\sbot}) \leq \ffext_n( \zz_{-n} \sqcup \zz_n^{\stop}) \enspace.
\end{equation}
Notice that for such $\zz_{n}^{\sbot}$ and $\zz_{n}^{\stop}$, player $n$ is always a candidate winner.
Therefore, let $\{1,\dots,n^{\sbot},n\}$ and $\{1,\dots,n^{\stop},n\}$ be the winners when the bid profiles are $(\zz_{-n}\sqcup \zz_{n}^{\sbot})$ and $(\zz_{-n}\sqcup \zz_{n}^{\stop})$ respectively.
We now relate ${n}^{\sbot}$ and ${n}^{\stop}$.

\medskip

\noindent{\em CLAIM 2.} $\quad n^{\sbot}\geq n^{\stop}$.

\medskip

\noindent
{\em Proof of CLAIM 2.}
Assume by way of contradiction that $n^{\sbot} < n^{\stop}$.

\noindent
We proceed in a way similar to the proof of \lemmaref{lemma:fd-is-well-defined}. Set
\begin{center}
$n^{\mid} \defeq n^{\stop} - n^{\sbot}$, $\quad S^{\sbot} \defeq \sum_{j=1}^{n^{\sbot}} \zz_j, \quad$ and $\quad S^{\stop} \defeq \sum_{j=1}^{n^{\stop}} \zz_{j}= S^{\sbot}+S^{\mid}.$
\end{center}
Since $n^{\sbot}\leq i < n^{\stop}$ implies that player $i$ is a loser when the bid profile is $(\zz_{-n}\sqcup \zz_n^{\sbot})$ and a winner when the bid profile is $(\zz_{-n}\sqcup \zz_n^{\stop})$, we have
\begin{equation*}
\frac{S^{\sbot}+\zz_{n}^{\sbot}}{n^{\sbot}+1+\DD} \geq \zz_{i} > \frac{S^{\stop}+\zz_{n}^{\stop}}{n^{\stop}+1+\DD} = \frac{S^{\sbot}+S^{\mid}+\zz_{n}^{\stop}}{n^{\sbot}+n^{\mid}+1+\DD} \enspace.
\end{equation*}
Averaging over all $i$ such that $n^{\sbot}\leq i < n^{\stop}$ we get:
\begin{equation}\label{eqn:claim2}
\frac{S^{\sbot}+\zz_{n}^{\sbot}}{n^{\sbot}+1+\DD} \geq \frac{S^{\mid}}{n^{\mid}} >   \frac{S^{\sbot}+S^{\mid}+\zz_{n}^{\stop}}{n^{\sbot}+n^{\mid}+1+\DD} \enspace .
\end{equation}
Focusing on the second inequality of \eqref{eqn:claim2}, 
we have

\begin{align*}
\frac{S^{\mid}}{n^{\mid}} > \frac{S^{\sbot}+S^{\mid}+\zz_{n}^{\stop}}{n^{\sbot}+n^{\mid}+1+\DD}
&\Leftrightarrow (n^{\sbot}+n^{\mid}+1+\DD)S^{\mid} > n^{\mid}(S^{\sbot}+S^{\mid}+\zz_{n}^{\stop}) \nonumber \\
&\Leftrightarrow (n^{\sbot}+1+\DD)S^{\mid} > n^{\mid}(S^{\sbot}+\zz_{n}^{\stop}) \\
&\Leftrightarrow \frac{S^{\mid}}{n^{\mid}} > \frac{S^{\sbot}+\zz_{n}^{\stop}}{n^{\sbot}+1+\DD} \label{eqn:uwd_single_construct_trick}
\enspace.
\end{align*}
Thus, since $\zz_n^{\sbot} <\zz_n^{\stop} $, the second inequality \eqref{eqn:claim2} contradicts the first.

The contradiction establishes that $n^{\sbot}\geq n^{\stop}$ as claimed. \qed

\medskip

We now use the fact that $n^{\sbot}\geq n^{\stop}$ to prove   \equationref{eq:all-that-remains}, as desired.

If $n^{\sbot}=n^{\stop}$, then, for both $(\zz_{-n}\sqcup \zz_{n}^{\stop})$ and $(\zz_{-n}\sqcup \zz_{n}^{\sbot})$, the set of candidate winners is $\{1,2,\dots,n^{\sbot},n\}$. Thus, letting $n^*=n^{\sbot}+1=n^{\stop}+1$ be the number of candidate winners, we get
\begin{align*}
\ffext_{n}(\zz_{-n}\sqcup \zz_{n}^{\sbot}) &= \frac{1}{n} \cdot \frac{n+\DD}{n^{*}+\DD} \cdot \frac{\zz_{n}^{\sbot}(n^{*}+\DD) - \sum_{j=1}^{n^{*}-1} \zz_j - \zz_{n}^{\sbot}}{\zz_{n}^{\sbot} \DD} \\
&\leq \frac{1}{n} \cdot \frac{n+\DD}{n^{*}+\DD} \cdot \frac{\zz_{n}^{\stop}(n^{*}+\DD) - \sum_{j=1}^{n^{*}-1} \zz_j - \zz_{n}^{\stop}}{\zz_{n}^{\stop} \DD} = \ffext_{n}(\zz_{-n}\sqcup \zz_{n}^{\stop})
 \enspace.
\end{align*}
If $n^{\sbot} > n^{\stop}$, then let $n^{\sbot}=n^{\stop}+n^{\mid}$, $S^{\stop}=\sum_{j=1}^{n^{\stop}} \zz_j$ and $S^{\sbot}=\sum_{j=1}^{n^{\sbot}} \zz_j = S^{\stop}+S^{\mid}$ as before. Averaging over all $\zz_{i}$ such that $n^{\stop} < i \leq n^{\sbot}$ we get:
\begin{equation}
\frac{S^{\mid}}{n^{\mid}} > \frac{S^{\sbot}+\zz_{n}^{\sbot}}{n^{\sbot}+1+\DD}
= \frac{S^{\stop}+S^{\mid}+\zz_{n}^{\sbot}}{n^{\stop}+n^{\mid}+1+\DD} \enspace.
\end{equation}
But this is equivalent to 
\begin{equation}\label{eqn:uwd_single_construct_tmp}
\frac{S^{\mid}}{n^{\mid}} > \frac{S^{\stop}+\zz_{n}^{\sbot}}{n^{\stop}+1+\DD} \enspace.
\end{equation}

Letting $C_1 = \frac{n+\DD}{n}$ and $C_2 = C_1 \frac{1}{(n^{\sbot}+1+\DD)\zz_{n}^{\sbot}} \frac{1}{(n^{\stop}+1+\DD)\zz_{n}^{\stop}}$, we now do the final calculation:
\begin{align*}
&\ffext_{n}(\zz_{-n}\sqcup \zz_{n}^{\stop}) - \ffext_{n}(\zz_{-n}\sqcup \zz_{n}^{\sbot}) \\
&= C_{1} \cdot \Big( \frac{\zz_{n}^{\stop}(n^{\stop}+1+\DD) - S^{\stop}-\zz_{n}^{\stop}}{(n^{\stop}+1+\DD)\zz_{n}^{\stop} } - \frac{\zz_{n}^{\sbot}(n^{\sbot}+1+\DD) - S^{\sbot}-\zz_{n}^{\sbot}}{(n^{\sbot}+1+\DD)\zz_{n}^{\sbot} }\Big) \\
&= C_{1} \cdot \Big( \frac{S^{\sbot}+\zz_{n}^{\sbot}}{(n^{\sbot}+1+\DD)\zz_{n}^{\sbot} } - \frac{ S^{\stop}+\zz_{n}^{\stop}}{(n^{\stop}+1+\DD)\zz_{n}^{\stop} }\Big) \\
&= C_{2} \cdot \Big( (S^{\sbot}+\zz_{n}^{\sbot})(n^{\stop}+1+\DD)\zz_{n}^{\stop} - (S^{\stop}+\zz_{n}^{\stop})(n^{\sbot}+1+\DD)\zz_{n}^{\sbot} \Big) \\
&= C_{2} \cdot \Big( (S^{\stop}+S^{\mid}+\zz_{n}^{\sbot})(n^{\stop}+1+\DD)\zz_{n}^{\stop} - (S^{\stop}+\zz_{n}^{\stop})(n^{\stop}+n^{\mid}+1+\DD)\zz_{n}^{\sbot}\Big) \\
&= C_{2} \cdot \Big( S^{\stop}(n^{\stop}+1+\DD)(\zz_{n}^{\stop}-\zz_{n}^{\sbot}) + S^{\mid}(n^{\stop}+1+\DD)\zz_{n}^{\stop} - n^{\mid}(S^{\stop}+\zz_{n}^{\stop}) \zz_{n}^{\sbot} \Big) \\
&\geq C_{2} \cdot \Big( S^{\stop}(n^{\stop}+1+\DD)(\zz_{n}^{\stop}-\zz_{n}^{\sbot}) + S^{\mid}(n^{\stop}+1+\DD)\zz_{n}^{\stop} - n^{\mid}(S^{\stop}+\zz_{n}^{\sbot}) \zz_{n}^{\stop} \Big) \geq 0
\end{align*}
The last inequality has been derived using the fact that $\zz_{n}^{\stop}-\zz_{n}^{\sbot} \geq 0$ and (by \equationref{eqn:uwd_single_construct_tmp}) the fact that $S^{\mid}(n^{\stop}+1+\DD) - n^{\mid}(S^{\stop}+\zz_{n}^{\sbot}) > 0$.

This finishes the proof that $\ffext$ is monotonic.
\wqed

\begin{lemma}
\label{lemma:df-1dm}
$\ffext$ is $1$-distinguishably monotonic.
\end{lemma}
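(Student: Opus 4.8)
The plan is to use \lemmaref{lemma:fd-is-well-defined} and \lemmaref{lemma:fd-monotone} to reduce the claim to the single ``distinguishability'' clause of \definitionref{def:allo_monotonicity} with $d=1$: for every player $i$ and all integers $\BID_i<\BID_i'$ in $\{0,\dots,\Bound\}$ there must exist an integer subprofile $\BID_{-i}\in\Strats_{-i}$ with $\int_{\BID_i}^{\BID_i'}\big(\ffext_i(\zz \sqcup \BID_{-i})-\ffext_i(\BID_i \sqcup \BID_{-i})\big)\,d\zz>0$. Integrability of $\ffext$ in each coordinate is immediate (for fixed $\BID_{-i}$ the map $\zz\mapsto\ffext_i(\zz \sqcup \BID_{-i})$ is valued in $[0,1]$ and is piecewise of the form $\alpha-\beta/\zz$, with only finitely many breakpoints, namely the values of $\zz$ at which the index $n^*$ changes), and monotonicity is \lemmaref{lemma:fd-monotone}. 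By symmetry it suffices to take $i=n$; also $\BID_n'\ge 1$ since $\BID_n\ge0$ and $\BID_n<\BID_n'$, and we may assume $n\ge2$ (for $n=1$ the statement is vacuous for our purposes, as the single player always receives the good).

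The key step is the choice of witness: I would let player $1$ bid exactly $\BID_n'$ and let every remaining player (players $2,\dots,n-1$, if any) bid $0$. With this $\BID_{-n}$, consider player $n$ bidding $\zz$ in the range $J\defeq\big(\max\{\BID_n,\tfrac{\BID_n'}{1+\DD}\},\,\BID_n'\big]$, a non-degenerate subinterval of $[\BID_n,\BID_n']$ because $\DD>0$ (so $\tfrac{\BID_n'}{1+\DD}<\BID_n'$) and $\BID_n<\BID_n'$. For $\zz\in J$ the sorted profile is $\BID_n'\ge\zz\ge0=\dots=0$, and in \equationref{eqn:uwd_single_construct_pick12} the first position qualifies (since $\BID_n'>\tfrac{\BID_n'}{1+\DD}$), the second position — player $n$ — qualifies (since $\zz>\tfrac{\BID_n'}{1+\DD}$ is, after clearing denominators, $(2+\DD)\zz>\BID_n'+\zz$), while every later position bids $0$ and fails; by the uniqueness of $n^*$ established in \lemmaref{lemma:fd-is-well-defined}, the candidate winners are exactly $\{1,n\}$, i.e.\ $n^*=2$. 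Substituting $n^*=2$ and $\sum_{j=1}^{2}\zz_j=\BID_n'+\zz$ into \equationref{eqn:uwd_single_construct} gives
\begin{equation*}
\ffext_n(\zz \sqcup \BID_{-n})=\frac{1}{n}\cdot\frac{n+\DD}{2+\DD}\cdot\frac{1}{\DD}\Big(1+\DD-\frac{\BID_n'}{\zz}\Big)\qquad(\zz\in J),
\end{equation*}
which is continuous and strictly increasing in $\zz$ on $J$ (its $\zz$-derivative is $\tfrac{1}{n}\cdot\tfrac{n+\DD}{2+\DD}\cdot\tfrac{\BID_n'}{\DD\,\zz^2}>0$, as $\BID_n'\ge1$). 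Since $\ffext$ is monotonic, $\ffext_n(\zz \sqcup \BID_{-n})\ge\ffext_n(\BID_n \sqcup \BID_{-n})$ for all $\zz\ge\BID_n$; combined with the strict increase and continuity on $J$, this yields $\ffext_n(\zz \sqcup \BID_{-n})>\ffext_n(\BID_n \sqcup \BID_{-n})$ on a subinterval of $J$ of positive length, hence the integral over $[\BID_n,\BID_n']$ is strictly positive, which is the desired distinguishability property; together with \lemmaref{lemma:fd-is-well-defined} and \lemmaref{lemma:fd-monotone}, this shows $\ffext$ is $1$-distinguishably monotonic.

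I expect the main obstacle to be precisely this choice of $\BID_{-n}$. It must be an \emph{integer} profile (which rules out simply placing another bid just below the threshold), it must make player $n$ a \emph{genuine} candidate winner — not merely the top bidder — on a positive-length portion of $[\BID_n,\BID_n']$ alongside another candidate winner carrying a positive bid (otherwise $\ffext_n$ is constant in $\zz$ and the integral vanishes), and it must do so for \emph{every} admissible pair $\BID_n<\BID_n'$, including small pairs such as $(0,1)$ where naive choices (another bid equal to $0$, or equal to $\Bound$) fail. Having player $1$ bid exactly $\BID_n'$ is what rescues the argument: when $\zz$ is close to $\BID_n'$ the bids $\BID_n'$ and $\zz$ lie within the factor $1+\DD$ that the threshold rule \equationref{eqn:uwd_single_construct_pick12} tolerates, which pins down $n^*=2$. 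The rest — the breakpoint check against \equationref{eqn:uwd_single_construct_pick12} and the elementary monotonicity of $\alpha-\beta/\zz$ — is routine.
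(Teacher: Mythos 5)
Your proof is correct, and it follows the same overall template as the paper's: integrability plus \lemmaref{lemma:fd-monotone} dispose of everything except the distinguishability clause, which is then verified by exhibiting an explicit witness subprofile. The only real difference is the witness. The paper takes $\BID_{-i}=(\BID_{i},\dots,\BID_{i})$, so that all $n$ players are candidate winners at the left endpoint (each with probability $\tfrac{1}{n}$) and $\ffext_{i}(\zz\sqcup\BID_{-i})>\tfrac{1}{n}$ is strictly increasing on $(\BID_{i},(1+\DD)\BID_{i}]$; you instead put one opponent at $\BID_{i}'$ and the rest at $0$, pinning $n^{*}=2$ on the window $\bigl(\max\{\BID_{i},\BID_{i}'/(1+\DD)\},\BID_{i}'\bigr]$ and getting strict increase there. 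The two witnesses support essentially the same computation, but yours is uniformly valid: the paper's window $(\BID_{i},(1+\DD)\BID_{i}]$ degenerates when $\BID_{i}=0$ (and the all-zero profile is precisely the point where \equationref{eqn:uwd_single_construct_pick12} admits no $n^{*}$ and \equationref{eqn:uwd_single_construct} reads $0/0$), which is exactly the pair $(0,1)$ you flag; anchoring the witness at the upper bid $\BID_{i}'\geq 1$ sidesteps this. One small remark on integrability: the ``finitely many breakpoints'' assertion is what the paper extracts from CLAIMs 1 and 2 in the proof of \lemmaref{lemma:fd-monotone} (the number of candidate winners is monotone in $\zz_{n}$); alternatively, boundedness together with the monotonicity you already invoke yields Riemann integrability with no further work.
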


\noindent
\emph{Proof.}
We already know from \lemmaref{lemma:fd-monotone} that $\ffext$ is monotonic.

We now need to argue the integrability of $\ffext$, that is, that $\ffext_{i}(\zz_{i},\zz_{-i})$ is integrable in $\zz_{i}$ for any choice of $\zz_{-i}$. Again, because $\ffext$ is symmetric, it suffices to argue this for $i=n$. To do so, let us first point out that the number of winners does not increase when $\zz_n$ increases, for any $\zz_{-n}$. Indeed, letting $n'$ be the number of candidate winners when only the first $n-1$ players are bidding and
their bid profile is $\zz_{-n}$, CLAIM 1 in the proof \lemmaref{lemma:fd-monotone} tells us that when $\zz_{n} \leq \frac{\sum_{j=1}^{n'} \zz_j}{n' + \DD}$ player $n$ is always a loser and thus the number of winners remains constant; moreover, CLAIM 2 in the proof of \lemmaref{lemma:fd-monotone} tells us that when $\zz_{n}$ increases past $\frac{\sum_{j=1}^{n'} \zz_j}{n' + \DD}$ the number of winners does not decrease. Thus, $\ffext$ is piecewise continuous (as we have established that the number of winners does not increase when $\zz_n$ increases, the finite number of continuous pieces is at most $n$), and therefore $\ffext$ is integrable.

We are therefore left to prove the ``distinguishability condition''.

Fix a player $i \in \Players$ and two distinct valuations $\BID_{i},\BID_{i}' \in \{0,1,\dots,\Bound\}$, and assume that $\BID_{i}<\BID_{i}'$.
If we choose $\BID_{-i} = (\BID_{i},\BID_{i},\dots,\BID_{i})$, then:
\begin{itemize}
\item $\ffext_{i}(\BID_{i} \sqcup \BID_{-i})=\frac{1}{n}$ since there are $n$ winners, all bidding the same valuation, and
\item $\ffext_{i}(z \sqcup\BID_{-i})=\frac{1}{n\DD} (\DD+n-1-\frac{\BID_{i}}{z} (n-1)) > \frac{1}{n} $,
 when $\BID_{i} < z \leq (1+\DD)\BID_{i}$.
\end{itemize}
The upper bound $(1+\DD)\BID_{i}$ is to make sure that the number of winners is still $n$ on input $(\zz \sqcup \BID_{-i})$.
Thus, by definition, $\ffext_{i}(z \sqcup\BID_{-i})$ is a function that is \emph{strictly} increasing when $z$ increases in $(\BID_{i} , (1+\DD)\BID_{i}]$,
 and therefore
\begin{equation*}
\int_{\BID_{i}}^{\BID_{i}'} \big(\ffext_{i}(\zz \sqcup \BID_{-i}) - \ffext_{i}(\BID_{i} \sqcup \BID_{-i})\big) \, d\zz
\geq
\int_{\BID_{i}}^{\min\{\BID_{i}',(1+\DD)\BID_{i}\}} \big(\ffext_{i}(\zz \sqcup \BID_{-i}) - \ffext_{i}(\BID_{i} \sqcup \BID_{-i})\big) \, d\zz
> 0
 \enspace,
\end{equation*}
as desired.
\wqed

\subsection{Our Allocation Function is $\Err$-Good}

Our last lemma tells us that $\MOPT\defeq M_{\ffext}$ is a 1-distinguishably monotone mechanism. This property  (via our Distinguishable Monotonicity Lemma) simplifies the analysis of the undominated strategies for $\MOPT$, but otherwise has no bearing on proving  the social-welfare performance claimed for $\MOPT$ in Theorem 4. (As already remarked,  the probabilistic second-price mechanism is 1-DM, but only guarantees a fraction $\big(\frac{1-\Err}{1+\Err}\big)^2$ of the maximum social welfare.) Accordingly, our allocation function
$\ffext$ must and indeed does satisfy an additional property, {\em $\Err$-goodness}. We state this property below and  prove
that $\ffext$ indeed satisfies it. Only in the next section we shall prove the relevance of $\Err$-goodness for Theorem 4.

Recall that $\DD \defeq \left(\frac{1+\Err}{1-\Err}\right)^2-1$.

\begin{definition}\label{def:delta-good}
We say that a $1$-DM  \eWPF{} $\AlloFunc$ is \emph{\textbf{$\Err$-good}} if
$$
\label{eqn:uwd_single_sw_characterization}
\forall\, i\in \{1,\ldots,n\}, \; \forall\, \BID \in \{0,1,\dots,\Bound\}^n , \quad
\sum_{j=1}^{n} \AlloFunc_{j}(\BID)\BID_{j} + \DD \cdot \AlloFunc_{i}(\BID) \BID_{i} \geq \frac{1}{n} \cdot \BID_{i} (n+\DD) \enspace.
$$
\end{definition}

\begin{lemma}
\label{lemma:fd-is-good}
$\ffext$ is $\Err$-good.
\end{lemma}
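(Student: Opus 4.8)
The plan is to reduce to the case of sorted bids and then evaluate the left-hand side of the $\Err$-goodness inequality in closed form, splitting on whether player $i$ is a candidate winner or a loser. Since $\ffext$ is symmetric --- the probability it assigns to a player depends only on that player's own bid and on the multiset of all bids --- both $\sum_{j}\ffext_{j}(\BID)\BID_{j}$ and $\ffext_{i}(\BID)\BID_{i}$, as well as the right-hand side $\frac{1}{n}\BID_{i}(n+\DD)$, are invariant under simultaneously permuting the bids and the index $i$. Hence I would assume without loss of generality that $\BID_{1}\geq\BID_{2}\geq\cdots\geq\BID_{n}$, let $n^{*}$ be the (unique, by \lemmaref{lemma:fd-is-well-defined}) number of candidate winners, and write $S^{*}\defeq\sum_{j=1}^{n^{*}}\BID_{j}$.

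The only real computation is the total weighted allocation. Because $\ffext_{j}(\BID)=0$ for $j>n^{*}$, substituting \equationref{eqn:uwd_single_construct} and cancelling the $\BID_{j}$ appearing in numerator and denominator yields
\begin{equation*}
\sum_{j=1}^{n}\ffext_{j}(\BID)\BID_{j}
= \frac{1}{n}\cdot\frac{n+\DD}{(n^{*}+\DD)\DD}\sum_{j=1}^{n^{*}}\bigl(\BID_{j}(n^{*}+\DD)-S^{*}\bigr)
= \frac{1}{n}\cdot\frac{n+\DD}{n^{*}+\DD}\,S^{*}\enspace,
\end{equation*}
where the last equality uses $\sum_{j=1}^{n^{*}}\bigl(\BID_{j}(n^{*}+\DD)-S^{*}\bigr)=(n^{*}+\DD)S^{*}-n^{*}S^{*}=\DD\,S^{*}$.

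Now split into two cases. If $i\leq n^{*}$, then \equationref{eqn:uwd_single_construct} gives $\DD\cdot\ffext_{i}(\BID)\BID_{i}=\frac{1}{n}\cdot\frac{n+\DD}{n^{*}+\DD}\bigl(\BID_{i}(n^{*}+\DD)-S^{*}\bigr)$, and adding this to the identity above the $S^{*}$ terms cancel, so the left-hand side of the $\Err$-goodness inequality equals exactly
\begin{equation*}
\frac{1}{n}\cdot\frac{n+\DD}{n^{*}+\DD}\cdot\BID_{i}(n^{*}+\DD)=\frac{1}{n}\BID_{i}(n+\DD)\enspace,
\end{equation*}
i.e. the inequality holds with equality. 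If $i>n^{*}$, then $\ffext_{i}(\BID)=0$, so the inequality reduces to $\frac{n+\DD}{n(n^{*}+\DD)}S^{*}\geq\frac{1}{n}\BID_{i}(n+\DD)$, equivalently $\BID_{i}\leq\frac{S^{*}}{n^{*}+\DD}$, which is precisely the loser half of the threshold condition \equationref{eqn:uwd_single_construct_pick12}. This completes the proof.

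I do not anticipate a genuine obstacle: the candidate-winner case is an algebraic identity --- in fact $\Err$-goodness is \emph{tight} there, which is exactly the feature that later makes $\ffext$ optimal --- and the loser case is immediate from the definition of $n^{*}$. The only points that merit care are the clean justification of the symmetry reduction and the bookkeeping of the telescoping sum; the degenerate all-zero bid profile, if it needs separate mention, makes both sides of the inequality vanish.
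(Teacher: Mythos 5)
Your proof is correct and follows essentially the same route as the paper's: compute $\sum_{j}\ffext_{j}(\BID)\BID_{j}=\frac{1}{n}\cdot\frac{n+\DD}{n^{*}+\DD}S^{*}$ after sorting, then observe that the candidate-winner case is an exact identity and the loser case reduces to the threshold inequality in \equationref{eqn:uwd_single_construct_pick12}. The only cosmetic difference is that the paper notes the bound actually holds over the continuous cube $[0,\Bound]^{n}$, which your argument also gives verbatim.
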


\begin{proof}
As we already know that $\ffext$ is $1$-DM,  we establish the lemma by proving that the above inequality holds not only for the discrete cube $\{0,1,\dots,\Bound\}^n$, but actaually for the continuous cube $[0,\Bound]^n$.

Without loss of generality, assume $\zz_{1}\geq \zz_{2}\geq \cdots \geq \zz_{n}$.
Observe that:
\begin{align*}
  \sum_{i=1}^{n} \ffext_{i}(\zz) \zz_{i}
= \sum_{i=1}^{n^{*}} \ffext_{i}(\zz) \zz_{i}
&= \frac{1}{n} \cdot \frac{n+\DD}{n^{*}+\DD} \cdot \sum_{i=1}^{n^{*}} \frac{\zz_{i}(n^{*}+\DD) - \sum_{j=1}^{n^{*}} \zz_j}{\DD} \\
&= \frac{1}{n} \cdot \frac{n+\DD}{n^{*}+\DD} \cdot \left(\sum_{i=1}^{n^{*}} \zz_{i}\right)
\enspace.
\end{align*}
Then, for each looser $k$ (i.e., each player $k>n^{*}$,  we have
\begin{equation*}
     \sum_{j=1}^{n} \ffext_j(\zz)\zz_j + \DD \cdot \ffext_{k}(\zz) \zz_{k}
=    \sum_{i=1}^{n} \ffext_{i}(\zz) \zz_{i}
=    \frac{n+\DD}{n} \cdot \frac{\sum_{i=1}^{n^{*}} \zz_{i}}{n^{*}+\DD}
\geq \frac{n+\DD}{n} \cdot \zz_{k}
\end{equation*}
where the last inequality is due to  \equationref{eqn:uwd_single_construct_pick12}.

At the same time, for each winner $i$ (i.e., each player $i \leq n^{*}$), we have
\begin{align*}
  \sum_{j=1}^{n} \ffext_j(\zz)\zz_j + \DD \cdot \ffext_{i}(\zz) \zz_{i}
&= \frac{1}{n} \cdot \frac{n+\DD}{n^{*}+\DD} \cdot  \left(\sum_{i=1}^{n^{*}} \zz_{i}\right) + \DD \cdot \ffext_{i}(\zz) \zz_{i} \\
&= \frac{1}{n} \cdot \frac{n+\DD}{n^{*}+\DD} \zz_{i}(n^{*}+\DD)
= \frac{1}{n} \cdot \zz_{i} (n+\DD)
\end{align*}
again satisfying \equationref{eqn:uwd_single_sw_characterization}.
\end{proof}

\subsection{Proof of Theorem 4}

Since $\MOPT=\Mech_{\ffext}$, Theorem 4 immediately follows from  the fact that our allocation function $\ffext$ is $\Err$-good and the following

\begin{lemma}
If $\AlloFunc$ is $\Err$-good, then for the mechanism $\Mech_{\AlloFunc}=(S,F)$ we have: for all
contexts $(n,\Bound,\Err,\Know,\TV)$ and all strategy profile $\BID \in \UWD(\Know)$,
$$
     \EXP\Big[\SW\big(\TV,F(\BID)\big)\Big]
\geq \left(\frac{(1-\Err)^2 + \frac{4\Err}{n}}{(1+\Err)^2}\right) \MSW(\TV)
\enspace.
$$
\end{lemma}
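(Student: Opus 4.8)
The plan is to assemble three ingredients: the Distinguishable Monotonicity Lemma (which, since an $\Err$-good allocation function is by definition $1$-DM, pins down the support of every undominated strategy), the $\Err$-approximateness of $\Know$ (which relates each bid to the corresponding true valuation up to the factor $\beta\defeq\frac{1-\Err}{1+\Err}$), and the defining inequality of $\Err$-goodness applied at the player with the highest true valuation.

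First I would fix a context $(n,\Bound,\Err,\Know,\TV)$ and a strategy profile $\BID\in\UWD(\Know)$, and for each $i$ choose $x_{i}$ with $\Know_{i}\subseteq\Int{x_{i}}\subseteq[(1-\Err)x_{i},(1+\Err)x_{i}]$. Because $\AlloFunc$ is $1$-DM, \lemmaref{lemma:UWD-char} gives $\BID_{i}\in\UWD_{i}(\Know_{i})\subseteq\{\min\Know_{i},\dots,\max\Know_{i}\}$, hence $(1-\Err)x_{i}\le\BID_{i}\le(1+\Err)x_{i}$; and $\TV_{i}\in\Know_{i}$ gives $(1-\Err)x_{i}\le\TV_{i}\le(1+\Err)x_{i}$. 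Consequently $\TV_{j}\ge\beta\,\BID_{j}$ for every player $j$, and, writing $i^{*}$ for a player with $\TV_{i^{*}}=\MSW(\TV)$, also $\BID_{i^{*}}\ge\beta\,\TV_{i^{*}}$. Since $\Mech_{\AlloFunc}$ assigns the good to each $j$ with probability $\AlloFunc_{j}(\BID)$ (and leaves it unassigned with the remaining probability), $\EXP[\SW(\TV,\OutFunc(\BID))]=\sum_{j\in\Players}\AlloFunc_{j}(\BID)\TV_{j}$.

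Next I would introduce $p\defeq\AlloFunc_{i^{*}}(\BID)$ and the threshold $p^{*}\defeq\frac{n+\DD}{n(\DD+1)}$, and record the two algebraic identities that make everything collapse: since $\DD+1=\big(\frac{1+\Err}{1-\Err}\big)^{2}=\beta^{-2}$, one has $\beta^{2}(\DD+1)=1$ and $p^{*}=\beta^{2}\frac{n+\DD}{n}=\frac{(1-\Err)^{2}+4\Err/n}{(1+\Err)^{2}}$, which is precisely the target ratio. Then I would split into two cases. If $p\ge p^{*}$ we are done immediately: $\EXP[\SW(\TV,\OutFunc(\BID))]\ge p\,\TV_{i^{*}}\ge p^{*}\MSW(\TV)$. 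If $p<p^{*}$, I would bound the non-$i^{*}$ terms by $\sum_{j\ne i^{*}}\AlloFunc_{j}(\BID)\TV_{j}\ge\beta\sum_{j\ne i^{*}}\AlloFunc_{j}(\BID)\BID_{j}=\beta\big(\sum_{j}\AlloFunc_{j}(\BID)\BID_{j}-p\,\BID_{i^{*}}\big)$, then apply $\Err$-goodness (\definitionref{def:delta-good}) at $i=i^{*}$ to get $\sum_{j}\AlloFunc_{j}(\BID)\BID_{j}-p\,\BID_{i^{*}}\ge\BID_{i^{*}}\big(\frac{n+\DD}{n}-(\DD+1)p\big)$; the parenthesized factor is positive exactly because $p<p^{*}$, so I may further replace $\BID_{i^{*}}$ by the smaller $\beta\,\TV_{i^{*}}$, yielding $\sum_{j\ne i^{*}}\AlloFunc_{j}(\BID)\TV_{j}\ge\beta^{2}\TV_{i^{*}}\big(\frac{n+\DD}{n}-(\DD+1)p\big)$. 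Adding $p\,\TV_{i^{*}}$ and using $\beta^{2}(\DD+1)=1$, the terms linear in $p$ cancel and the bound becomes $\EXP[\SW(\TV,\OutFunc(\BID))]\ge\beta^{2}\frac{n+\DD}{n}\,\TV_{i^{*}}=p^{*}\MSW(\TV)$, as desired.

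The step I expect to require the most care is not any single estimate but getting the case split right — in particular noticing that the $\Err$-goodness bound is only informative (its right-hand side nonnegative, and the substitution $\BID_{i^{*}}\ge\beta\TV_{i^{*}}$ sign-valid) precisely in the regime $p<p^{*}$ complementary to the trivial case — together with verifying the identities $\beta^{2}(\DD+1)=1$ and $\beta^{2}\frac{n+\DD}{n}=\frac{(1-\Err)^{2}+4\Err/n}{(1+\Err)^{2}}$, which are what make the value of $p$ drop out and turn the bound into exactly the claimed fraction. Everything else is a routine concatenation of the Distinguishable Monotonicity Lemma, the $\Err$-approximate inclusions, and \definitionref{def:delta-good}.
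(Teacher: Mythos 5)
Your proof is correct and follows essentially the same route as the paper's: the Distinguishable Monotonicity Lemma plus the $\Err$-approximate inclusions give the sandwich $\frac{1-\Err}{1+\Err}\BID_{i}\leq\TV_{i}\leq\frac{1+\Err}{1-\Err}\BID_{i}$, and the bound then reduces to the $\Err$-goodness inequality. The only cosmetic difference is that you split on $\AlloFunc_{i^{*}}(\BID)\gtrless p^{*}$ and apply $\Err$-goodness only at $i^{*}$ (handling the high-probability case trivially), whereas the paper proves the pointwise bound for every $i$ by minimizing over the two extreme choices of $\TV_{i}$, both of which reduce to the same $\Err$-goodness inequality.
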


\begin{proof}
Arbitrarily fix a context $(n,\Bound,\Err,\Know,\TV)$ and $\BID \in \UWD(\Know)$. Then, because
in any allocation the social welfare coincides with the true valuation of some player,
 to prove our lemma it  suffices to prove that
\begin{equation}
\forall i\in \Players, \quad    \sum_{j=1}^{n} \TV_{j} \AlloFunc_{j} (\BID)
\geq \left(\frac{(1-\Err)^2 + \frac{4\Err}{n}}{(1+\Err)^2} \right) \TV_{i} \enspace.
\label{eqn:uwd_single_sw_characterization_restate}
\end{equation}

For every  $i \in \Players$, let $x_{i} \in \real$ be such that $\Know_{i} \subseteq \Int{x_{i}}$, and let $\Int{x}=\Int{x_{1}}\times\cdots\times\Int{x_n}$.
Then,  the Distinguishable Monotonicity Lemma respectively implies $(1-\Err)x_{i} \leq \min \Know_{i} \leq \BID_{i} \leq \max \Know_{i} \leq (1+\Err)x_{i}$; equivalently,
$$
\frac{\BID_i}{1+\Err} \leq x_{i} \leq  \frac{\BID_i}{1-\Err} \enspace .
$$
Also, $\TV \in \Know$ implies $(1-\Err)x_{i} \leq \TV_{i} \leq (1+\Err)x_{i}$.
Combining the last two chains of inequalities yields
\begin{equation}
\label{eqn:uwd_single_sw_characterization_range}
\frac{1-\Err}{1+\Err}\BID_{i} \leq \TV_{i} \leq \frac{1+\Err}{1-\Err}\BID_{i} \enspace.
\end{equation}

Let us now argue that \equationref{eqn:uwd_single_sw_characterization_restate} holds by
arbitrarily fixing $\BID$ and $i$ and showing that it is impossible to construct a ``bad'' $\TV$  so as to violate \equationref{eqn:uwd_single_sw_characterization_restate}.

In trying to construct a ``bad'' $\TV$, it suffices to choose $\TV_{j}$ (for $j\neq i$) to be as small as possible, since $\TV_j$ only appears on the left-hand side with a positive coefficient.
For $\TV_{i}$, however, we may want to choose it as large as possible if $\AlloFunc_{i}(\BID) \geq \big(\frac{(1-\Err)^2 + \frac{4\Err}{n}}{(1+\Err)^2} \big)$, or as small as possible otherwise. So there are two extreme $\TV$'s.

Considering these extreme choices, we conclude that no $\TV$ contradicts \equationref{eqn:uwd_single_sw_characterization_restate} if:

\begin{equation*}
\sum_{j=1}^n \big(\frac{1-\Err}{1+\Err}\big)\BID_{j} \AlloFunc_{j} (\BID) \geq \left(\frac{(1-\Err)^2 + \frac{4\Err}{n}}{(1+\Err)^2} \right) \big(\frac{1-\Err}{1+\Err}\big) \BID_{i} \enspace,   \text{ and}
\end{equation*}

\begin{equation*}
\sum_{j=1}^n \big(\frac{1-\Err}{1+\Err}\big)\BID_{j} \AlloFunc_{j} (\BID) + \big(\frac{1+\Err}{1-\Err}-\frac{1-\Err}{1+\Err}\big) \BID_{i} \AlloFunc_{i}(\BID) \geq \left(\frac{(1-\Err)^2 + \frac{4\Err}{n}}{(1+\Err)^2} \right) \big(\frac{1+\Err}{1-\Err}\big) \BID_{i} \enspace.
\end{equation*}

Simplifying the above equations, \equationref{eqn:uwd_single_sw_characterization_restate} holds if both the following inequalities hold:
\begin{equation}
\sum_{j=1}^n \BID_{j} \AlloFunc_{j} (\BID) \geq
 \frac{n+\DD}{n} \cdot \frac{1}{\DD+1} \cdot \BID_{i}
 \enspace, \label{eqn:uwd_single_sw_characterization_extreme1}
\end{equation}

\begin{equation}
\sum_{j=1}^n \BID_{j} \AlloFunc_{j} (\BID) + \DD\hspace{-0.5mm} \cdot \BID_{i} \AlloFunc_{i}(\BID) \geq  \frac{n+\DD}{n} \BID_{i} \hspace{-0.5mm} \enspace.
\label{eqn:uwd_single_sw_characterization_extreme2}
\end{equation}
Note that \equationref{eqn:uwd_single_sw_characterization_extreme2} holds because it is implied by the hypothesis that $f$ is $\Err$-good; note also that \equationref{eqn:uwd_single_sw_characterization_extreme1} holds because it is implied by \equationref{eqn:uwd_single_sw_characterization_extreme2}. Indeed, since $ \frac{1}{\DD+1}=\big(\frac{1-\Err}{1+\Err}\big)^2<1$ for all $\Err \in (0,1)$,
\begin{equation*}
     \sum_{j=1}^n \BID_{j} \AlloFunc_{j} (\BID)
\geq \frac{1}{\DD+1} \left(\sum_{j=1}^n \BID_{j} \AlloFunc_{j} (\BID) + \DD \BID_{i} \AlloFunc_{i}(\BID) \right)
\geq \frac{1}{\DD+1}\frac{n+\DD}{n} \BID_{i}
\enspace.
\end{equation*}
Thus both \equationref{eqn:uwd_single_sw_characterization_restate} and both our lemma and Theorem 4 hold.
\end{proof}

\subsection{The Computational Efficiency of $\MOPT$}

Finally, we wish to clarify  that, although $\MOPT=M_{\ffext}=(S,F)$ is not as simple as the second-price mechanism, it can still be efficiently implemented.
%
%
That is, both the \eWPF{} $\MechEA = \ffext$
and the expected price function $\MechEP$ are efficiently computable over $\{0,1,\dots,\Bound\}^{n}$.

The computational efficiency of  $\MechEA$ is apparent once one realizes that the number of candidate winners, $n^*$, can be determined in linear time.

The computational efficiency of $\MechEP$ requires a bit of an argument. Without loss of generality, let us show how to compute the expected price for player $n$. Recall that, for a bid profile $\BID$,
\begin{equation*}
\MechEP_n(\BID_{-n}\sqcup \BID_{n}) = \ffext_{n}(\BID_{-n}\sqcup \BID_{n})\cdot \BID_{n} - \int_{0}^{\BID_{n}} \ffext_{n}(\BID_{-n} \sqcup \zz) \,d\zz \enspace.
\end{equation*}

When $\BID_{-n}$ is fixed, $\ffext_{n}$ is a function piece-wisely defined according to $\BID_{n}$, since different values of $\BID_{n}$ may result in different numbers of winners $n^*$. Assume without loss of generality that $\BID_{1}\geq \BID_{2}\geq \cdots \geq \BID_{n-1}$, and let $n'$ be the number of winners when player $n$ is absent.

When $\BID_{n} \leq \frac{\sum_{j=1}^{n'} \BID_j}{n'+\DD}$, the proof of the monotonicity of $\ffext$ implies that $\ffext_{n}=0$, so that integral below this point is zero.

When $\BID_{n} > \frac{\sum_{j=1}^{n'} \BID_j}{n'+\DD}$, one can again see from the proof of the monotonicity of $\ffext$ that $n^{*}$ is non-increasing as a function of $\BID_{n}$. Therefore, $\ffext_{n}$ contains at most $n$ different pieces and, for each piece with $n^{*}$ fixed, $\ffext_{n}(\BID_{-n}\sqcup \BID_{n}) = a + b/\BID_{n}$ is a function that is symbolically integrable. Therefore, the only question is how to calculate the pieces for $\ffext_{n}$.

Conceptually, one  starts from $\BID_{n} = \frac{\sum_{j=1}^{n'} \BID_j}{n'+\DD}$ and ``moves $\BID_{n}$ upwards", recording the points at which \equationref{eqn:uwd_single_construct_pick12} is violated, because these are the ``borders of the continuous pieces" of $\ffext_{n}$. Practically, this seemingly infinite procedure  may be efficiently carried out by a {\em line sweep method}.


\section{Conclusions}

Mechanism design is undoubtedly a fascinating field. One can only marvel at the possibility that an ignorant social planner can leverage the knowledge and the rationality of the players in order to obtain the outcomes he desires. But if we want to transform this beautiful theory into strong guarantees in the real world it is important to minimize its underlying assumptions.

To us, the assumption that each player knows ``on the nose''  his own valuation appears to be too idealized in many an environment. Even the assumption that each player knows a probability distribution from which his own true valuation has been drawn is  very strong.
To be safe, we should budget for the possibility of  Knightian players.

In any field, as we progress from idealized to more and more realistic models, we should expect to face additional complexities. Knightian mechanism design will be no exception. Nonetheless, we should remain optimistic. At least for single-good auctions, Knightian mechanism design is workable.

\section*{Acknowledgements}

We are grateful to Alan Deckelbaum for pointing out simplifications to our proofs.

\clearpage
\appendix

\section{Performance Diagrams}
\label{sec:diagrams}

\begin{figure}[h!]
\centering
\subfloat[][With $n=2$ players, the second-price mechanism performs \emph{worse} than randomly assigning the good for $\Err > 0.18$.]{\label{fig:singlegood1}\includegraphics[width=0.45\textwidth]{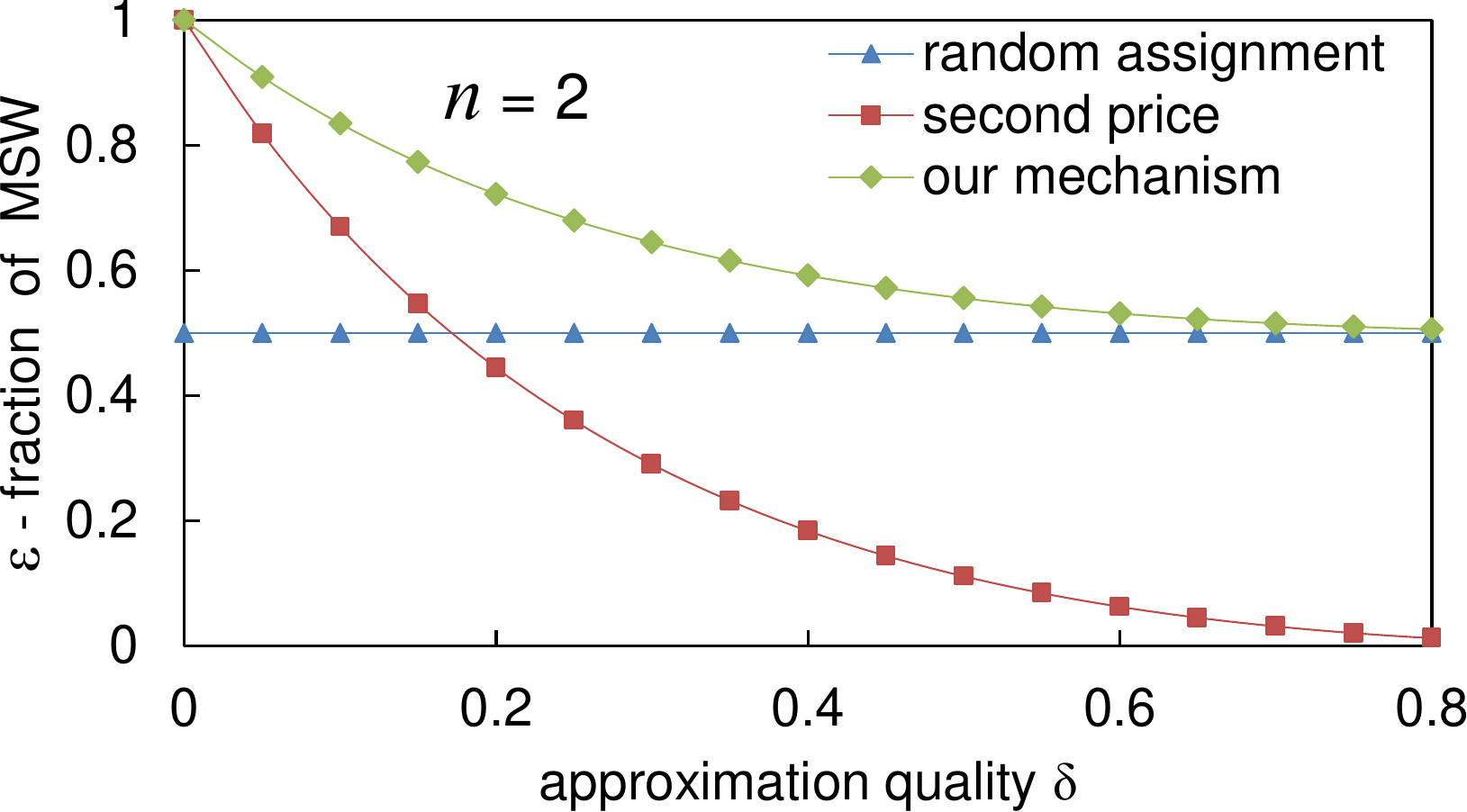}}
\qquad
\subfloat[][With $n=4$ players, the second-price mechanism performs \emph{worse} than randomly assigning the good for $\Err > 0.34$.]{\label{fig:singlegood2}\includegraphics[width=0.45\textwidth]{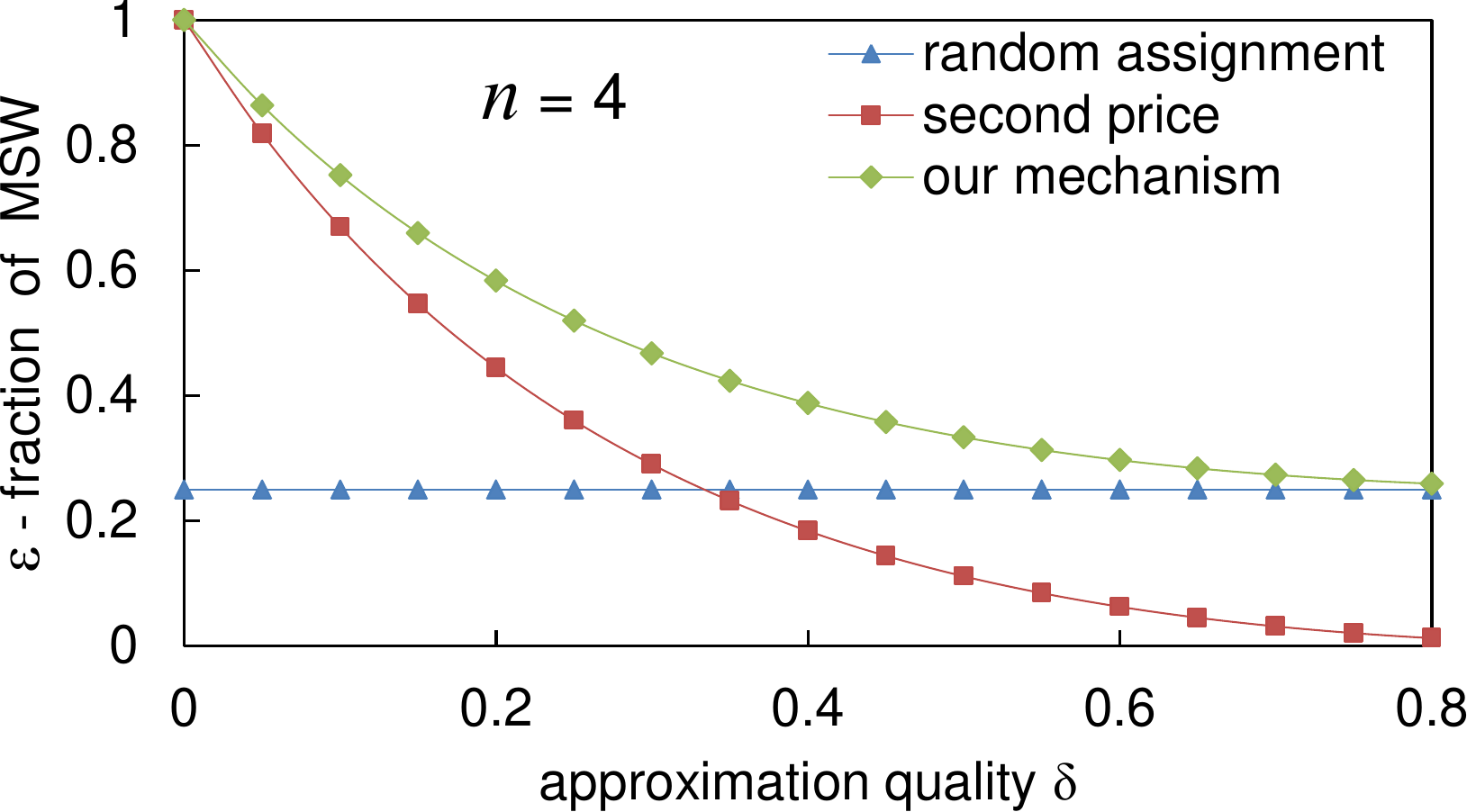}}
\qquad
\subfloat[][With $\Err=0.15$, the second-price mechanism always performs better than randomly assigning the good.]{\label{fig:singlegood3}\includegraphics[width=0.45\textwidth]{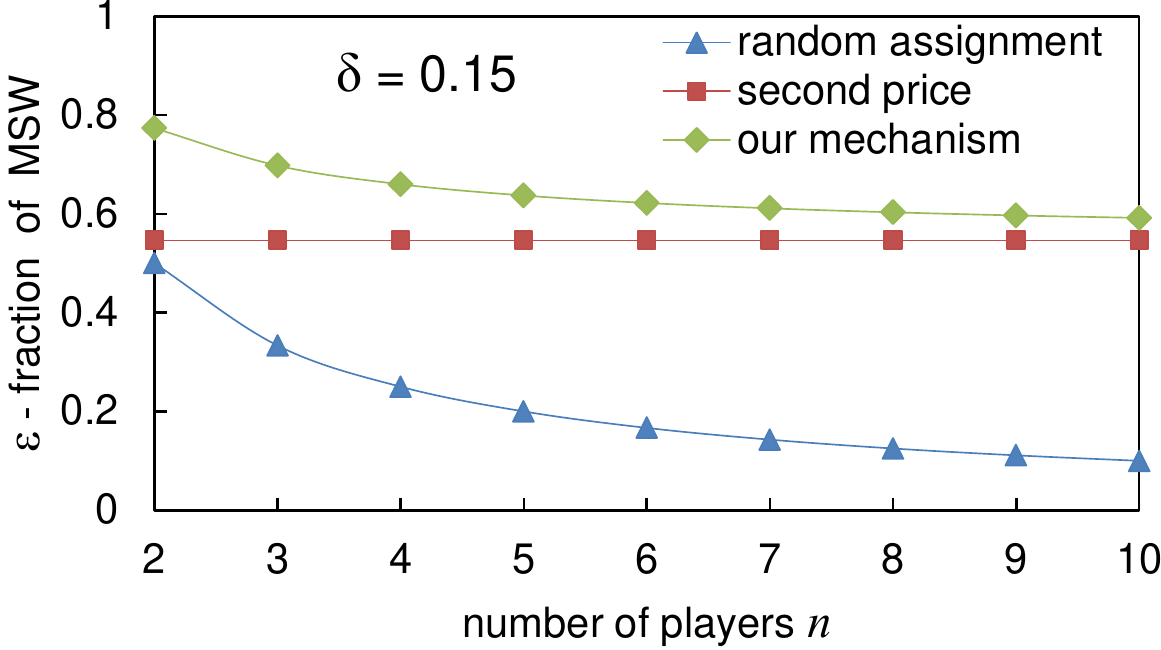}}
\qquad
\subfloat[][With $\Err=0.3$, the second-price mechanism performs \emph{worse} than randomly assigning the good for $n=2,3$.]{\label{fig:singlegood4}\includegraphics[width=0.45\textwidth]{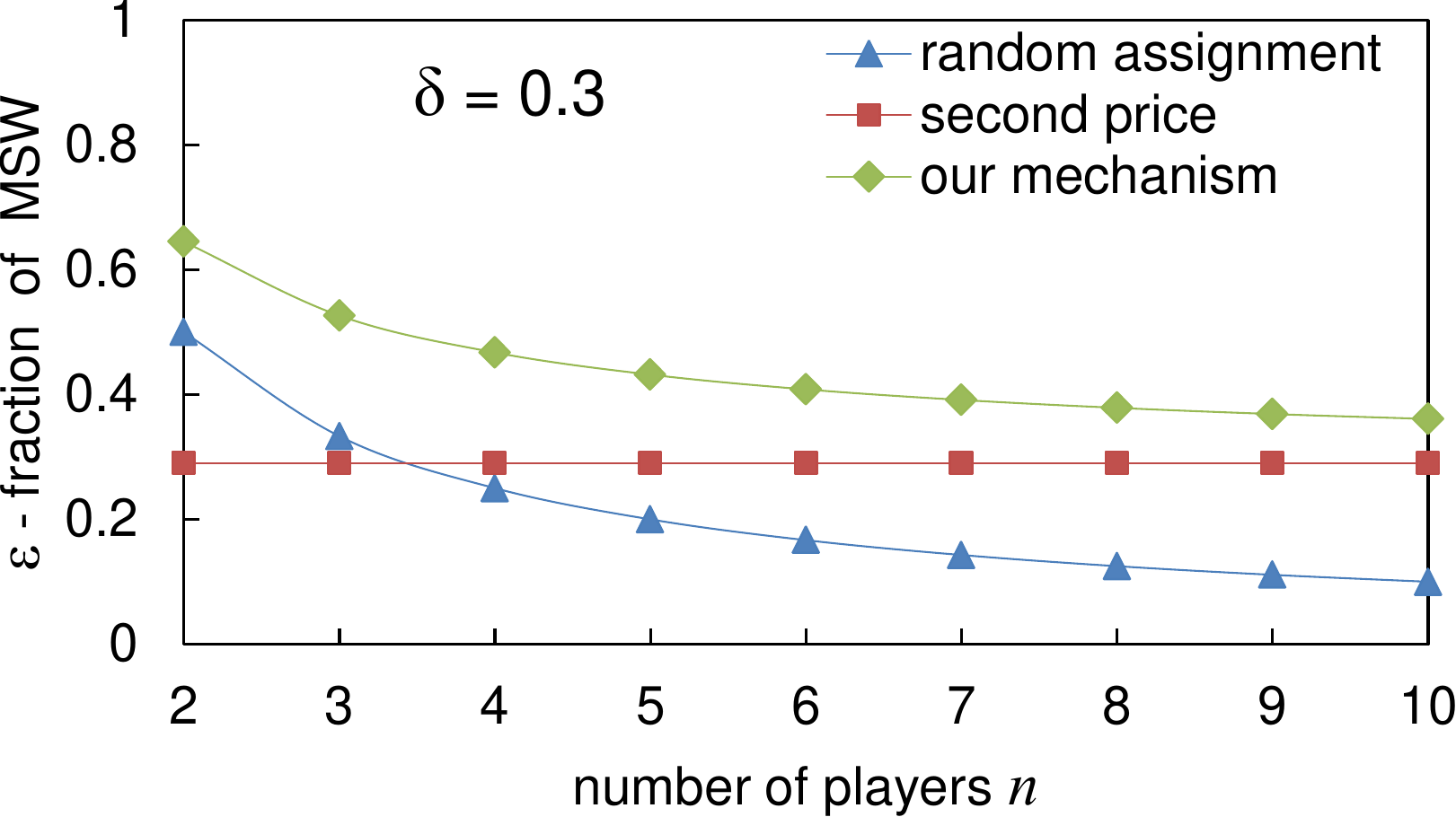}}
\caption{The social welfare guarantees of randomly assigning the good ($\veps = {\scriptscriptstyle\frac{1}{n}}$), the second-price mechanism ($\veps = {\scriptscriptstyle \frac{(1-\Err)^2}{(1+\Err)^2}}$),
and our optimal mechanism ($\veps ={\scriptscriptstyle \frac{(1-\Err)^2 + \frac{4\Err}{n}}{(1+\Err)^2}}$).
In (\ref{fig:singlegood1}) and (\ref{fig:singlegood2}) we compare $\veps$ versus $\Err$, and in (\ref{fig:singlegood3}) and (\ref{fig:singlegood4}) we compare $\veps$ versus $n$. The green data, our mechanism, is always better (at times significantly) than the other two mechanisms.}
\label{fig:diagrams}
\end{figure}

\newpage

\bibliographystyle{plainnat}
\bibliography{md-bib}

\end{document}

DO NOT ERASE


PROOF WITH THE UNDOMINATED INTERSECTION LEMMA WITH PRICES TOO

PROOF OF UNDOMINATED STRATEGY LEMMA WITH BOUNDED COIN TOSSES

